\newcommand\treesepvalue{0.6}
\newcommand\MTu[3][\treesepvalue]{\pstree[thistreesep=#1]{\TR{\,$#2$\,}}{#3}}
\newcommand\MTp[2][\treesepvalue]{\pstree[thistreesep=#1]{\Tp}{#2}}
\newtheorem{theorem}{Theorem}
\newtheorem{proposition}[theorem]{Proposition}
\newtheorem{corollary}[theorem]{Corollary}
{\theorembodyfont{\rmfamily}%
   \newtheorem{example}{\sc Example}

}
\numberwithin{theorem}{subsection}
\numberwithin{equation}{section}
\numberwithin{figure}{section}
\numberwithin{table}{section}
\newcommand\A{\mathcal{A}}
\newcommand\GD{\mathcal{D}}
\newcommand\GR{\mathcal{R}}
\newcommand\GL{\mathcal{L}}
\newcommand\GH{\mathcal{H}}
\newcommand{\Q}{\mathbb{Q}}
\newcommand{\Z}{\mathbb{Z}}
\newenvironment{proof}{\noindent\textit{Proof}}
                       {\QED\vskip\theorempostskipamount}
\def\petitcarre{\vrule height4pt width 4pt depth0pt}
\def\QED{\relax\ifmmode\eqno{\hbox{\petitcarre}}\else{%
  \unskip\nobreak\hfil\penalty50\hskip2em\hbox{}\nobreak\hfil
  \petitcarre
  \parfillskip=0pt \finalhyphendemerits=0\par\smallskip}
  \fi}
\newenvironment{proofof}[1]{\noindent\textit{Proof
    \protect{#1}.}}
                       {\QED\vskip\theorempostskipamount}
\def\petitcarre{\vrule height4pt width 4pt depth0pt}
\def\QED{\relax\ifmmode\eqno{\hbox{\petitcarre}}\else{%
  \unskip\nobreak\hfil\penalty50\hskip2em\hbox{}\nobreak\hfil
  \petitcarre
  \parfillskip=0pt \finalhyphendemerits=0\par\smallskip}
  \fi}
\newcommand{\edge}[1]{\stackrel{#1}{\rightarrow}}
\def\u(#1){\underline{#1}\:}
\DeclareMathOperator{\Card}{Card}
\DeclareMathOperator{\Dom}{Dom}
\DeclareMathOperator{\End}{End}
\title{Birecurrent sets}
\author{Francesco Dolce$^1$, Dominique Perrin$^2$, \\
Antonio Restivo$^3$, Christophe Reutenauer$^1$,
Giuseppina Rindone$^2$\\
$^1$ Universit\'e du Qu\'ebec \`a Montr\'eal, LaCIM,
$^2$ Universit\'e Paris Est, LIGM,\\ $^3$ Universit\`a di Palermo}
\begin{document}
\maketitle
\begin{abstract}
A  set is called recurrent if its minimal automaton is strongly connected
and birecurrent if it is recurrent as well as
 its reversal. 
We prove a series of results concerning birecurrent sets. 
It is already known that any birecurrent set is completely
reducible (that is, such that
the minimal representation of its characteristic series is completely
reducible). The main result of this paper
 characterizes  completely reducible sets  as linear combinations
of birecurrent sets
\end{abstract}
\tableofcontents
\section{Introduction}
A recurrent set of words is such that its minimal automaton is strongly
connected. A birecurrent set is a recurrent set such that its
reversal is also recurrent.  The recurrent (resp. birecurrent) sets contains the submonoids generated by prefix (resp. bifix) codes.
The automata recognizing birecurrent are a generalization
of well known families of automata such as group automata.
There are more general sets and the general form of  birecurrent sets
does seem easy to describe.

Our interest in birecurrent sets is motivated by their relation
with completely reducible sets put in evidence 
in \cite{Perrin2013}. 
Indeed, it is shown in \cite{Perrin2013} that the syntactic representation
of the characteristic series of
a birecurrent set is completely reducible. This generalizes
the result of Reutenauer \cite{Reutenauer1981} which proves that 
the complete reducibility holds
for the submonoid generated by a bifix code.

Our main result 
 characterizes completely reducible sets as linear
combinations of birecurrent sets (Theorem~\ref{theoremCR3}).
We also prove a number of other results concerning birecurrent sets,
and in particular birecurrent sets of finite type,
which are a generalization of the submonoids generated
by finite bifix codes.

The paper is organized as follows.

We first give in Section~\ref{sectionPreliminaries} a number of
definitions concerning words, automata and formal series.

We introduce recurrent and birecurrent sets 
in Section~\ref{sectionBirecurrent}.
In Section~\ref{sectionRev}
we prove  a result which characterizes 
 the minimal automata of birecurrent sets (Theorem~\ref{theoremRev}). This result extends
a property  proved in~\cite{Perrin2013}
and allows to construct directly birecurrent automata by choosing
an appropriate set of terminal states of a strongly connected deterministic automaton.

In Section~\ref{sectionFiniteType}, we define birecurrent sets of finite type.
A recurrent set $S$
is of the form $S=X^*P$ where $X$ is a prefix code
and $P$ a set of proper prefixes of $X$. Thus a birecurrent
set $S$ has the form $X^*P$, as above, as well as
$S=QY^*$ where $Y$ is a suffix code and $Q$ a set of proper
suffixes of $Y$. We say that $S$ is of finite type if $X$ and $Y$
are finite. Thus a birecurrent set of finite type is a generalization
of a submonoid generated by a finite bifix code
(which corresponds to the case where $P$ and $Q$
are reduced to the empty word). We prove a result allowing one to build birecurrent
sets of finite type (Theorem~\ref{theoremDP}).

In Section~\ref{sectionDegree}, we define the degree and the index of a dense
birecurrent set. We prove that the density of a dense birecurrent set
with respect to a positive Bernoulli distribution is the inverse
of its index (Theorem~\ref{theoremDensity}).

In Section~\ref{sectionIndecomposable}, we prove that if a recognizable
maximal prefix code is indecomposable, then either it is synchronized,
or it is the left root of a dense birecurrent set (Theorem~\ref{theoremDecomposition}). We relate this result with an old conjecture of Sch\"utzenberger.

In Section~\ref{sectionCompleteReducibility}, we come to the connection with complete reducibility.

We start in Section~\ref{sectionFormalSeries} with an introduction to
  the linear
representations of formal series. In Section~\ref{sectionComplelyReducible},
we prove a statement (Theorem~\ref{theoremCharactCompleteRed}) which 
characterizes completely reducible sets 
by a property of their syntactic monoid. 
We derive from this result several corollaries and, in particular, the main result of ~\cite{Perrin2013}
asserting that a birecurrent set is completely reducible.

We then present in Section~\ref{sectionCharacterizationCompletelyReducible}
our main result which characterizes completely reducible sets as linear
combinations of birecurrent sets (Theorem~\ref{theoremCR3}).

In  Section~\ref{sectionUnambiguous}, we consider  unambiguous automata. We characterize the
unambiguous automata recognizing recurrent sets
(Theorem~\ref{theoremCR}). We derive as a corollary a characterization of
unambiguous automata recognizing birecurrent sets (Corollary~\ref{corollaryCR}).
We also give examples of such automata with an iteration of the construction of Section~\ref{sectionConstruction}, using an argument originally developped in~\cite{Vincent1985}. 

\paragraph{Acknowledgements}
The authors wish to thank for their help several colleagues,
including Clelia De Felice, Andrew Ryzhikov and the anonymous referee.

\section{Preliminaries}\label{sectionPreliminaries}
We recall briefly some terminology about words, automata
and, in some more detail, formal series.
We refer to~\cite{BerstelPerrinReutenauer2009} 
or~\cite{RhodesSteinberg2009} for undefined terms.
\subsection{Words}
Let $A$ be a finite alphabet and let $A^*$ be the free
monoid over $A$. The elements of $A^*$ are called \emph{words}
and the subsets of $A^*$ \emph{formal languages}.
We denote by $\varepsilon$ the empty word.

The \emph{reversal} of a word $w=a_1\cdots a_n$ is the word
$\tilde{w}=a_n\cdots a_1$. By extension, the reversal of
$X\subset A^*$ is the set $\tilde{X}=\{\tilde{w}\mid w\in X\}$.

For $u\in A^*$ and $X\subset A^*$, we denote $u^{-1}X=\{v\in A^*\mid uv\in X\}$.
Symmetrically, we denote $Xv^{-1}=\{u\in A^*\mid uv\in X\}$.

A set $S\subset A^*$ is said to be \emph{dense} in $A^*$ if for
any word $v\in A^*$, there are words $u,w\in A^*$ such that $uvw\in S$.
A set which is not dense is said to be \emph{thin}.

A set $S\subset A^*$ is said to be \emph{right dense} in $A^*$ if for
any $v\in A^*$, there is $w\in A^*$ such that $vw\in S$.
\subsection{Automata}
We denote by $\A=(Q,I,T)$ an automaton
with $Q$ as set of states, $I$ as set of initial states and
$T$ as set of terminal states, given by a set of
edges which are triples $(p,a,q)\in Q\times A\times Q$.
The automaton is said to be finite if $Q$ is finite.

The set \emph{recognized} by the automaton $\A=(Q,I,T)$
 is the set of labels of paths from $I$ to $T$.
A set is \emph{recognizable} if it can be recognized by a finite automaton.

The automaton $\A=(Q,I,T)$ is \emph{deterministic}
if $I=\{i\}$ and for each $p\in Q$ and $a\in A$
there is at most one edge $(p,a,q)$. For $p\in Q$ and $a\in A$,
we denote by $p\cdot a$ the unique state $q$ such that
there is an edge from $p$ to $q$ labeled $a$.
Thus, the set recognized by $\A$ is the set of words $w$ such that
$i\cdot w\in T$.
We denote $\A=(Q,i,T)$  and $(Q,i,t)$ if $T=\{t\}$.

The \emph{minimal automaton} of a set $X\subset A^*$ is the deterministic
automaton having for states the nonempty sets $u^{-1}X$ for $u\in A^*$,
with $X$ as initial state and the family $\{u^{-1}X\mid u\in X\}$
as terminal states.

An automaton $\A=(Q,I,T)$ is \emph{trim} if for any $q\in Q$ there
 are words $u,v$ such that there is a path from $I$ to $q$ labeled $u$
and a path from $q$ to $T$ labeled $v$.

A deterministic automaton $\A=(Q,i,T)$ on the alphabet $A$
is \emph{complete} if for any
$q\in Q$ and any $a\in A$ one has $q\cdot a\ne\emptyset$.
A set $S\subset A^*$ is right dense if and only if its minimal
automaton is complete.

For a deterministic automaton $\A=(Q,i,T)$ and a word $w$, 
we denote by $\varphi_{\A}(w)$ the map $q\mapsto q\cdot w$ from $Q$
into $Q$. The monoid $M_\A=\varphi_\A(A^*)$ is the \emph{transition monoid}
of the automaton $\A$. For $m\in M_\A$, denote $pm=q$ if the
image of $p$ by $m$ is $q$, or equivalently if $m=\varphi_\A(w)$
and $p\cdot w=q$. Similarly, for $S\subset Q$,
we denote $mS=\{p\in Q\mid pm\in S\}$.

The \emph{rank} of a word $w$ is the rank of the map $\varphi_\A(w)$,
that is the cardinality of the set $\{p\cdot w\mid p\in Q\}$.
When $\A$ is a strongly connected finite automaton, the image by $\varphi_\A$
of the set of words of minimal nonzero rank is, together with $0$,
 the unique $0$-minimal
ideal $J$ of the monoid $\varphi_\A(A^*)$. It is the union
of all $0$-minimal right (resp. left) ideals. It is formed of $0$
and a regular $\GD$-class
(see~\cite[Corollary 1.12.10]{BerstelPerrinReutenauer2009}). 
Each $\GH$-class of $J$ which is a group is a transitive
permutation group on the common image of its elements
(see~\cite[Theorem 9.3.10]{BerstelPerrinReutenauer2009}).

Let $\A=(Q,i,T)$ be a deterministic automaton.  The \emph{domain} of a word
$w$, denoted $\Dom(w)$, is the set $w\cdot Q$, that is, the set
of $q\in Q$ such that $q\cdot w\ne\emptyset$. The \emph{kernel}
of a word $w$ is the partial equivalence
on $\Dom(w)$ defined by $p\equiv q$ if $p\cdot w=q\cdot w$.

The \emph{degree} of  a strongly connected deterministic automaton $\A$, denoted $d(\A)$, is the
minimal nonzero rank of all words. The automaton is
\emph{synchronized} if its degree is $1$.
\begin{example}\label{exampleAutomaton}
Let $\A$ be the automaton represented in Figure~\ref{figureExampleAutomaton}.
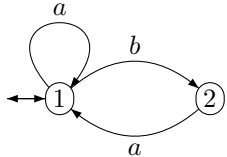
\begin{figure}[hbt]
\centering\gasset{Nadjust=wh}

\begin{picture}(15,20)(0,-5)

\node[Nmarks=if,fangle=180](1)(0,0){$1$}\node(2)(20,0){$2$}

\drawloop(1){$a$}\drawedge[curvedepth=5](1,2){$b$}\drawedge[curvedepth=5](2,1){$a$}
\end{picture}
\caption{A synchronized  automaton.}\label{figureExampleAutomaton}
\end{figure}
The word $b$ is synchronizing since it has rank $1$. The $0$-minimal
ideal of the monoid $\varphi_\A(A^*)$ is represented in Figure~\ref{figureMinimalIdeal0}.
\begin{figure}[hbt]
\begin{displaymath}
\def\rb{\hspace{2pt}\raisebox{0.8ex}{*}}\def\vh{\vphantom{\biggl(}}
    \begin{array}%
    {r|@{}l@{}c|@{}l@{}c|}%
    \multicolumn{1}{r}{}&\multicolumn{2}{c}{1}&\multicolumn{2}{c}{2}\\
    \cline{2-5}
    1,2&\vh\rb &a &\vh\rb  &ab\\
    \cline{2-5}
   1&\vh\rb &ba & & b \\
   \cline{2-5}

    \end{array}
\end{displaymath}
\caption{The $0$-minimal ideal of $\varphi_\A(A^*)$ in Example~\ref{exampleAutomaton}.}\label{figureMinimalIdeal0}
\end{figure}
We represent the $\GD$-class by an array in which 
the rows are the $\GR$-classes and the columns are the $\GL$-classes.
An $\GH$-class containing an idempotent is indicated by a $*$.
\end{example}
\subsection{Prefix and bifix codes}
A set $X\subset A^+$ is a \emph{prefix code} if no word in $X$
is a proper prefix of another word in $X$. The set $X$
is a \emph{bifix code} if $X$ and $\tilde{X}$ are prefix codes.

Let $X\subset A^+$ be a prefix code. The \emph{literal automaton}
of $X^*$ is the deterministic automaton $\A=(Q,\varepsilon,\varepsilon)$ 
where $Q$ is the set
of proper prefixes of $X$  and  for $q\in Q$
and $a\in A$, one has
\begin{displaymath}
q\cdot a=\begin{cases}
qa&\text{if $qa\in Q$}\\
\varepsilon&\text{if $qa\in X$}\\
\emptyset&\text{otherwise.}
\end{cases}
\end{displaymath}
It is a strongly connected deterministic automaton recognizing $X^*$.

The \emph{degree} $d(X)$ of a prefix code $X$ is the degree of the
minimal automaton of $X^*$. The prefix code $X$ is said to
be \emph{synchronized} if $d(X)=1$.
\begin{example}
The automaton of Example~\ref{exampleAutomaton} is both the minimal
and the literal automaton of $X^*$ where $X$ is the synchronized prefix code
$\{a,ba\}$.
\end{example}
\subsection{Formal series}
\def\<<A>>{\langle\langle A\rangle\rangle}
Let $K$ be a field. A \emph{formal series}
on the alphabet $A$
with coefficients in $K$ is a map $\sigma:A^*\rightarrow K$. We denote
by $K\<<A>>$ the algebra of these series.
For
$w\in A^*$, we denote by $(\sigma,w)$ the value of $\sigma$ on $w$.
We denote by $K\langle A\rangle$ the corresponding ring of polynomials,
which are the series with a finite number of nonzero coefficients.

We denote by  $\u(S)$ the characteristic series of a set
$S\subset A^*$.

Let $n\ge 0$ be an integer. Let $\lambda$ be a row $n$-vector, let
$\mu$ be a morphism from $A^*$ into the monoid of $n\times n$-matrices
and let $\gamma$ be a column $n$-vector, all with coefficients in $K$.
The triple $(\lambda,\mu,\gamma)$ is called a \emph{linear representation}.
It is said to \emph{recognize} a series $\sigma$ if for every $w\in A^*$,
one has 
\begin{equation}
(\sigma,w)=\lambda\mu(w)\gamma.\label{eqRepresentation}
\end{equation}
Equivalently, one can define a linear representation as a triple
$(\lambda,\mu,\gamma)$ formed, for some vector space $V$,
 of a vector $\lambda\in V$,
a morphism $\mu:A^*\rightarrow \End(V)$ and a linear form $\gamma$ on $V$
such that Equation~\eqref{eqRepresentation} holds
for every $w\in A^*$. The vector $\lambda$ is called the \emph{initial vector}.
The endomorphism $\mu(w)$ and the linear form
$\gamma$ act on the right of their argument and thus
the expression $\lambda\mu(w)\gamma$ in Equation~\eqref{eqRepresentation}
is read parenthesized from left to right as $(\lambda\mu(w))\gamma$. It can also
be parenthesized from right to left considering $\mu(w)$
as the endomorphism of the dual $V'$ of $V$ defined by the  formula
$\ell(\mu(w)c)=(\ell\mu(w))c$.

\section{Birecurrent sets}\label{sectionBirecurrent}
In this section, we define birecurrent sets and prove some elementary properties.
We prove a characterization (Theorem~\ref{theoremRev}) which is used in the next
section.
A recognizable set  is \emph{recurrent} if its minimal
automaton is strongly connected.

Clearly, a recognizable set
is recurrent if and only if it recognized by a strongly connected
deterministic automaton. Indeed, if $\A$ is a strongly connected deterministic
automaton recognizing $X$, then the minimal automaton of $X$
is also strongly connected.

As a simple and well known example, for any recognizable prefix code $X$, 
the submonoid
$X^*$ generated by $X$ is recurrent.

A recognizable set is \emph{birecurrent} if $X$ and its
reversal $\tilde{X}$ are both recurrent. 

Thus, for example,
the submonoid $X^*$ generated by a bifix code is a birecurrent set.

As a related example, let us consider a reversible automaton,
that is a deterministic  automaton such that every
letter defines an injective map on the set of states.
This class of automata has been considered frequently
(see for example~\cite{Angluin1982},~\cite{Reutenauer1979} or~\cite{Pin1992}).
The reversal of a reversible automaton $\A$ is still
deterministic and thus the set recognized by a strongly connected reversible
automaton is birecurrent.

The two examples of submonoids generated by bifix codes and of
sets recognized by strongly connected reversible automata partially
overlap. Indeed, if $\A=(Q,i,i)$ is a reversible automaton
with a unique terminal state equal to the initial state, the
set recognized by $\A$ is a submonoid generated by a bifix code.
However, there are many examples of submonoids generated by
bifix codes which are not recognizable by a reversible automaton
(see for instance the bifix code of Example~\ref{exampleDegree3bis}).

We will see that there are many other examples of birecurrent sets.

We begin by recalling some basic and well-known facts concerning the
reversal of an automaton.
\subsection{Reversal of an automaton}

Let $\A=(Q,I,T)$ be an automaton. The \emph{reversal}
of $\A$ is the automaton $\tilde{A}=(Q,T,I)$ obtained by reversing
the edges of $\A$ and exchanging $I$ and $T$. Clearly, the reversal of
$\A$ recognizes the reversal $\tilde{X}$ of the set $X$ recognized by $\A$.

For an automaton $\A=(Q,I,T)$, we denote by $\A^\delta$ the \emph{
determinization} of $\A$. Its states are the nonempty sets
$I\cdot w=\{q\in Q\mid i\edge{w}q\mbox{ for some $i\in I$}\}$.
The initial state is $I$ and the set of terminal states
is the family of states $U$ of $\A^\delta$ such that
$U\cap T\ne\emptyset$.

The \emph{deterministic reversal}
of $\A$ is the automaton $\tilde{\A}^\delta$ obtained by determinization
of the reversal of $\A$. We denote by $\tilde{Q}$ the set
of states of $\tilde{\A}^\delta$. Thus $\tilde{Q}$
is the family
of nonempty sets of the form
\begin{displaymath}
w\cdot T=\{q\in Q\mid q\edge{w}t\mbox{ for some $t\in T$}\}.
\end{displaymath}
The set of terminal states
of $\tilde{\A}^\delta$ is $\{U\in \tilde{Q}\mid I\cap U\ne\emptyset\}$. Clearly,
$\tilde{\A}^\delta$ recognizes the reversal of the set
recognized by $\A$.

The following statement is well known (see~\cite{Eilenberg1974} p. 48).
\begin{proposition}\label{propositionJojo}
If $\A$ is a trim deterministic automaton recognizing $X$, then $\tilde{\A}^\delta$
is the minimal automaton of $\tilde{X}$.
\end{proposition}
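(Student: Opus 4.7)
The plan is to construct an isomorphism $\Phi$ between $\tilde{\A}^\delta$ and the minimal automaton of $\tilde{X}$, which has states $\{u^{-1}\tilde{X} \mid u\in A^*, u^{-1}\tilde{X}\ne\emptyset\}$. Given a state $w\cdot T$ of $\tilde{\A}^\delta$, I would send it to $\tilde{w}^{-1}\tilde{X}$. The key identity underpinning this is
\begin{equation*}
u^{-1}\tilde{X} \;=\; \{v\in A^* \mid i\cdot\tilde{v}\in \tilde{u}\cdot T\},
\end{equation*}
obtained by unfolding: $uv\in\tilde X$ iff $\tilde v\tilde u\in X$ iff $i\cdot\tilde v$ reaches $T$ under $\tilde u$ in $\A$.

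First I would check that $\Phi$ is well-defined on states (if $w_1\cdot T=w_2\cdot T$ as subsets of $Q$, the above identity with $u=\tilde w_1=\tilde w_2$ forces $\tilde w_1^{-1}\tilde X=\tilde w_2^{-1}\tilde X$) and that it respects the automaton structure: the initial state $T=\varepsilon\cdot T$ maps to $\tilde{X}$; the transition $(w\cdot T)\cdot a=(aw)\cdot T$ in $\tilde{\A}^\delta$ corresponds to $(\tilde w^{-1}\tilde X)\cdot a=(\tilde w a)^{-1}\tilde X=\widetilde{aw}^{-1}\tilde X$; and the terminal condition $i\in w\cdot T$ is equivalent to $w\in X$, i.e.\ $\tilde w\in\tilde X$, i.e.\ $\varepsilon\in\tilde w^{-1}\tilde X$.

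The crux is injectivity of $\Phi$, and this is the only step that uses the trimness hypothesis. Suppose $\tilde w_1^{-1}\tilde X=\tilde w_2^{-1}\tilde X$, and pick $q\in w_1\cdot T$, so that $q\edge{w_1}t$ in $\A$ for some $t\in T$. Since $\A$ is trim, there exists $u\in A^*$ with $i\edge{u}q$, giving $uw_1\in X$ and hence $\tilde u\in\tilde w_1^{-1}\tilde X=\tilde w_2^{-1}\tilde X$. Unfolding again gives $uw_2\in X$, so $q=i\cdot u$ satisfies $q\edge{w_2}t'$ for some $t'\in T$, i.e.\ $q\in w_2\cdot T$; by symmetry the two states coincide. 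Since $\Phi$ is then a bijection respecting all the automaton structure and the minimal automaton of $\tilde X$ is characterised (up to isomorphism) as the unique trim deterministic automaton with no two equivalent states, this identifies $\tilde{\A}^\delta$ with it.

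I expect the main obstacle is purely notational: keeping track of where the reversal is applied (to the word labelling a state, to the word read on the transitions, to elements of $X$) so that the correspondence between $w\cdot T$ and $\tilde w^{-1}\tilde X$ lines up cleanly. Once the key identity relating $u^{-1}\tilde{X}$ to $\tilde{u}\cdot T$ is written down, the well-definedness and structural compatibilities are formal, and trimness enters at exactly one point to promote an abstract equality of derivatives into a pointwise equality of subsets of $Q$.
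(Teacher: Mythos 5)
Your proposal is correct and follows essentially the same route as the paper: both establish that $w\cdot T\mapsto\tilde{w}^{-1}\tilde{X}$ is a well-defined bijection compatible with the initial state, the transitions and the terminal states, with trimness invoked exactly to pass from equality of the left quotients of $\tilde X$ to equality of the subsets $w\cdot T$ of $Q$. You simply spell out the details (the key identity $u^{-1}\tilde{X}=\{v\mid i\cdot\tilde{v}\in\tilde{u}\cdot T\}$ and the injectivity argument) that the paper leaves to the reader with ``as one may easily verify''.
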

\begin{proof}
 Since $\A$ is trim, for any word $w$, one has
$w\cdot T\ne\emptyset$ if and only if $Xw^{-1}\ne\emptyset$.
Moreover, for any
$w,w'\in A^*$, one has
\begin{displaymath}
w\cdot T=w'\cdot T\Leftrightarrow Xw^{-1}=Xw'^{-1}
\end{displaymath}
as one may easily verify.
Since the nonempty sets $Xw^{-1}$ are the reversals of the states
of the minimal automaton of $\tilde{X}$, the map $w\cdot T\mapsto \tilde{w}^{-1}\tilde{X}$
is a bijection which identifies $\tilde{\A}^\delta$ with the minimal
automaton of $\tilde{X}$.
\end{proof}

It follows from Proposition~\ref{propositionJojo} that if $\A$
is the minimal automaton of a recognizable set $X$, then
$X$ is birecurrent if and only if $\A$ and  $\tilde{\A}^\delta$
are strongly connected.

\subsection{A characterization of birecurrent sets}\label{sectionRev}
The following statement characterizes birecurrent sets.
One direction appears as \cite[Proposition 5.7]{Perrin2013}.

We say that a set $S\subset Q$ is
\emph{saturated} by a word $w$ if $S$ is a union of classes
of the kernel of $w$. Note that $S$ is saturated by $w$
if and only if $S=w\cdot U$ for some $U\subset Q$.
 
\begin{theorem}\label{theoremRev}
Let $S$ be a recurrent set and let $\A=(Q,i,T)$
be its minimal automaton. Then $S$ is birecurrent
if and only if
$T$ is saturated by a word of minimal nonzero
rank.
\end{theorem}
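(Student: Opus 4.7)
The plan is to work inside the transition monoid $M_\A$ and its $0$-minimal ideal $J$, and to invoke Proposition~\ref{propositionJojo} to identify $\tilde{\A}^\delta$ with the minimal automaton of $\tilde S$, so that $S$ is birecurrent iff $\tilde{\A}^\delta$ is strongly connected.

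For the easy direction ($\Rightarrow$), I first produce a word $w$ of minimal nonzero rank $d$ with $w\cdot T\ne\emptyset$, obtained from any rank-$d$ word by extending it on the right along a path leading from some state of its image into $T$; this is possible because $\A$ is strongly connected, and composing on the right cannot raise the rank. Then $w\cdot T$ is a state of $\tilde{\A}^\delta$, so strong connectedness of $\tilde{\A}^\delta$ yields $u$ with $(uw)\cdot T=T$. This identity is exactly the saturation of $T$ by $uw$, with the choice $U=T$, and $uw$ still has rank $d$ because $(uw)\cdot T$ is nonempty.

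For the converse ($\Leftarrow$), the first step is a normalization: saturation of $T$ by a word depends only on the kernel, hence only on the $\GR$-class of $\varphi_\A(w)$ in the regular $\GD$-class $J\setminus\{0\}$, so I may replace $w$ by a word whose image under $\varphi_\A$ is an idempotent $e$ of $J$; unwinding the relation $T=e\cdot U$ using that $e$ is the identity on its image $I_e$ then forces $U=T\cap I_e$ and, more usefully, $T=e\cdot T$. The heart of the argument is to show that every state $V=v_0\cdot T$ of $\tilde{\A}^\delta$ reaches $T$. Replace $v_0$ by $v=v_0w$: since $w\cdot T=T$ the state does not change, $V=v\cdot T$, but now $\varphi_\A(v)=\varphi_\A(v_0)\,e$ lies in the principal left ideal $M_\A e=\GL_e\cup\{0\}$, because any nonzero product $m\,e$ has image contained in $I_e$ and therefore has rank exactly $d$, putting it in $\GL_e$; nonemptiness of $V$ guarantees $\varphi_\A(v)\in\GL_e$. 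By the same structural identity $M_\A\,\varphi_\A(v)=\GL_e\cup\{0\}$, so $e\in M_\A\,\varphi_\A(v)$ and there is $u\in A^*$ with $\varphi_\A(uv)=e$. Consequently $(uv)\cdot T=e\cdot T=T$, which in $\tilde{\A}^\delta$ means that reading $\tilde u$ from $V$ lands at $T$. Combined with the automatic accessibility of $\tilde{\A}^\delta$ from $T$, this gives strong connectedness and hence birecurrence.

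The main obstacle is the structural fact $M_\A e=\GL_e\cup\{0\}$ together with the existence of an idempotent in every $\GR$-class of $J\setminus\{0\}$; both rest on the completely $0$-simple structure of $J\setminus\{0\}$ recalled in Section~\ref{sectionPreliminaries}. Once that bookkeeping is laid out, the proof is essentially a one-line application of Green's relations to cancel $\varphi_\A(v)$ against $e$.
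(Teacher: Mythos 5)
Your proof is correct and follows essentially the same route as the paper's: necessity via strong connectedness of $\tilde{\A}^\delta$ applied to a minimal-rank word reaching $T$, and sufficiency via the completely $0$-simple structure of the $0$-minimal ideal, producing for each state $v\cdot T$ a left multiplier that returns to an idempotent fixing $T$. The only (harmless) reorganization is that you normalize the saturating word to an idempotent $e$ with $e\cdot T=T$ once at the outset, using an idempotent in its $\GR$-class, whereas the paper manufactures a $u$-dependent idempotent as a power of $\varphi(vu)$ and checks $eT=T$ from $e\varphi(x)=\varphi(x)$.
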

\begin{proof}
Let us first show that the condition is necessary. Let $v$
be a word of nonzero minimal rank such that $i\cdot v\in T$.
Since $v\cdot T\ne\emptyset$ and since $\tilde{\A}^\delta$
is strongly connected, there is a word $u$ such that
$(uv)\cdot T=T$. Thus $T$ is saturated by
 $uv$.

Conversely, set $\varphi=\varphi_\A$, $M=M_\A$ and let
$J$ be the $0$-minimal ideal of $M$.  Assume that $T$ is saturated by
 a word $x$ of  minimal nonzero rank. Then $\varphi(x)\in J$.
Let $u$ be a word such that
$u\cdot T\ne\emptyset$. We have to show that there
is a word $w$ such that $(wu)\cdot T=T$. This will prove that
$T$ is accessible from $u\cdot T$ in $\tilde{\A}$, which implies that
$\tilde{\A}^\delta$ is strongly connected. 

Since $u\cdot T\ne\emptyset$,
we have $\varphi(ux)\ne 0$. Since $\varphi(x)\in J$,
the left ideal generated by $\varphi(x)$ is $0$-minimal. Thus
 $\varphi(ux)$  is in 
the $\GL$-class of $x$. Let $v$ be a word such that $\varphi(vux)=\varphi(x)$. Since $\varphi((vu)^nx)=\varphi(x)$ for all $n\ge 1$, the idempotent $e$ which is a power of
$\varphi(vu)$ is  such that $e\varphi(x)=\varphi(x)$. Since $T$ is 
saturated by $x$, there is a set $U$ such that $T=x\cdot U$. Then
$eT=e(x\cdot U)=e\varphi(x)U=\varphi(x)U=T$. Finally, let $m$ be such
that $e=\varphi(vu)^m$ and let $w=(vu)^{m-1}v$.
Then $(wu)\cdot T=eT=T$ and the proof is complete.
\end{proof}

Note that if the minimal automaton 
$\A=(Q,i,T)$ of a birecurrent set $S$ is complete and synchronized, then
$\A$ is the trivial automaton with only one state
and $S=A^*$.
This can of course be proved directly, but it follows easily
from the characterization above.
Indeed, in this case, $T$
is saturated by a word of minimal rank
if and only if $T=Q$, which implies the conclusion $S=A^*$.

Theorem~\ref{theoremRev} gives a method
to find birecurrents sets starting with a stongly connected automaton
with an uspecified set of terminal states.
Provided one knows
a word $x$ of minimal nonzero rank, one may choose as set
of terminal states a set saturated by $x$ and finally minimize
the resulting automaton. 

However, this does not give a substantially more efficient
algorithm than the computation of the deterministic reversal.
Indeed, it is shown 
 in~\cite{Ryzhkov2017}
that deciding whether a partial automaton is
strongly connected as well as its deterministic reversal is
PSPACE-complete.

The following examples illustrate Theorem~\ref{theoremRev}. 
\begin{example}\label{exampleBirecurrent1}
The automaton represented in Figure~\ref{figureRev} on the left
and  its deterministic reversal on the right are both strongly connected. 
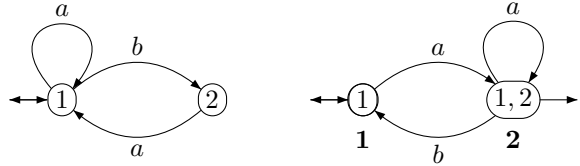
\begin{figure}[hbt]
\centering\gasset{Nadjust=wh}
\begin{picture}(80,20)
\put(0,0){
\begin{picture}(30,20)(0,-5)

\node[Nmarks=if,fangle=180](1)(0,0){$1$}\node(2)(20,0){$2$}

\drawloop(1){$a$}\drawedge[curvedepth=5](1,2){$b$}\drawedge[curvedepth=5](2,1){$a$}
\end{picture}
}
\put(40,0){
\begin{picture}(30,20)(0,-5)

\node[Nmarks=if,fangle=180](1)(0,0){$1$}\node[Nmarks=f,fangle=0](2)(20,0){$1,2$}
\node[ExtNL=y,NLangle=-90,NLdist=2](1)(0,0){$\mathbf{1}$}
\node[ExtNL=y,NLangle=-90,NLdist=2,Nframe=n](3)(20,0){$\mathbf{2}$}
\drawloop(2){$a$}\drawedge[curvedepth=5](1,2){$a$}\drawedge[curvedepth=5](2,1){$b$}
\end{picture}
}
\end{picture}
\caption{A deterministic automaton and its deterministic reversal.}\label{figureRev}
\end{figure}
In agreement with Theorem~\ref{theoremRev},
the set $T=\{1\}$ of terminal states of the first automaton 
is the preimage of $b$, which has rank $1$.

\end{example}

We give a second example with a  minimal rank larger than $1$.
\begin{example}\label{examplePalindrome}
Consider the complete deterministic automaton $\A$ given in Figure~\ref{figureWeakly2} on the left with
its deterministic reversal represented on the right. 
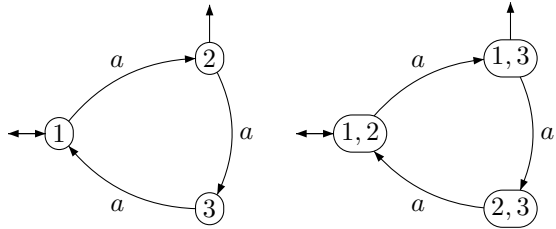
\begin{figure}[hbt]
\centering\gasset{Nadjust=wh}
\begin{picture}(60,25)
\put(0,0){
\begin{picture}(30,20)
\node[Nmarks=if,fangle=180](1)(0,20){$1$}\node[Nmarks=f](2)(20,20){$2$}
\node(3)(20,0){$3$}\node(4)(0,0){$4$}

\drawedge(1,2){$a$}\drawedge[curvedepth=3](1,3){$b$}
\drawedge(2,3){$a,b$}
\drawedge(3,4){$a$}\drawedge[curvedepth=3](3,1){$b$}
\drawedge(4,1){$a,b$}
\end{picture}
}
\put(40,0){
\begin{picture}(30,20)
\node[Nmarks=if,fangle=180](12)(0,20){$1,2$}\node(23)(0,0){$2,3$}
\node(34)(20,0){$3,4$}\node[Nmarks=f,fangle=0](14)(20,20){$1,4$}

\drawedge(23,12){$a,b$}\drawedge[curvedepth=3](12,34){$b$}
\drawedge(34,23){$a$}
\drawedge(14,34){$a,b$}\drawedge[curvedepth=3](34,12){$b$}
\drawedge(12,14){$a$}
\end{picture}
}
\end{picture}
\caption{The automata $\A$ and $\tilde{\A}^\delta$ in Example~\ref{examplePalindrome}.}\label{figureWeakly2}
\end{figure}
The $0$-minimal ideal of the monoid $\varphi_\A(A^*)$ is represented in Figure~\ref{figureMinimalIdeal2}. We can check that the set $T=\{1,2\}$ is a class
of the kernel of $b$, which is a word of minimal rank equal to $2$.
\begin{figure}[hbt]

\begin{displaymath}
\def\rb{\hspace{2pt}\raisebox{0.8ex}{*}}\def\vh{\vphantom{\biggl(}}
    \begin{array}%
    {r|@{}l@{}c|@{}l@{}c|}%
    \multicolumn{1}{r}{}&\multicolumn{2}{c}{1/3}&\multicolumn{2}{c}{2/4}\\
    \cline{2-5}
    1,2/3,4& \vh\rb &b &\vh\rb  &ba \\
    \cline{2-5}
    1,4/2,3&\vh\rb &ab &\vh\rb& aba \\
    \cline{2-5}
    \end{array}
\end{displaymath}
\caption{The $0$-minimal ideal of $\varphi_\A(A^*)$ in Example~\ref{examplePalindrome}.}\label{figureMinimalIdeal2}
\end{figure}
\end{example}
In the last example, we show a simple case in which $T$ is a union
of several classes of the kernel of a word of minimal nonzero rank.
\begin{example}\label{example3/2}
Let $\A$ be the deterministic automaton on $Q=\{1,2,3\}$ and with $A=\{a\}$ where
$a$ is the circular permutation $(123)$. Choosing $i=1$
and $T=\{1,2\}$, we obtain
for $\tilde{\A}^\delta$ the automaton on the set of states $\{\{1,2\},\{1,3\},\{2,3\}\}$
on which $a$ acts again as a circular permutation. Thus $\tilde{\A}^\delta$ is 
strongly connected (see Figure~\ref{figureGroupAutomaton}).
Actually, this holds for
any strongly connected group automaton.

\begin{figure}[hbt]
\centering\gasset{Nadjust=wh}
\begin{picture}(60,30)(0,-5)
\put(0,0){
\begin{picture}(30,20)
\node[Nmarks=if,fangle=180](1)(0,10){$1$}
\node[Nmarks=f,fangle=90](2)(20,20){$2$}
\node(3)(20,0){$3$}

\drawedge[curvedepth=3](1,2){$a$}
\drawedge[curvedepth=3](2,3){$a$}
\drawedge[curvedepth=3](3,1){$a$}

\end{picture}
}
\put(40,0){
\begin{picture}(30,20)
\node[Nmarks=if,fangle=180](1)(0,10){$1,2$}
\node[Nmarks=f,fangle=90](2)(20,20){$1,3$}
\node(3)(20,0){$2,3$}

\drawedge[curvedepth=3](1,2){$a$}
\drawedge[curvedepth=3](2,3){$a$}
\drawedge[curvedepth=3](3,1){$a$}

\end{picture}
}
\end{picture}
\caption{The automata $\A$ and $\tilde{\A}^\delta$ in Example~\ref{example3/2}.}\label{figureGroupAutomaton}
\end{figure}
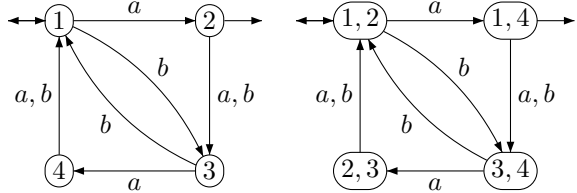
\end{example}
\subsection{Birecurrent sets of finite type}\label{sectionFiniteType}
In this section, we define birecurrent sets of finite type and 
in the next section, we prove
a statement which allows to build an infinite family of such sets
(Theorem~\ref{theoremDP}).
We first prove the following elementary statement concerning recurrent sets.
\begin{proposition}\label{propositionRecurrent}
A set $S\subset A^*$ is recurrent if and only if there is a recognizable
prefix code $X$
and a recognizable set $P$  of proper prefixes of $X$ such that
$S=X^*P$. 
\end{proposition}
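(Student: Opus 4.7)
The plan is to handle both implications through the standard decomposition of words along first returns to the initial state of the minimal automaton.

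For the implication ``$S$ recurrent $\Rightarrow$ $S=X^*P$'', let $\A=(Q,i,T)$ be the (strongly connected) minimal automaton of $S$. I would take $X$ to be the prefix code of first returns to~$i$, namely the set of nonempty words $w$ with $i\cdot w=i$ and $i\cdot w'\neq i$ for every nonempty proper prefix $w'$ of $w$, and set
\begin{displaymath}
P=\{w\in A^*\mid i\cdot w\in T\text{ and }w\text{ is a proper prefix of some element of }X\}.
\end{displaymath}
Both sets are recognizable via the classical construction that duplicates $i$ into an entry copy and an exit copy and forbids any internal visit to~$i$. The inclusion $X^*P\subseteq S$ is immediate. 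For the reverse inclusion, given $s\in S$, I would let $s_n$ be the longest prefix of $s$ with $i\cdot s_n=i$, write $s=s_np$, and factor $s_n=x_1\cdots x_n$ along its successive first returns so that each $x_j\in X$. The membership $p\in P$ follows by taking a shortest word $q$ with $i\cdot pq=i$, which exists by strong connectivity: minimality of $q$ combined with maximality of $s_n$ forces $pq\in X$, exhibiting $p$ as a proper prefix of~$X$.

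For the converse ``$S=X^*P\Rightarrow S$ recurrent'', the set $S$ is recognizable as a product of recognizable sets, and its minimal automaton $\A=(Q,i,T)$ is finite and trim. Every state has the form $i\cdot u$ for some $u$ with $u^{-1}S\neq\emptyset$, and the core of the argument is to produce, for each such $u$, a word $v$ with $uv\in X^*$: once this is done, $(uv)^{-1}S=S$, hence $(i\cdot u)\cdot v=i$, showing that every state reaches~$i$. To build $v$, I would pick any $w$ with $uw\in S$, factor $uw=x_1\cdots x_np$ with $x_j\in X$ and $p\in P$, and distinguish two cases according to where $u$ sits in this factorization: either $u$ ends strictly inside some factor $x_{k+1}$, in which case the suffix that completes $x_{k+1}$ serves as $v$; or $u$ extends beyond $x_1\cdots x_n$ into a prefix of $p$, in which case completing that tail to a word of $X$---possible because $p$ is a \emph{proper} prefix of $X$---serves as $v$.

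The crucial point in both arguments is precisely the properness of the prefixes involved, since this is what guarantees the existence of the needed extensions. In the forward direction it is a consequence of the maximality of $s_n$; in the backward direction it is built into the hypothesis that the elements of $P$ are proper prefixes of $X$. The recognizability of $X$ and $P$ in the forward direction, and the handling of the degenerate case in which the automaton has no transitions, are routine.
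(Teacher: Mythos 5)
Your proof is correct. The forward direction coincides with the paper's: the paper takes $X$ to be the prefix code generating the submonoid recognized by $(Q,i,i)$ --- exactly your code of first returns to $i$ --- and $P$ the proper prefixes of $X$ landing in $T$; where the paper simply asserts $S=X^*P$, you supply the verification via the longest prefix of $s$ returning to $i$, and your use of a shortest completion $q$ to certify that the leftover $p$ really is a \emph{proper} prefix of $X$ is the right way to fill that in. The converse is where you genuinely diverge. The paper builds the literal automaton of $X^*$ on the set of proper prefixes of $X$, takes $P$ as its set of terminal states, observes that this automaton is strongly connected and recognizes $S$, and then invokes the general fact (stated just after the definition of recurrence) that a recognizable set accepted by a strongly connected deterministic automaton is recurrent, the minimal automaton being a quotient. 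You instead work directly in the minimal automaton, showing that every nonempty residual $u^{-1}S$ returns to the initial state by completing $u$ to a word of $X^*$ --- possible precisely because the elements of $P$ are proper prefixes of $X$. Both arguments rest on the same combinatorial fact (every prefix of a word of $X^*P$ extends to a word of $X^*$); the paper's packaging via the literal automaton is shorter, while yours avoids the auxiliary (possibly infinite) automaton and the quotient step. One point worth making explicit in your version: the implication $uv\in X^*\Rightarrow(uv)^{-1}S=S$ requires not only the obvious inclusion $S\subseteq(uv)^{-1}S$ but also the reverse one, which again uses that $X$ is a prefix code and that $P$ consists of proper prefixes of $X$, so that any $X^*P$-factorization of $uvw$ must begin with the $X$-factorization of $uv$.
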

\begin{proof}
Assume first that $S$ is recurrent. Let $\A=(Q,i,T)$ be the minimal
automaton of $S$. Let $X$ be the prefix code generating the submonoid recognized
by the automaton $(Q,i,i)$ and let $P$ be the set of proper prefixes $p$ of $X$
such that $i\cdot p\in T$. Then $S=X^*P$.

Conversely, let $\A=(Q,\varepsilon,\varepsilon)$ be the literal automaton of $X^*$. Recall
from Section~\ref{sectionPreliminaries} that $Q$ is the set
of proper prefixes of $X$ and that $\A$ recognizes $X^*$.
Then $S$ is recognized by the automaton $(Q,\varepsilon,P)$ which is a strongly connected
automaton. Moreover, since $X$ and $P$ are recognizable, $S$
is recognizable. Thus $S$ is recurrent.
\end{proof}
A pair $(X,P)$ as above is called a \emph{decomposition}
of the recurrent set $S$.
 The prefix code $X$ obtained as above
using the minimal automaton of $S$ is called the \emph{left root}
of $S$. 

Note that a recurrent set has in general several decompositions.
However, the left root $X$ of $S$ is such that 
for any such pair $(X',P')$, we have $X'\subset X^*$,
 that is the submonoid $X^*$ is
maximal for this property. Indeed, let $(Q',\varepsilon,\varepsilon)$
be the literal automaton of $X'^*$. Then $\A'=(Q',\varepsilon,P')$ recognizes
$S$ and thus there is a reduction from $\A'$ onto the minimal automaton
$(Q,i,P)$ of $S$ which sends $\varepsilon$ to $i$. The image by this reduction of a path in $\A'$
from $\varepsilon$ to $\varepsilon$ is a path in $\A$
from $i$ to $i$ and thus $X'\subset X^*$.

\begin{proposition}\label{propositionRoot}
The left root of a recurrent set $S$ is finite if and only if $S$ has a decomposition
$(X,P)$ with $X$ finite.
\end{proposition}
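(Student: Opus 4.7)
One direction is immediate: if the left root $Y$ is finite, taking $X = Y$ and $P$ as in the proof of Proposition~\ref{propositionRecurrent} produces a decomposition with $X$ finite.

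For the converse, assume $S = X^*P$ with $X$ finite (so $P$, being a set of proper prefixes of $X$, is automatically finite too). I plan to show that every $y$ in the left root $Y$ satisfies $|y|\le n:=\max_{x\in X}|x|$; this forces $Y\subset A^{\le n}$, so $Y$ is finite. The main tool is to pair $y$ with some $p_0\in P$: since $i\cdot y=i$ gives $i\cdot yp_0=i\cdot p_0\in T$, we have $yp_0\in S=X^*P$, so there is a factorization $yp_0=x_1\cdots x_r p'$ with $x_j\in X$ and $p'\in P$.

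I then split into cases according to where this factorization crosses the end of $y$. If $r=0$, then $yp_0=p'$ is a proper prefix of some element of $X$, so $|y|<n$. If $r\ge 1$ and $|x_1|>|y|$, then $y$ is a proper prefix of $x_1\in X$, so again $|y|<n$. The main obstacle is the remaining case, where $x_1$ is a prefix of $y$. Here I use the inclusion $X\subset Y^*$ (which holds because every $x\in X$ satisfies $i\cdot x=i$) to write $x_1=y_1\cdots y_s$ with $y_j\in Y$; the nonempty prefix $y_1$ of $x_1$ is then also a prefix of $y$, and since $Y$ is a prefix code with $y\in Y$, this forces $y_1=y$. Consequently $s=1$ and $x_1=y\in X$, so $|y|\le n$. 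The delicate point is precisely this last step, which couples the prefix-code property of $Y$ with the inclusion $X\subset Y^*$.
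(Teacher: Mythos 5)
Your proof is correct and follows essentially the same route as the paper: both arguments multiply a left-root element by some $p_0\in P$ to land in $S=X^*P$, factorize over the finite decomposition, and combine the inclusion of the finite code in the monoid generated by the left root with the prefix-code property to conclude that every left-root element is a prefix of a word of $X$, hence of bounded length. Your case analysis just makes explicit what the paper compresses into the sentence ``$x$ is a prefix of a word in $X'$.''
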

\begin{proof}
Let $(X,P)$ and $(X',P')$ be two decompositions of a recurrent set $S$
with $X$  the left root of $S$. Assume that $X'$ is finite. 

Any word of $X$ is a prefix of a word in $X'$. Indeed, let $x\in X$
and let $p$ be some element of $P$. Then $xp=x'p'$ with $x'\in X'^*$ and
$p'\in P'$. Thus $x$ is a prefix of a word in $X'^*$. Since $X'\subset X^*$,
this implies that $x$ is a prefix of a word in $X'$. This shows that
the length of $x$ is bounded by the maximal length of the words in $X'$.
Therefore $X$ is finite.

The other implication is clear.
\end{proof}

For a birecurrent set $S$, we have
\begin{equation}
S=X^*P=QY^*
\end{equation}
where $\tilde{Y}$ is the left root of $\tilde{S}$
and $Q$ is a set of proper suffixes of $Y$. The suffix code $Y$ 
is called the
\emph{right root} of $S$. When $S$ is the submonoid generated by a bifix code $X$, then $X$ is the left root of $S$ and $\tilde{X}$ is its right root.

We say that a birecurrent set $S$ is of \emph{finite type} if its 
left and right roots are finite. 

For example, for every finite bifix code $X$, the set
$S=X^*$ is a birecurrent set of finite type.

On the contrary, the birecurrent set of Example~\ref{exampleBirecurrent1}
is not of finite type (see Example~\ref{example13}).

We give two examples of birecurrent sets of finite type. The first
one is Example~\ref{examplePalindrome}.
Recall from Section~\ref{sectionPreliminaries}
that $\underline{Y}$  denotes
the characteristic series of a set $Y\subset A^*$. Note that when $(X,P)$ is a decomposition of
a recurrent set $S$, we have $\underline{S}=\underline{X}^*\underline{P}$.
Indeed, since $X$ is a prefix code, we have $\underline{M}=(\underline{X})^*$
for the submonoid $M=X^*$ and $\underline{S}=\underline{M}\ \underline{P}$
because the product $(M,P)$ is unambiguous (see~\cite{BerstelPerrinReutenauer2009}).
\begin{example}\label{examplePalindrome2}
The birecurrent set $S$ of Example~\ref{examplePalindrome}
is of finite type. Indeed, its left root is the finite prefix code 
$X=Z^2$ with $Z=aA\cup b$ and its right root is the finite suffix code 
$\tilde{X}$. One has $S=X^*P=P\tilde{X}^*$ with $P=\{\varepsilon,a\}$.
One has actually $S=\tilde{S}$. This can be checked by comparing
the two automata of Figure~\ref{figureWeakly2}
or directly by observing that $\underline{Z}=a^2+(1+a)b$
implies that $\underline{Z}(1+a)=(1+a)\underline{\tilde{Z}}$,
whence $X^*P=P\tilde{X}^*$ since $\underline{P}=1+a$.
\end{example}

The second example is Example 3.6.13 in~\cite{BerstelPerrinReutenauer2009}.
\begin{example}\label{exampleDegree3}
Let $\A=(Q,i,T)$ be the automaton given by Table~\ref{tableAutomaton}
with $i=1$ and $T=\{1,6\}$.
\begin{table}[hbt]
\begin{displaymath}
\begin{array}{c| c c c c c c c c c}
 & 1 & 2 & 3 & 4 & 5 & 6 & 7 & 8 & 9\\ \hline
a& 2 & 3 & 1 & 1 & 3 & 8 & 9 & 3 & 1\\ \hline
b& 4 & 6 & 7 & 5 & 1 & 4 & 1 & 5 & 1
\end{array}
\end{displaymath}
\caption{The transitions of the automaton $\A$ in Example~\ref{exampleDegree3}.}\label{tableAutomaton}
\end{table}
The minimal rank is equal to $3$ as one may check by computing
the minimal images which are $\{1,2,3\}$, $\{4,6,7\}$,
$\{1,4,5\}$ and $\{1,8,9\}$. The set $\{1,6\}$ is a class
of the kernel of $a^2$ which has image $\{1,2,3\}$ and
thus minimal rank. The  automaton $\tilde{\A}^\delta$
is thus strongly connected by Theorem~\ref{theoremRev} and the
set $S$ recognized by $\A$ is birecurrent. The left root of  $S$
 is the finite maximal prefix code
represented in Figure~\ref{figureLeftRoot} as the set of leaves
of a binary tree. In this figure,
as in the following ones, we represent words on the alphabet $\{a,b\}$
by nodes of a binary tree. The edges going up are labeled $a$,
those going down are labeled $b$.
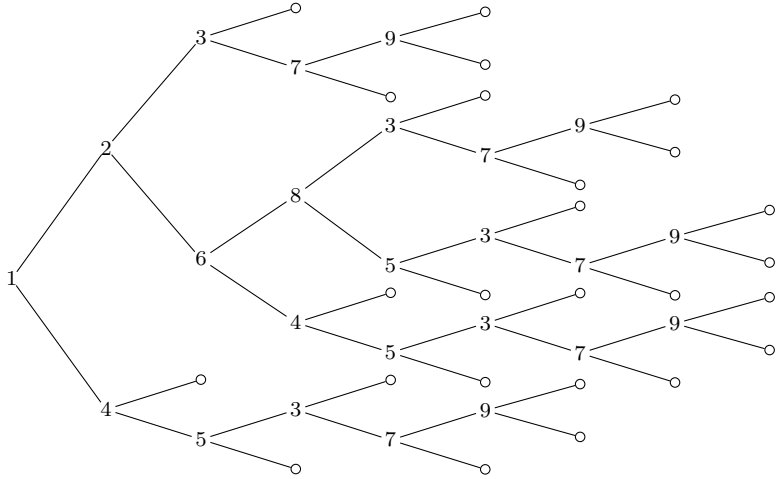
\begin{figure}[hbt]
  \centering
  \psset{style=simpletreestyle}
  \footnotesize
\begin{pspicture}(0,0)(15,9)
  \renewcommand\treesepvalue{1}
  \psset{treenodesize=0.0} 
  \psset{levelsep=1.8} 
  \def\Sero{\TC}
  \def\SNine{\MTu9{\Sero\Sero}}
  \def\SSeven{\MTu7{\SNine\Sero}}
  \def\SThree{\MTu3{\Sero\SSeven}}
  \def\SFive{\MTu5{\SThree\Sero}}
  \def\SFour{\MTu4{\Sero\SFive}}
  \def\SEight{\MTu[0.4]8{\SThree\SFive}}
  \def\SSix{\MTu[0.4]6{\SEight\SFour}}
  \def\STwo{\MTu[0.4]2{\SThree\SSix}}
  \def\SOne{\MTu[0.4]1{\STwo\SFour}}
  \rput[lb](0,0){\SOne}
\end{pspicture}
\caption{The left root of the set $S$ i Example~\ref{exampleDegree3}.}
\label{figureLeftRoot}
\end{figure}

The deterministic reversal of $\A$ 
has the set of states represented in Table~\ref{tableStatesTilde}.
\begin{table}[hbt]
\begin{displaymath}
\begin{array}{|c|c|c|c|c|c|c|c|c|}\hline
\mathbf{1}  & \mathbf{2}   & \mathbf{3}     & \mathbf{4}     & \mathbf{5} & \mathbf{6}   & \mathbf{7}     & \mathbf{8}     & \mathbf{9}\\ \hline
1,6&3,4,9&2,5,7,9&2,5,7,8&1,7&3,4,8&3,5,7,9&2,5,6,8&2,4,8\\ \hline
\end{array}
\end{displaymath}
\caption{The states of $\tilde{\A}^\delta$ in Example~\ref{exampleDegree3}.}\label{tableStatesTilde}
\end{table}

The transitions of $\tilde{\A}^\delta$ are given in Table~\ref{tableTilde}.
\begin{table}[hbt]
\begin{displaymath}
\begin{array}{c|ccccccccc}
  & \mathbf{1}  & \mathbf{2}   & \mathbf{3}     & \mathbf{4}     & \mathbf{5} & \mathbf{6}   & \mathbf{7}     & \mathbf{8}     & \mathbf{9}\\ \hline
a & \mathbf{2}  & \mathbf{4}   & \mathbf{5}     & \mathbf{1}     & \mathbf{2} & \mathbf{8}   & \mathbf{4}     & \mathbf{1}     & \mathbf{1}\\ \hline
b & \mathbf{3}  &\mathbf{1}   & \mathbf{6}     & \mathbf{6}     & \mathbf{7} & \mathbf{1}   & \mathbf{6}     & \mathbf{9}     & \mathbf{1}
\end{array}
\end{displaymath}
\caption{The transitions of the automaton $\tilde{\A}^\delta$
in Example~\ref{exampleDegree3}.}\label{tableTilde}
\end{table}
The reversal $\tilde{Y}$ of the right root of $S$ is represented
in Figure~\ref{figureRightRoot}.
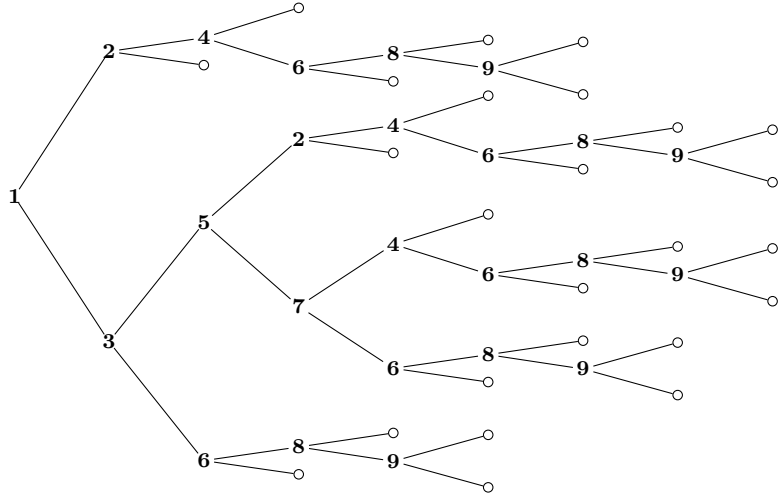
\begin{figure}[hbt]
  \centering
  \psset{style=simpletreestyle}
  \footnotesize
\begin{pspicture}(0,0)(15,9)
  \renewcommand\treesepvalue{1}
  \psset{treenodesize=0.0} 
  \psset{levelsep=1.8} 
  \def\Sero{\TC}
  \def\SNine{\MTu{\mathbf{9}}{\Sero\Sero}}
  \def\SSeven{\MTu{\mathbf{7}}{\SFour\SSix}}
  \def\SThree{\MTu{\mathbf{3}}{\SFive\SSix}}
  \def\SFive{\MTu{\mathbf{5}}{\STwo\SSeven}}
  \def\SFour{\MTu{\mathbf{4}}{\Sero\SSix}}
  \def\SEight{\MTu[0.4]{\mathbf{8}}{\Sero\SNine}}
  \def\SSix{\MTu[0.4]{\mathbf{6}}{\SEight\Sero}}
  \def\STwo{\MTu[0.4]{\mathbf{2}}{\SFour\Sero}}
  \def\SOne{\MTu[0.4]{\mathbf{1}}{\STwo\SThree}}
  \rput[lb](0,0){\SOne}
\end{pspicture}
\caption{The reversal of the right root of the set $S$ in Example~\ref{exampleDegree3}.}
\label{figureRightRoot}
\end{figure}

Since $\tilde{Y}$ is finite, $S$ is a  birecurrent set of finite type.
\end{example}
We give now two examples of  birecurrent sets such that the left 
root is finite but the right root is infinite and thus
which are not of finite type.

\begin{example}\label{example13}
Consider again the set $S$ recognized by the automaton of
Figure~\ref{figureRev} on the left with its deterministic reversal on the right
(Example~\ref{exampleBirecurrent1}). The left root of $S$ is
$\{a,ba\}$ and thus it is finite. However, the
right root of $S$ is $ba^+$ and it is infinite. Note that the two factorizations
of $S$ correspond to the identity
\begin{displaymath}
\{a,ba\}^*=a^*(ba^+)^*
\end{displaymath}
which is itself a particular case of the \emph{sumstar}
identity $(x+y)^*=x^*(yx^*)^*$.
\end{example}
In the second example, the birecurrent set $S$ is dense.
\begin{example}\label{example14}
Consider the finite maximal prefix code $X=\{aa,aba,abb,b\}^2$. The minimal
automaton  of $X^*$ is given by its transitions in 
Table~\ref{tableNotFiniteType} on the left with  $1$ as initial
and terminal state.

\begin{table}[hbt]
\begin{displaymath}
\begin{array}{c|cccccc}
 &1&2&3&4&5&6\\ \hline
a&2&3&4&1&3&1\\
b&3&5&1&6&3&1
\end{array}
\quad
\begin{array}{c|cccccc}
 &\mathbf{1}&\mathbf{2}&\mathbf{3}&\mathbf{4}&\mathbf{5}&\mathbf{6}\\ \hline
a&\mathbf{2}&\mathbf{4}&\mathbf{1}&\mathbf{6}&\mathbf{4}&\mathbf{1}\\
b&\mathbf{3}&\mathbf{4}&\mathbf{5}&\mathbf{5}&\mathbf{3}&\mathbf{1}
\end{array}
\end{displaymath}
\caption{The transitions of $\A$ and $\tilde{\A}^\delta$
in Example~\ref{example14}.}\label{tableNotFiniteType}
\end{table}
The word $ab^2$ has minimal rank $2$ and its kernel is $\{1,2,5\},\{3,4,6\}$.
Keeping the same initial state and taking $T=\{1,2,5\}$, we obtain a 
deterministic automaton $\A$
recognizing the set $S=X^*P$ with $P=\{\varepsilon,a,ab\}$.
The set
of states of $\tilde{\A}^\delta$ is given in Table~\ref{tableStatesMirror}.
\begin{table}[hbt]
\begin{displaymath}
\begin{array}{|c|c|c|c|c|c|}\hline
\mathbf{1}&\mathbf{2}&\mathbf{3}&\mathbf{4}&\mathbf{5}&\mathbf{6}\\\hline
1,2,5&1,4,6&2,3,6&3,4,6&1,4,5&2,3,5\\\hline
\end{array}
\end{displaymath}
\caption{The states of $\tilde{\A}^\delta$ in Example~\ref{example14}.}\label{tableStatesMirror}
\end{table}
Its transitions are given in Table~\ref{tableNotFiniteType} on the right. The 
right
root of $S$ is the maximal suffix code $\tilde{Y}$
where $Y^*$ is the submonoid recognized by $\tilde{\A}^\delta$ with
$\mathbf{1}$ as initial and terminal state. Since, in $\tilde{\A}^\delta$, there
is a loop $\mathbf{3}\stackrel{b}{\rightarrow}\mathbf{5}\stackrel{b}{\rightarrow}\mathbf{3}$,
the code $Y$ is infinite.
\end{example}
\subsection{A construction of birecurrent sets of finite type}\label{sectionConstruction}
Examples~\ref{examplePalindrome2} and \ref{exampleDegree3}
are particular cases of a general construction that
we now describe. 
Let $Z\subset A^+$ be a set of
words.
A word $x\in Z$ is said to be a \emph{pure square} for $Z$ if $x=w^2$
for some $w\in A^+$ and if $Z\cap wA^*\cap A^*w=\{x\}$.
 The following result is~ \cite[Exercise 14.1.9 (a)]{BerstelPerrinReutenauer2009}.
\begin{proposition}\label{propDelta}
If $Z$ is a finite maximal prefix code and if $w^2$ is
a pure square for $Z$, then, denoting $G=Zw^{-1}$ and
$D=w^{-1}Z$, the expression
\begin{equation}
(1+w)(\underline{Z}-1+(\underline{G}-1)w(\underline{D}-1))+1\label{eqDelta}
\end{equation}
is the characteristic polynomial of a finite maximal prefix code.
Moreover, the expression
\begin{equation}
(\underline{Z}-1+(\underline{G}-1)w(\underline{D}-1))(1+w)+1\label{eqDelta2}
\end{equation}
is the characteristic polynomial of a finite maximal code.
\end{proposition}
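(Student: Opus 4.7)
The core strategy combines a Bernoulli-measure check for maximality with an explicit combinatorial identification of the underlying set. I first establish that $D = w^{-1}Z$ is itself a finite maximal prefix code. The prefix property is inherited from $Z$; for maximality, given any $v \in A^*$, I would apply the maximality of $Z$ to $wv$: any prefix of $wv$ lying in $Z$ of length $< |w|$ would be a proper prefix of $w^2 \in Z$, violating the prefix property, so such a prefix has length $\ge |w|$ and has the form $wv'$ with $v' \in D$ a prefix of $v$; alternatively $wv$ is a prefix of some $z \in Z$ and $v$ extends to an element of $D$. Consequently, under the uniform Bernoulli morphism $\pi$, one has $\pi(\underline{Z}) = \pi(\underline{D}) = 1$, so
\[
\pi\bigl(\underline{Z} - 1 + (\underline{G}-1)w(\underline{D}-1)\bigr) = (\pi(\underline{G})-1)\,\pi(w)\cdot 0 = 0,
\]
and therefore $\pi$ of either full expression equals $1$. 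Once the polynomial is shown to be the characteristic series of a set, that set is thus automatically maximal.

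Next I would expand
\[
E := \underline{Z} - 1 + (\underline{G}-1)w(\underline{D}-1) = \underline{Z} + \underline{G}w\underline{D} - \underline{Gw} - \underline{wD} + \underline{\{w\}} - 1
\]
and interpret each term. The product $\underline{G}w\underline{D}$ has $0/1$ coefficients because a decomposition $x = uwv$ with $u \in G$ and $v \in D$ is unique: if $uwv = u'wv'$, then $uw$ and $u'w$ are comparable prefixes of the common word lying in the prefix code $Z$, forcing $u = u'$ and then $v = v'$. The pure square condition yields $Gw = Z \cap A^*w$, $wD = Z \cap wA^*$, and $Gw \cap wD = \{w^2\}$. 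Setting $Z_0 := Z \setminus (Gw \cup wD)$, the signed sum shows $E$ has coefficient $+1$ on $Z_0 \cup \{w\} \cup GwD$ and $-1$ on $\{\varepsilon, w^2\}$. In $(1+w)E + 1 = E + wE + 1$ the cancellations occur exactly: the constant $+1$ kills $-\varepsilon$; $+w$ in $E$ kills $-w$ in $wE$ (shift of $-\varepsilon$); $-w^2$ in $E$ kills $+w^2$ in $wE$ (shift of $+w$); and $+w^3$ in $E$'s $GwD$ term, arising from the unique decomposition $w^3 = w \cdot w \cdot w$ with $u = v = w$, kills $-w^3$ in $wE$ (shift of $-w^2$). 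The surviving support is
\[
X = Z_0 \cup (GwD \setminus \{w^3\}) \cup wZ_0 \cup wGwD,
\]
whose four pieces are pairwise disjoint by a case analysis relying again on the pure square condition.

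Finally I verify that $X$ is a prefix code. Within each of the four pieces the prefix property follows from that of $Z$ and $D$ combined with uniqueness of the decompositions involved. For cross-piece relations, the key observations are that $Z_0$ contains no word starting with or ending in $w$, that $wZ_0 \cup wGwD \subseteq wA^+$, and that any element of $GwD$ lying in $wA^+$ must come from $u = w$ (by pure square applied to $uw \in Z$). The hard part is ruling out prefix relations between $GwD$ and $wGwD$, and between $GwD$ and $wZ_0$; each such relation reduces, after manipulation, either to an equality excluded by disjointness or to a prefix relation inside $Z$ contradicting the prefix property. For the second expression $E(1+w)+1$, the mirror analysis produces
\[
X' = Z_0 \cup (GwD \setminus \{w^3\}) \cup Z_0 w \cup GwD\,w,
\]
which is not a prefix code in general; that it is a (maximal) code is shown analogously, either via a Sardinas--Patterson argument controlled by the pure square condition or by invoking the symmetry with the prefix case. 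The main obstacle throughout is the combinatorial bookkeeping for the prefix/code property: the pure square hypothesis is precisely what forces all potentially troublesome collisions to collapse.
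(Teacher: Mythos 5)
The paper does not actually prove this proposition; it cites it as an exercise from Berstel--Perrin--Reutenauer, where the intended route is through the positive-factorization theorem (a finite set $X$ with $\underline{X}-1=P(\underline{A}-1)Q$ for $P,Q$ with nonnegative integer coefficients is a maximal code). Your treatment of the first expression is a correct, more elementary substitute for that machinery: the identification $E=\underline{Z_0}+\underline{GwD}+w-w^2-1$ with $Z_0=Z\setminus(Gw\cup wD)$, the cancellation leaving $X=Z_0\cup(GwD\setminus\{w^3\})\cup wZ_0\cup wGwD$, the disjointness of the four pieces, the prefix property (where the removal of $w^3$ is exactly what kills the relation $w^3\prec w^3v'$ between $GwD$ and $wGwD$), and the Bernoulli argument $\pi(E)=0$ for maximality all check out, with $\pi(D)=1$ justified by your maximality argument for $D$.

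The genuine gap is the second expression. You dispose of it with ``shown analogously, either via a Sardinas--Patterson argument controlled by the pure square condition or by invoking the symmetry with the prefix case,'' and neither option stands as written. The symmetry with the prefix case is unavailable: the reversal of $E(1+w)+1$ is $(1+\tilde w)\bigl(\underline{\tilde Z}-1+(\underline{\tilde D}-1)\tilde w(\underline{\tilde G}-1)\bigr)+1$, which is an instance of the first case only when $\tilde Z$ is a maximal prefix code, i.e.\ when $Z$ is bifix --- and the proposition assumes only that $Z$ is a maximal prefix code. (The paper itself only claims $\gamma_w(Z)$ is a suffix code \emph{when $Z$ is bifix}, for exactly this reason; in general the conclusion is the weaker ``maximal code.'') The Sardinas--Patterson route is not carried out and is not ``analogous'': for $X'=Z_0\cup(GwD\setminus\{w^3\})\cup Z_0w\cup GwDw$ the set of dangling suffixes is nonempty (elements of $GwDw$ properly extend elements of $GwD$), so one must actually control the whole iteration, and nothing in your write-up indicates how the pure-square hypothesis does that. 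Certifying that a non-prefix finite set is a code is the hard content of this half of the statement; the clean fix is to observe that $\underline{X'}-1=\bigl(\underline{R}+(\underline{G}-1)w\,\underline{U}\bigr)(\underline{A}-1)(1+w)$ with the left factor having nonnegative coefficients (since $wU\subset R$), and then invoke the positive-factorization theorem --- but that theorem, or an equivalent amount of work, is missing from your proof.
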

The prefix code defined by
Equation~\eqref{eqDelta} is denoted $\delta_w(Z)$.
The maximal code defined by Equation~\eqref{eqDelta2} is
denoted $\gamma_w(Z)$. If $Z$ is bifix, then $\gamma_w(Z)$
is a suffix code since $\gamma_w(Z)$ is the reversal
of $\delta_{\tilde{w}}(\tilde{Z})$.

\begin{theorem}\label{theoremDP}
Let $Z\subset A^+$ be a finite maximal bifix code 
and let $x=w^2$ be a pure square for $Z$. 
The set $S=\delta_w(Z)^*\{\varepsilon,w\}$ is a dense birecurrent set of finite type.
\end{theorem}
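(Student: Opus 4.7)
The plan is to let $X = \delta_w(Z)$ and $Y = \gamma_w(Z)$, and to establish in turn that $S$ is recurrent with finite left root, birecurrent with finite right root (hence of finite type), and dense. By Proposition~\ref{propDelta} and the subsequent remark (using that $Z$ is bifix), $X$ is a finite maximal prefix code and $Y = \widetilde{\delta_{\tilde w}(\tilde Z)}$ is a finite maximal suffix code, obtained by applying Proposition~\ref{propDelta} to the finite maximal bifix code $\tilde Z$ with its pure square $\tilde w^2$.

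For the first point, a direct coefficient computation in $\underline X = (1+w)\alpha + 1$, where $\alpha = \underline Z - 1 + (\underline G - 1)w(\underline D - 1)$ with $G = Zw^{-1}$ and $D = w^{-1}Z$, using $\varepsilon \notin G \cup D$ and $w \notin Z$, yields $\underline X(w) = 0$; maximality of $X$ then gives that $w$ is a proper prefix of some element of $X$ (or, in the degenerate case $w \in X^*$, that $S = X^*$). Either way, Proposition~\ref{propositionRecurrent} makes $S$ recurrent and Proposition~\ref{propositionRoot} makes its left root finite.

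Birecurrence will follow from a single series identity derived from the two formulas in Proposition~\ref{propDelta}. Writing $1 - \underline X = -(1+w)\alpha$ and $1 - \underline Y = -\alpha(1+w)$, and noting that both $1+w$ and $-\alpha$ have constant term $1$ (so both are invertible in $K\langle\!\langle A\rangle\!\rangle$), one computes
\[\underline S = \underline{X^*}(1+w) = (-\alpha)^{-1}(1+w)^{-1}(1+w) = (-\alpha)^{-1} = (1+w)(1+w)^{-1}(-\alpha)^{-1} = (1+w)\underline{Y^*}.\]
Because $\underline S$ is $\{0,1\}$-valued, the product $\{\varepsilon, w\}\cdot Y^*$ must be unambiguous and coincide with $S$ set-theoretically. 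Reversing gives $\tilde S = \tilde Y^*\{\varepsilon, \tilde w\}$, and the symmetric coefficient check (applied to $\tilde Z$ with its pure square $\tilde w^2$) shows $\{\varepsilon, \tilde w\}$ consists of proper prefixes of $\tilde Y$. Propositions~\ref{propositionRecurrent} and~\ref{propositionRoot} then make $\tilde S$ recurrent with finite left root, so $S$ is birecurrent with finite right root and hence of finite type.

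Density is immediate: since $X$ is a finite maximal prefix code, its literal automaton is a finite, complete, strongly connected deterministic automaton, so $X^*$ is right dense, and $S \supseteq X^*$ is dense. The main obstacle I foresee is the upgrade from the series identity to the set-level equality $S = \{\varepsilon, w\}Y^*$ needed to invoke Proposition~\ref{propositionRecurrent} on the reversed side, together with the accompanying coefficient verifications that $\{\varepsilon, w\}$ and $\{\varepsilon, \tilde w\}$ are proper prefixes of $X$ and $\tilde Y$ respectively—routine bookkeeping, but unavoidable.
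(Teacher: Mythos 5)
Your proposal is correct and follows essentially the same route as the paper: the whole argument rests on the identity $\underline{X}^*(1+w)=(1+w)\underline{Y}^*$ (which you obtain by inverting $1-\underline{X}=-(1+w)\alpha$ and $1-\underline{Y}=-\alpha(1+w)$, where the paper instead multiplies the relation $(1+w)(\underline{Y}-1)=(\underline{X}-1)(1+w)$ on the left by $\underline{X}^*$ and on the right by $\underline{Y}^*$), followed by Propositions~\ref{propositionRecurrent} and~\ref{propositionRoot} applied to both decompositions and by right density of $X^*$. The bookkeeping you flag as remaining (that $w$ is a proper prefix of $X$, and the symmetric fact on the suffix side) is exactly what the paper settles via Equation~\eqref{eqPrefDelta}, since $\varepsilon$ belongs to the set $R$ of proper prefixes of $Z$ and $w\in R$ as a proper prefix of $w^2\in Z$.
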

\begin{proof} Set $X=\delta_w(Z)$ and $Y=\gamma_w(Z)$. Observe that, by defintion of $\delta_w(Z)$ and $\gamma_w(Z)$,
\begin{eqnarray*}
(1+w)(\underline{Y}-1)&=&(1+w)(\underline{Z}-1+(\underline{G}-1)w(\underline{D}-1))(1+w)\\
&=&(\underline{X}-1)(1+w).
\end{eqnarray*}
 Thus, multiplying on the left by $\underline{X}^*$ and on the
right by $\underline{Y}^*$, we obtain $\underline{X}^*(1+w)=(1+w)\underline{Y}^*$. This shows that
$S=X^*\{\varepsilon,w\}=\{\varepsilon,w\}Y^*$.

 Since $S=X^*\{\varepsilon,w\}$, the set $S$ is recurrent by Proposition~\ref{propositionRecurrent}
and its left root is finite by Proposition~\ref{propositionRoot}.

Similarly, since $S=\{\varepsilon,w\}Y^*$, the set $S$  is birecurrent.
Since  $Y$ finite, the right root of $S$
is also finite. Thus $S$ is birecurrent of finite type.

Since $X$ is a maximal prefix code, the submonoid $X^*$ is right dense.
Thus $S$ is right dense. 
\end{proof}
Observe that if $R$ is the set of proper prefixes of $Z$ and
$U$ the set of proper prefixes of $D$, the characteristic polynomial
of the set $P$ of proper
prefixes of $X=\delta_w(Z)$ is
\begin{equation}
\underline{P}=(1+w)(\underline{R}+(\underline{G}-1)w\underline{U}).\label{eqPrefDelta}
\end{equation}
Indeed, since
$Z,X$ and $D$ are maximal prefix codes, we have
$\underline{Z}-1=\underline{R}(\underline{A}-1)$,
 $\underline{X}-1=\underline{P}(\underline{A}-1)$ and
 $\underline{D}-1=\underline{U}(\underline{A}-1)$.
Thus Equations~\eqref{eqDelta} and \eqref{eqPrefDelta}
are equivalent.

In the next two examples, we show that Examples~\ref{examplePalindrome2}
and \ref{exampleDegree3} can be obtained by the contruction 
decribed in Theorem~\ref{theoremDP}.
\begin{example}\label{examplePalindrome3}
Let $S$ be the birecurrent set of Example~\ref{examplePalindrome}.
We have seen in Example~\ref{examplePalindrome2} that
$S=X^*P$ with $X=(aA\cup b)^2$ and $P=\{\varepsilon,a\}$. We obtain $X$
as in Theorem~\ref{theoremDP} using $Z=A^2$ and $x=a^2$. Indeed,
we have $G=D=A$ and thus Equation~\eqref{eqPrefDelta}
gives
\begin{displaymath}
(1+a)(1+a+b+(a+b-1)a)=(1+a)(1+Aa+b)=(1+aA+b)(1+a)
\end{displaymath}
which is the characteristic polynomial of the set of proper prefixes of $X$.
\end{example}
\begin{example}\label{exampleDegree3bis}
Let $S$ be the birecurrent set of Example~\ref{exampleDegree3}.
We start from the finite maximal bifix code
\begin{displaymath}
Z=\{aaa,aab,abaa,abab,abb,ba,bbaa,bbab,bbb\}
\end{displaymath}
which is represented in Figure~\ref{fig3_03} on the left
with its reversal on the right.
\begin{figure}[hbt]
  \centering
  \psset{style=simpletreestyle}\psset{radius=0.1} 
  \renewcommand\treesepvalue{0.9}
    \small
    \def\Ttwo{\MTp{\TC\TC}}
    \def\Tleft{\MTp{\TC\Ttwo}}
    \def\Tright{\MTp{\Ttwo\TC}}
\begin{pspicture}(0,0)(6.5,5)
    \rput[l](0,2){%
      \MTp[0.4]{\MTp[0.4]{\Ttwo\Tright}\MTp{\TC\Tright}}
    }
  \end{pspicture}\qquad
  \begin{pspicture}(0,0)(6.5,4)
    \rput[l](0,2){%
      \MTp[0.4]{\MTp{\Tleft\TC}\MTp[0.4]{\Tleft\Ttwo}}
    }
  \end{pspicture}\qquad\qquad
  
  \caption{The bifix code $Z$  and  its
    reversal $\widetilde{Z}$ in Example~\ref{exampleDegree3bis}.}
  \label{fig3_03}
\end{figure}
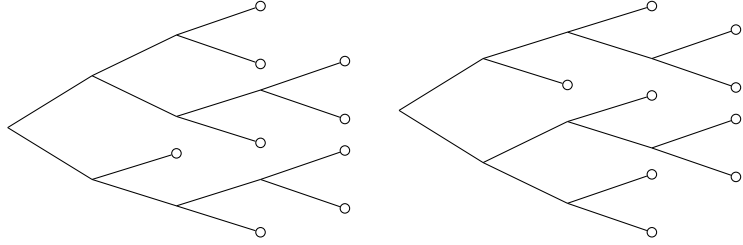

The word $x=w^2$ with $w=ab$ is a pure square for $Z$. We have
$G=\{a,ab,bb\}$ and $D=\{aa,ab,b\}$. 

The set $R$
of proper prefixes of $Z$ is $R=\{\varepsilon,a,b,aa,ab,bb,aba,bba\}$
and the set $U$ of proper prefixes of $D$ is $U=\{\varepsilon,a\}$.

The set $R$ is represented in Figure~\ref{figureR} on the left,
the set $R\setminus wU$ is represented 
in the middle, and the
set $Q=(R\setminus wU)\cup GwU$ on the right (the white nodes being not
in the set).
 
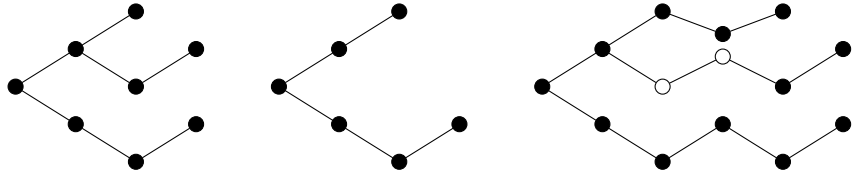
\begin{figure}[hbt]
\centering
\gasset{AHnb=0,Nadjust=wh,fillcolor=black}
\begin{picture}(100,25)
\put(0,0){
\begin{picture}(30,25)
\node(1)(0,15){}
\node(a)(8,20){}\node(b)(8,10){}
\node(aa)(16,25){}\node(ab)(16,15){}\node(bb)(16,5){}
\node(aba)(24,20){}\node(bba)(24,10){}

\drawedge(1,a){}\drawedge(1,b){}
\drawedge(a,aa){}\drawedge(a,ab){}\drawedge(ab,aba){}
\drawedge(b,bb){}\drawedge(bb,bba){}
\end{picture}
}
\put(35,0){

\begin{picture}(30,25)
\node(1)(0,15){}
\node(a)(8,20){}\node(b)(8,10){}
\node(aa)(16,25){}\node(bb)(16,5){}
\node(bba)(24,10){}

\drawedge(1,a){}\drawedge(1,b){}
\drawedge(a,aa){}
\drawedge(b,bb){}\drawedge(bb,bba){}
\end{picture}
}
\put(70,0){
\begin{picture}(30,25)
\node(1)(0,15){}
\node(a)(8,20){}\node(b)(8,10){}
\node(aa)(16,25){}
\node(aab)(24,22){}\node(aaba)(32,25){}
\node[fillcolor=white](ab)(16,15){}
\node[fillcolor=white](aba)(24,19){}
\node(abab)(32,15){}\node(ababa)(40,20){}
\node(bb)(16,5){}\node(bba)(24,10){}
\node(bbab)(32,5){}\node(bbaba)(40,10){}

\drawedge(1,a){}\drawedge(1,b){}
\drawedge(a,aa){}\drawedge(a,ab){}
\drawedge(aa,aab){}\drawedge(ab,aba){}
\drawedge(aab,aaba){}\drawedge(aba,abab){}
\drawedge(abab,ababa){}
\drawedge(b,bb){}\drawedge(bb,bba){}
\drawedge(bba,bbab){}\drawedge(bbab,bbaba){}
\end{picture}
}
\end{picture}
\caption{The sets $R$, $R\setminus wU$ and $Q=(R\setminus wU)\cup GwU$
in Example~\ref{exampleDegree3bis}.}\label{figureR}
\end{figure}

Finally, the set $P=\{\varepsilon,w\}Q$ is represented in Figure~\ref{figureQ}
(with the nodes of $Q$ in black and those of $wQ$ in red).
By Equation~\eqref{eqPrefDelta}, it is the set of proper prefixes of the maximal prefix code $X=\delta_w(Z)$
represented in Figure~\ref{figureLeftRoot}.
\begin{figure}[hbt]
\centering
\gasset{AHnb=0,Nadjust=wh}
\begin{picture}(80,60)
\put(0,0){
\gasset{fillcolor=black}
\begin{picture}(60,60)
\node(1)(0,30){}
\node(a)(10,40){}\node(b)(10,20){}
\node(aa)(20,50){}
\node(aab)(30,45){}\node(aaba)(40,55){}
\node(ab)(20,30){}
\node(aba)(30,40){}
\node(abab)(40,30){}\node(ababa)(50,40){}
\node(bb)(20,10){}\node(bba)(30,15){}
\node(bbab)(40,5){}\node(bbaba)(50,10){}

\drawedge(1,a){}\drawedge(1,b){}
\drawedge(a,aa){}\drawedge(a,ab){}
\drawedge(aa,aab){}\drawedge(ab,aba){}
\drawedge(aab,aaba){}\drawedge(aba,abab){}
\drawedge(abab,ababa){}
\drawedge(b,bb){}\drawedge(bb,bba){}
\drawedge(bba,bbab){}\drawedge(bbab,bbaba){}
\end{picture}
}
\put(20,0){
\gasset{fillcolor=red}
\begin{picture}(60,60)
\node(1)(0,30){}
\node(a)(10,40){}\node(b)(10,20){}
\node(aa)(20,50){}
\node(aab)(30,45){}\node(aaba)(40,55){}
\node[fillcolor=black](ab)(20,30){}
\node[fillcolor=black](aba)(30,40){}
\node(abab)(40,30){}\node(ababa)(50,40){}
\node(bb)(20,10){}\node(bba)(30,15){}
\node(bbab)(40,10){}\node(bbaba)(50,15){}

\drawedge(1,a){}\drawedge(1,b){}
\drawedge(a,aa){}\drawedge(a,ab){}
\drawedge(aa,aab){}\drawedge(ab,aba){}
\drawedge(aab,aaba){}\drawedge(aba,abab){}
\drawedge(abab,ababa){}
\drawedge(b,bb){}\drawedge(bb,bba){}
\drawedge(bba,bbab){}\drawedge(bbab,bbaba){}
\end{picture}
}
\end{picture}

\caption{The set $P=\{\varepsilon,w\}Q$ in Example~\ref{exampleDegree3bis}.}\label{figureQ}
\end{figure}
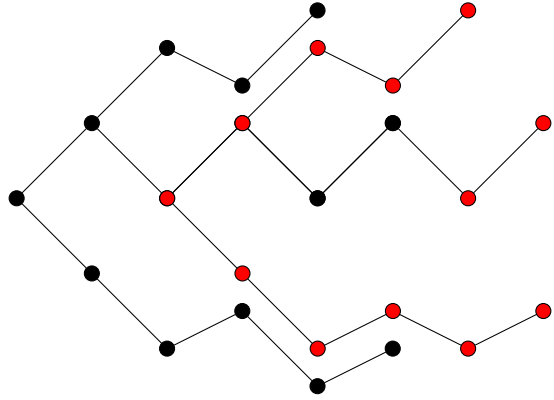
\end{example}

Theorem~\ref{theoremDP} gives an infinite family of examples
of birecurrent sets of finite type which are not submonoids
generated by a bifix code. Indeed, for any even $n\ge 1$, the word
$a^n$ is a pure square for the bifix code $A^n$.
\subsection{Multiple factorizations of noncommutative polynomials}
\label{sectionMultipleFactorizations}
Let $S$ be a dense birecurrent set of finite type. Then $S=X^*P=QY^*$
where $X$ (resp. $Y$) is a finite  maximal prefix code (resp. a finite 
maximal suffix code).
Since all products are non ambiguous, we have the equality
\begin{equation}
(1-\u(X))\u(Q)=\u(P)(1-\u(Y)).\label{eqFact}
\end{equation}
Let $J$ be the set of proper prefixes of $X$ and let $K$ be the
set of proper suffixes of $Y$. Then
\begin{equation}
1-\u(X)=\u(J)(1-\u(A)),\quad 1-\u(Y)=(1-\u(A))\u(K).\label{eqPrefSuf}
\end{equation}
Combining Equations~\eqref{eqFact} and ~\eqref{eqPrefSuf}, we obtain
\begin{equation}
\u(J)(1-\u(A))\u(Q)=\u(P)(1-\u(A))\u(K).\label{eqFact2}
\end{equation}

We conjecture that, with the above notation
 there exist sets $M,N\subset A^*$ such that
\begin{equation}
\u(J)=\u(P)\u(M),\quad \u(K)=\u(N)\u(Q),\quad \u(M)(1-\u(A))=(1-\u(A))\u(N)
\label{eqConjecture}
\end{equation}
This is true when $S$ is generated by a maximal bifix code
and when it is obtained by the construction of Section~\ref{sectionConstruction}. Indeed, in this case, we have $\u(P)=\u(Q)=1+w$ and by Equation~\eqref{eqPrefDelta}
\begin{displaymath}
\u(J)=(1+w)(\u(P_Z)+(\u(G)-1)w\u(P_D))
\end{displaymath}
and 
\begin{displaymath}
\u(K)=(1+w)(\u(S_Z)+(\u(S_G)-1)w(\u(D)-1)).
\end{displaymath}
where $P_Z,P_D$ denote the set of proper prefixes of $Z,D$
and $S_Z,S_G$ denote the set of proper suffixes of $Z,G$.
Thus $\u(J)=\u(P)\u(M)$ with $\u(M)=\u(P_Z)+(\u(G)-1)w\u(P_D)$
and $\u(K)=\u(N)\u(Q)$ with $\u(N)=\u(S_Z)+(\u(S_G)-1)w(\u(D)-1)$.
Moreover $\u(M)(\u(A)-1)=\u(Z)-1+(\u(G)-1)w(\u(D)-1)=(1-\u(A))\u(N)$.

A weak form of this conjecture is proved in~\cite[Theorem 3.1]{BuyereDeFelice1992}.
\subsection{Degree of a dense birecurrent set}\label{sectionDegree}
Let $S$ be a dense birecurrent set. We define its
\emph{degree} as the degree of its minimal automaton. Recall
from Section~\ref{sectionPreliminaries} that the
degree $d(X)$ of a prefix code $X$ is, by definition, the degree of the
minimal automaton of $X^*$.  Thus the
degree of a dense birecurrent set is the degree of its left  root
(this is true even if the minimal automaton
of $X^*$ is a quotient of the minimal automaton of $S$, see~\cite[Proposition 9.5.2]{BerstelPerrinReutenauer2009}).

Thus, when $S=X^*$ with $X$ a maximal bifix code, the degree of $S$
is the degree of $X$.

Note that the degree of the right root may be different from the
degree of the left root, and thus of the degree of the reversal,
 as shown in the following example.

\begin{example}\label{exampleLeftRightDegrees}
Let $S$ be the dense birecurrent set recognized by the automaton $\A$
represented in Figure~\ref{figureS4} on the left. The automaton
$\tilde{\A}^\delta$ is represented on the right.
\begin{figure}[hbt]
\gasset{Nadjust=wh}\centering
\begin{picture}(90,35)
\put(0,0){
\begin{picture}(20,28)(0,-5)
\node[Nmarks=if,fangle=180](1)(0,20){$1$}\node[Nmarks=f](2)(20,20){$2$}
\node(3)(20,0){$3$}\node(4)(0,0){$4$}

\drawedge[curvedepth=3](1,2){$a,b$}\drawedge[curvedepth=3](2,1){$b$}
\drawedge[curvedepth=3](2,3){$a$}\drawedge[curvedepth=3](3,4){$a$}
\drawedge[curvedepth=3](4,1){$a$}
\drawloop[loopangle=0](3){$b$}\drawloop[loopangle=180](4){$b$}
\end{picture}
}
\put(50,0){
\begin{picture}(20,25)
\node[Nmarks=if,fangle=180](1)(0,30){$1,2$}\node(2)(40,30){$2,3$}
\node(3)(40,0){$3,4$}\node[Nmarks=f,fangle=180](4)(0,0){$1,4$}
\node(24)(15,10){$2,4$}\node[Nmarks=f,fangle=-45](13)(25,20){$1,3$}

\drawedge[curvedepth=-3](2,1){$a$}
\drawedge[curvedepth=-3](3,2){$a$}\drawedge[curvedepth=-3](4,3){$a$}
\drawedge[curvedepth=-3](1,4){$a$}
\drawedge[curvedepth=3](4,24){$b$}\drawedge[curvedepth=3](24,4){$b$}
\drawedge[curvedepth=3](24,13){$a$}\drawedge[curvedepth=3](13,24){$a$}
\drawedge[curvedepth=3](2,13){$b$}\drawedge[curvedepth=3](13,2){$b$}

\drawloop[loopangle=180](1){$b$}\drawloop[loopangle=0](3){$b$}
\end{picture}
}
\end{picture}
\caption{The automata $\A$ and $\tilde{\A}^\delta$
in Example~\ref{exampleLeftRightDegrees}.}\label{figureS4}
\end{figure}
The degree of $S$ is $4$ since the transition monoid of $\A$ is the symmetric
group $\mathfrak{S}_4$ on $4$ elements. But the degree of the right root of $S$ is $6$
since the transition monoid of the automaton $\tilde{\A}^\delta$ is a
representation of $\mathfrak{S}_4$ on $6$ elements.
\end{example}

Let $S$ be a dense birecurrent set of degree $d$ and let $\A=(Q,i,T)$
be its minimal automaton. By Theorem~\ref{theoremRev}, the set $T$
is saturated by a word $w$ of minimal rank. The rank
of $w$ is by definition the degree $d$ of $S$. Let $k\ge 1$
be the number of classes of the kernel of $w$ whose union
is $T$.
We define the \emph{index} of a dense birecurrent set, denoted $i(S)$, as
the rational number $d/k$.

Since $1\le k\le d$, we have $1\le i(S)\le d$.

Note that the index does not depend on the choice of the word $w$.
Indeed, since $S$ is dense, any minimal image (that
is the image of a word
of minimal rank) is a set of representatives of any kernel
of a word of minimal rank and thus $k$ is
the number of elements of $T$ in each minimal image.

When $S=X^*$ with $X$ a maximal bifix code, the index of $S$
 is the degree of $X$.

\begin{example}
Let $S$ be the dense birecurrent set of Example~\ref{exampleLeftRightDegrees}.
Since the degree of $S$ is $4$, since $\varphi_\A(A^*)$ is
a group and since $\Card(T)=2$, we have $i(S)=2$.
\end{example}
\begin{example}
Let $S$ be the birecurrent set of Example~\ref{example3/2}. Then
$i(S)=3/2$.
\end{example}

\begin{proposition}\label{propositionIndex}
Let $S$ be a dense birecurrent set and let $\A$ be the minimal
automaton of $S$. Then for every $\GH$-class $H$ of the minimal
ideal of $\varphi_\A(A^*)$, one has
\begin{equation}
i(S)=\Card(H)/\Card(H\cap\varphi_\A(S)).\label{eqIndex}
\end{equation}
\end{proposition}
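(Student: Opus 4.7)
The plan is to fix an $\GH$-class $H$ of the minimal ideal $J$ of $M=\varphi_\A(A^*)$ and to evaluate each element of $H$ at the initial state $i$. Write $I_H$ for the common image of the elements of $H$ (constant on the $\GL$-class of $H$) and $\rho_H$ for their common kernel (constant on the $\GR$-class of $H$). Since $\A$ is complete (as $S$ is dense) and $H\subseteq J$, every element of $J$ has rank exactly $d=d(\A)$, so $\Card(I_H)=d$ and $\rho_H$ has $d$ classes. Because $\A$ is the minimal automaton of $S$, we have $\varphi_\A(S)=\{m\in M\mid i\cdot m\in T\}$; since $i\cdot m\in I_H$ for every $m\in H$, this gives $H\cap\varphi_\A(S)=\{m\in H\mid i\cdot m\in T\cap I_H\}$.

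The first step is to prove that the evaluation map $\psi_H:H\to I_H$, $m\mapsto i\cdot m$, is surjective with all fibers of cardinality $\Card(H)/d$. Pick an idempotent $e$ in the $\GR$-class of $H$ (which exists since $J$ is a regular $\GD$-class, hence every $\GR$-class of $J$ contains an idempotent), let $H_e$ be its group $\GH$-class, and fix $m_0\in H$. Since $e$ is idempotent, it acts as the identity on its image $I_{H_e}$, so $I_{H_e}$ is a set of representatives of $\rho_H$; thus $m_0$ is injective on $I_{H_e}$, and its restriction $m_0|_{I_{H_e}}:I_{H_e}\to I_H$ is a bijection. A short computation with right actions then shows that for $n\in H_e$ the element $nm_0$ has image $I_H$ and kernel $\rho_H$, so $n\mapsto nm_0$ is an injective map $H_e\to H$, hence a bijection (all $\GH$-classes of $J$ have the same cardinality). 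Under this bijection, $\psi_H$ becomes $n\mapsto(i\cdot n)\cdot m_0$. By the property of group $\GH$-classes recalled in Section~\ref{sectionPreliminaries}, $H_e$ acts on $I_{H_e}$ as a transitive permutation group, so $n\mapsto i\cdot n$ is surjective from $H_e$ onto $I_{H_e}$ with all fibers of constant size $\Card(H_e)/d$; composing with $m_0|_{I_{H_e}}$ yields the claim.

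The second step is to verify that $\Card(T\cap I_H)=k$. Let $w$ be a word of minimal rank such that $T$ is saturated by $w$, so $T$ is the union of $k$ classes of the kernel of $w$. For any $m\in H$, the element $m\varphi_\A(w)$ lies in $J$ and therefore has rank $d$; since its image equals $I_H\cdot w$, we get $\Card(I_H\cdot w)=d=\Card(I_H)$, so $w$ is injective on $I_H$. Hence $I_H$ meets each class of the kernel of $w$ in at most one element, and by cardinality in exactly one. Consequently $\Card(T\cap I_H)=k$.

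Combining the two steps, $H\cap\varphi_\A(S)=\psi_H^{-1}(T\cap I_H)$ has cardinality $k\cdot\Card(H)/d$, whence
\begin{displaymath}
\Card(H)/\Card(H\cap\varphi_\A(S))=d/k=i(S).
\end{displaymath}
The main obstacle will be the first step: writing a general $\GH$-class as $H_e\,m_0$ and transferring the transitive group action of $H_e$ on $I_{H_e}$ into a uniform statement about $\psi_H$ requires carefully tracking how kernels and images behave under composition inside the minimal ideal, which is why the right-action conventions and the fact that $I_{H_e}$ is a transversal of $\rho_H$ must be invoked explicitly.
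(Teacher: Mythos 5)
Your proof is correct and follows essentially the same route as the paper's: both count the fibers of the evaluation map $h\mapsto i\cdot h$ over the $k$ elements of $T\cap I_H$, using the fact (coming from density) that minimal images are transversals of the kernels of words of minimal rank. You are in fact somewhat more careful than the paper, whose proof only treats the $\GH$-class of the saturating word and speaks of ``the subgroup of $H$ fixing $j$'' as though $H$ were a group, whereas your transfer to a group $\GH$-class via Green's lemma handles an arbitrary $\GH$-class of the minimal ideal explicitly.
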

\begin{proof}
Set $\A=(Q,i,T)$ and let $d$ be the degree of $S$.
Let $x$ be a word of rank $d$ which saturates
$T$. Let $H$ be the $\GH$-class of $x$
and let $k$ be the number of classes of the kernel of $x$ whose
union is $T$. Let $I$ be the common image of the elements
of $H$. Since $I$ 
is a system of representatives of the kernel of $x$, it
 contains $k$ elements of $T$. Let $j$ be the element of $I$ such
that $ih=jh$ for every $h\in H$. Then, for every $h\in H$,
one has  $ih\in T$ if and only if $jh\in T$. Thus the set $H\cap\varphi_\A(S)$
is a union of $k$ cosets of the subgroup of $H$ fixing $j$.
Thus $\Card(H\cap\varphi_\A(S))=k\Card(H)/d=\Card(H)/i(S)$
\end{proof}
Note that, as a consequence, the index of a dense birecurrent set
and of its reversal are the same (contrary to the degree). Indeed,
the monoids $\varphi_\A(A^*)$ and $\varphi_{\tilde{A}}(A^*)$
are antiisomorphic and  antiisomorphic  groups
are isomorphic.

By Equation~\eqref{eqIndex}, the index of a birecurrent set
is a measure of its size.
We will make this idea more precise using probabilities.
We begin with some definitions on Bernoulli distributions
(see~\cite{BerstelPerrinReutenauer2009} for more details).

Let $\pi$ be a positive Bernoulli distribution on $A^*$. By definition, $\pi$
is a morphism from $A^*$ into the interval $]0,1]$ such that
$\sum_{a\in A}\pi(a)=1$. For $S\subset A^*$, we denote by $\pi(S)$
the (possibly infinite) sum $\sum_{x\in S}\pi(x)$.

For any code, one has $\pi(X)\le 1$
and when $X$ is a thin maximal code,
one has $\pi(X)=1$ (see \cite[Theorem 2.5.19]{BerstelPerrinReutenauer2009}). Moreover, if $X$ is prefix, then
we have $\lambda(X)=\pi(P)$ where $\lambda(X)=\sum_{x\in X}|x|\pi(x)$
is the \emph{average length} of $X$
and where $P$ is the set of proper prefixes of $X$ (see~\cite[Corollary 3.7.13]{BerstelPerrinReutenauer2009}).

The \emph{density}
of a recognizable set $S$, denoted $\delta(S)$
 is the Cesaro limit of the numbers
$\pi(S\cap A^n$). Thus $\delta(S)=\lim\frac{1}{n}\sum_{0\le i<n}\pi(S\cap A^i)$.
By  \cite[Theorem 13.2.9]{BerstelPerrinReutenauer2009},
when $X$ is a thin maximal code, we have $\delta(X^*)=1/\lambda(X)$.

The following result shows that the density of a dense birecurrent set
is a rational number, independent of the Bernoulli ditribution $\pi$.
This surprising property was  first put
in evidence for recognizable bifix codes
in~\cite{Schutzenberger1961}.
\begin{theorem}\label{theoremDensity}
The density of a dense birecurrent set is the inverse of its index.
\end{theorem}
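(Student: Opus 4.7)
Plan: The strategy is to express $\delta(S)$ as an integral over the transition monoid $M = \varphi_\A(A^*)$ and then apply Proposition~\ref{propositionIndex}.

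Let $\A=(Q,i,T)$ be the minimal automaton of $S$. Since $S$ is dense and birecurrent, $\A$ is complete and strongly connected, so $\pi$ induces an irreducible Markov chain on $Q$ with a unique stationary distribution $\pi^*$. A standard computation --- or equivalently, writing $S=X^*P$ via Proposition~\ref{propositionRecurrent} and combining $\delta(X^*)=1/\lambda(X)=\pi^*_i$ with the mean-first-return identity $\pi(P)=\sum_{q\in T}\pi^*_q/\pi^*_i$ --- gives $\delta(S) = \sum_{q\in T}\pi^*_q$.

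I would then lift the walk from $Q$ to $M$. For a random word $w_n$ of length $n$ drawn from $\pi$, let $\nu_n$ be the distribution of $\varphi_\A(w_n)$ and $\nu$ its Cesaro limit. Then $\pi^*_q = \sum_{m:\,i\cdot m=q}\nu(m)$, whence $\delta(S) = \nu(\varphi_\A(S))$. Since $J$ is a two-sided ideal containing some word $u_0$ of minimal rank with $\pi(u_0)>0$, the walk gets trapped in $J$ geometrically fast ($\Pr(\varphi_\A(w_n)\notin J)\le(1-\pi(u_0))^{\lfloor n/|u_0|\rfloor}$), so $\nu$ is supported on $J$. Partitioning $J$ into its (group) $\GH$-classes yields $\delta(S) = \sum_H \nu(H\cap\varphi_\A(S))$.

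The main step is to show that $\nu$ is uniform on each $\GH$-class of $J$, i.e.\ $\nu(m)=\nu(H)/|H|$ for $m\in H$. This is a classical ergodic property of the random walk $\varphi_\A(w_n)$ on a finite monoid: once the walk has entered $J$, it stays within a fixed $\GR$-class, and on the group part of each $\GH$-class it acts as right-translation by random group elements, whose Cesaro average is Haar-uniform by Perron--Frobenius. Granting this, Proposition~\ref{propositionIndex} provides $|H\cap\varphi_\A(S)|/|H|=1/i(S)$ for every $\GH$-class $H$ of $J$, and summing gives
\begin{displaymath}
\delta(S) \;=\; \sum_H \frac{|H\cap\varphi_\A(S)|}{|H|}\,\nu(H) \;=\; \frac{1}{i(S)}\sum_H \nu(H) \;=\; \frac{1}{i(S)},
\end{displaymath}
using $\nu(J)=1$. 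The principal obstacle is verifying the uniformity of $\nu$ on each $\GH$-class; without it, the dependence of $\pi^*(T)$ on the chosen Bernoulli distribution $\pi$ would not cancel and the rationality and distribution-independence of $\delta(S)$ would be lost.
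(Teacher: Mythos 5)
Your proof is correct and follows essentially the same route as the paper: both push the density forward to a measure $\nu=\delta\varphi_\A^{-1}$ on the transition monoid, observe that $\nu$ is concentrated on the minimal ideal and uniform on each $\GH$-class, and conclude via Proposition~\ref{propositionIndex}. The uniformity step you flag as the principal obstacle is exactly what the paper outsources to a citation (Theorem 13.4.7 of Berstel--Perrin--Reutenauer), so nothing further is needed there; your opening detour through the stationary distribution on $Q$ is harmless but not used in the rest of the argument.
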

\begin{proof}
Let $S$ be a dense birecurrent set and let $\A$ be its
minimal automaton. Set $\varphi=\varphi_A$ and $M=\varphi(A^*)$,
and let $K$ be the minimal ideal of $M$.
Let $\nu=\delta\varphi^{-1}$ where $\delta$
denotes the density. Then, by
 \cite[Theorem 13.4.7]{BerstelPerrinReutenauer2009},
 $\nu$ is a probability
measure on the family of subsets of $M$. Moreover, $\nu(K)=1$
and for every $m\in K$, one has
\begin{equation}
\nu(m)=\frac{\nu(H)}{\Card(H)}
\end{equation}
where $H$ is the $\GH$-class of $m$. Let $\cal F$ denote
the family of $\GH$-classes of $K$. Then, using Proposition~\ref{propositionIndex},
\begin{eqnarray*}
\delta(S)&=&\sum_{m\in \varphi(S)\cap K}\nu(m)\\
&=&\sum_{H\in\cal F}\Card(H\cap \varphi(S))\frac{\nu(H)}{\Card(H)}\\
&=&\frac{1}{i(S)}\sum_{H\in\cal F}\nu(H)=\frac{\nu(K)}{i(S)}=\frac{1}{i(S)}.
\end{eqnarray*}
\end{proof}
Theorem \ref{theoremDensity} implies the following result
for a dense birecurrent set of finite type.
\begin{corollary}\label{corollaryLambda}
Let $S=X^*P$ be a dense birecurrent set of finite type
with left root $X$. Then, for any positive Bernoulli distribution $\pi$,
\begin{displaymath}
\lambda(X)=i(S)\pi(P).
\end{displaymath}
\end{corollary}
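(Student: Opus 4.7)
The plan is to compute the density $\delta(S)$ in two different ways and compare the two expressions. By Theorem~\ref{theoremDensity}, we already know $\delta(S) = 1/i(S)$. The goal is therefore to obtain a second expression $\delta(S) = \pi(P)/\lambda(X)$ from the decomposition $S=X^*P$, which will immediately yield the claim.

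First I would observe that, since $S$ is dense and of finite type, its left root $X$ is a finite (hence thin) maximal prefix code, and $P$, being a subset of the proper prefixes of $X$, is also finite. Since $X$ is prefix, the product $X^*P$ is unambiguous, which gives
\begin{displaymath}
\pi(S\cap A^n) \;=\; \sum_{l\ge 0}\pi(P\cap A^l)\,\pi(X^*\cap A^{n-l}),
\end{displaymath}
a sum with only finitely many nonzero terms because $P$ is finite.

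Next, I would take Cesaro means of both sides. For each fixed $l$, the shifted Cesaro average $\frac{1}{N}\sum_{n\ge l}^{N-1}\pi(X^*\cap A^{n-l})$ converges to $\delta(X^*)$, and since only finitely many values of $l$ contribute, the exchange of limit and sum is trivially justified. Using $\delta(X^*)=1/\lambda(X)$ (from~\cite[Theorem 13.2.9]{BerstelPerrinReutenauer2009}, which applies because $X$ is a thin maximal code), I obtain
\begin{displaymath}
\delta(S)\;=\;\pi(P)\,\delta(X^*)\;=\;\pi(P)/\lambda(X).
\end{displaymath}
Combining with $\delta(S)=1/i(S)$ yields $\lambda(X)=i(S)\pi(P)$.

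I do not foresee a serious obstacle here: the argument is just a direct combination of the density formula proved in Theorem~\ref{theoremDensity} with the classical expression for the density of the submonoid generated by a thin maximal prefix code. The only mildly technical point is the interchange of Cesaro limit and summation, which is immediate from the finiteness of $P$.
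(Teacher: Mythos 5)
Your proof is correct and follows essentially the same route as the paper: both compute $\delta(S)$ in two ways, combining $\delta(S)=1/i(S)$ from Theorem~\ref{theoremDensity} with $\delta(S)=\pi(P)\,\delta(X^*)=\pi(P)/\lambda(X)$. The only difference is that you derive the product formula $\delta(X^*P)=\delta(X^*)\pi(P)$ directly by a Cesaro-mean computation (valid, since $P$ is finite and the product is unambiguous), whereas the paper simply cites it from \cite[Proposition 13.2.5]{BerstelPerrinReutenauer2009}.
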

\begin{proof}
Since $S=X^*P$, we have $\delta(S)=\delta(X^*)\pi(P)$ by \cite[Proposition 13.2.5]{BerstelPerrinReutenauer2009}. Since $X$ is a finite maximal
prefix code, we have $\delta(X^*)=1/\lambda(X)$ by \cite[Theorem 13.2.9]{BerstelPerrinReutenauer2009}. Thus, by Theorem \ref{theoremDensity} we have
$\lambda(X)=\pi(P)/\delta(S)=i(S)\pi(P)$.
\end{proof}
Note that when Equation~\eqref{eqConjecture} holds, we have
$\pi(M)=i(S)$. The fact that $\pi(M)$ is a rational number
independant of $\pi$ is itself a consequence of the
equation $\u(M)(1-\u(A))=(1-\u(A))\u(N)$ which implies
by left Euclidean division
$\u(M)=\alpha+(1-\u(A))u$ for polynomial $u$ and some scalar $\alpha$
and thus $\pi(M)=\alpha$.

By a well-known result, for every $d\ge 1$
and any finite alphabet,
 there is only a finite
number of finite maximal bifix codes of degree 
$d$ on this alphabet (see~\cite[Theorem 6.5.2]{BerstelPerrinReutenauer2009}).
There is no similar property for dense birecurrent sets of finite
type, as shown by the following example.
\begin{example}
For $n\ge 2$, let $Z=\{a^n,a^{n-1}b,\ldots,ab,b\}$, let $X=Z^2$ and let $P=\{\varepsilon,a,\ldots,a^{n-1}\}$. Then $S=X^*P$ is a birecurent set 
(the case $n=2$ is Example~\ref{examplePalindrome}). Indeed, we have
$\u(Z)\u(P)=\u(P)\u(\tilde{Z})$ and thus $\tilde{S}=\u(P)\u(\tilde{X})^*$.
The degree of $S$ is $2$ for every $n\ge 1$ because 
$(ZP\cup P)b\subset X\cup Z$ and thus the rank of $b$ is $2$.
\end{example}

\subsection{Indecomposable prefix codes}\label{sectionIndecomposable}
In this section, we relate birecurrent sets with the notion
of decomposition of prefix codes.

A prefix code $X\subset A^*$ is \emph{indecomposable} if $X\subset Z^*$, with
$Z\subset A^*$ a prefix code, implies $Z=A$ or $Z=X$. Otherwise
$X$ is said to be \emph{decomposable} over $Z$(see~\cite{BerstelPerrinReutenauer2009} for a more detailed
presentation).

If $X$ is decomposable over $Z$, let $\beta:B\rightarrow Z$ be a bijection
of $Z$ with an alphabet $B$, extended to a morphism from $B^*$
onto $Z^*$. Then $Y=\beta^{-1}(X)$ is a prefix code on the alphabet $B$.
We denote $X=Y\circ_\beta Z$. The prefix code $X$ is  thin maximal, if
and only if $Y$
and $Z$ are also thin maximal prefix codes. Moreover, one has $d(X)=d(Y)d(Z)$
\cite[Proposition 11.1.2]{BerstelPerrinReutenauer2009}.
In particular, $X$ is synchronized (that is, of degree $1$)
if and only if $Y$ and $Z$ are synchronised.

We will prove the following result. It singles out two basic building
blocks for recognizable maximal prefix codes: synchronized ones
on the one hand, and left roots of dense birecurrent sets on the other.
Note that no nontrivial prefix code (that is, distinct of the alphabet)
can belong to both families.
\begin{theorem}\label{theoremDecomposition}
Let $X$ be a recognizable maximal prefix code. If $X$ is indecomposable,
either $X$ is synchronized or it is the left root of a dense
birecurrent set.
\end{theorem}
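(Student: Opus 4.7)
The plan is as follows. If $X$ is synchronized the first alternative holds, so I may assume $d=d(X)\ge 2$. Let $\A=(Q,i,i)$ be the minimal automaton of $X^*$; it is strongly connected (since $X^*$ is recurrent) and complete (since $X$ is a maximal prefix code). Pick any word $w$ of minimal nonzero rank $d$; its kernel has exactly $d\ge 2$ classes, so I can choose a nonempty proper union $T$ of these classes. Let $\B=(Q,i,T)$ and let $S$ be the set it recognizes. The strategy is to show that $\B$ is already the minimal automaton of $S$; the rest of the argument is then immediate.

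Indeed, granting this, Theorem~\ref{theoremRev} applied to $\B$ yields that $S$ is birecurrent, since $T$ is saturated by the minimal-rank word $w$. As $\B$ is strongly connected and complete, $S$ is right dense, hence dense. Finally, by definition the left root of $S$ is the prefix code generating the submonoid recognized by $(Q,i,i)=\A$, namely $X^*$; so the left root of $S$ is $X$, as required.

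The heart of the argument, and the main obstacle, is therefore the minimality of $\B$, and this is where indecomposability of $X$ is used. Suppose, towards a contradiction, that the Nerode equivalence $\sim$ on $\B$ is nontrivial. The submonoid recognized by $(Q/{\sim},[i],[i])$ has the form $Z^*$ for some prefix code $Z$, and the canonical surjection $\A\to(Q/{\sim},[i],[i])$ sends loops at $i$ to loops at $[i]$, giving $X\subset X^*\subset Z^*$. By indecomposability of $X$, either $Z=A$ or $Z=X$. If $Z=A$, then $[i]\cdot u=[i]$ for every $u\in A^*$, and trimness of $\A$ collapses $Q/{\sim}$ to a single state; the saturation of $T$ by $\sim$ then forces $T\in\{\emptyset,Q\}$, contradicting the choice of $T$. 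If $Z=X$, then $(Q/{\sim},[i],[i])$ recognizes $X^*$; but $\A$ is the minimal automaton of $X^*$, so comparing the number of states forces $\sim$ to be the identity, again a contradiction. Hence $\B$ is minimal and the proof is complete.
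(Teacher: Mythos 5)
Your proof is correct, and its overall skeleton matches the paper's: reduce everything to showing that $\B=(Q,i,T)$ is already minimal, then invoke Theorem~\ref{theoremRev} for birecurrence, completeness of $\A$ for density, and the definition of the left root. Where you genuinely diverge is in how minimality is obtained. The paper takes $T$ to be a \emph{single} class of the kernel of a minimal-rank word $x$, shows that two Nerode-equivalent states of $(Q,i,T)$ must be strongly synchronizable (this step needs $T$ to be one kernel class, so that $p\cdot uv,q\cdot uv\in T$ forces $p\cdot uvx=q\cdot uvx$), and then uses that the strongly-synchronizable relation is a stable equivalence which indecomposability collapses to equality. You bypass the synchronizability machinery entirely: you observe that the Nerode equivalence of $\B$ is itself a stable equivalence, apply the indecomposability dichotomy $Z=A$ or $Z=X$ directly to the prefix code generating the stabilizer of $[i]$ in the quotient, and rule out the two branches by ``$T$ is a nonempty proper subset'' and by counting states against the minimal automaton of $X^*$. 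This buys you two things: the argument works for an arbitrary nonempty proper union of kernel classes (indeed for any nonempty proper $T\subset Q$), not just a single class; and it makes explicit the universal-equivalence case that the paper's remark ``if $X$ is indecomposable, $\rho$ must be the equality'' passes over silently. The paper's route, in exchange, isolates the notion of strongly synchronizable pairs, which it reuses in connection with the road coloring problem. Both proofs rest on the same underlying fact — an indecomposable maximal prefix code admits no nontrivial proper stable equivalence on its minimal automaton — so I would call yours a streamlined variant rather than a fundamentally different proof.
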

To prove Theorem~\ref{theoremDecomposition}, we introduce the following notion,
which plays a role in the solution of the road coloring problem (see
\cite[Lemma 10.4.3]{BerstelPerrinReutenauer2009}).

Let $\A=(Q,i,T)$ be a finite deterministic 
automaton. A pair of states $p,q\in Q$
is called \emph{synchronizable} if there is a word $v$ such that
$p\cdot v=q\cdot v$. It is called \emph{strongly synchronizable}
if for every $u\in A^*$ the pair $p\cdot u,q\cdot u$ is synchronizable.

We note that the equivalence $\rho$ on $Q$ defined by $p\equiv q \bmod \rho$ if $p,q$ are strongly synchronizable is a \emph{stable equivalence}.
This means that if $p\equiv q \bmod \rho$, then $p\cdot u\equiv q\cdot u\bmod \rho$ for every word $u$.

\begin{proposition}
Let $\A$ be a strongly connected and complete finite deterministic automaton.
Two states $p,q$ of $\A$
are strongly synchronizable if and only if $p\cdot x=q\cdot x$
for every word $x$ of minimal rank.
\end{proposition}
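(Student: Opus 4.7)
The plan is to argue both implications using the basic fact that, for a strongly connected complete finite deterministic automaton, the words whose image $\varphi_\A(w)$ lies in the minimal ideal $K$ of $M_\A$ are exactly the words of minimal rank, and that $K$ is stable under left and right multiplication by any element of $M_\A$. In particular, if $x$ has minimal rank then $ux$ and $xv$ also have minimal rank for every $u,v\in A^*$.

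For the easy direction $(\Leftarrow)$, assume $p\cdot x=q\cdot x$ for every word $x$ of minimal rank. Given any $u\in A^*$, pick any word $x$ of minimal rank; then $ux$ still has minimal rank, hence by hypothesis $p\cdot(ux)=q\cdot(ux)$. This exhibits $x$ as a synchronizing continuation of the pair $p\cdot u,q\cdot u$, so $p,q$ are strongly synchronizable.

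For the main direction $(\Rightarrow)$, fix a word $x$ of minimal rank $d$. Applying strong synchronizability to $u=x$ yields $v\in A^*$ with $p\cdot xv=q\cdot xv$. The key observation is that $xv$ also has minimal rank $d$, so $|Q\cdot x|=|Q\cdot xv|=d$; this forces the restriction of $\varphi_\A(v)$ to the image $Q\cdot x$ to be a \emph{bijection} onto $Q\cdot xv$, since it is a surjection between finite sets of the same cardinality. Now $p\cdot x$ and $q\cdot x$ both lie in $Q\cdot x$ and satisfy $(p\cdot x)\cdot v=(q\cdot x)\cdot v$; the injectivity of $\varphi_\A(v)$ on $Q\cdot x$ then gives $p\cdot x=q\cdot x$, as required.

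The main conceptual step, and the only real obstacle, is recognizing the injectivity of the continuation $v$ on the image $Q\cdot x$: this is exactly what prevents strong synchronizability from being strictly weaker than agreement on minimal-rank words. Everything else is a routine application of the property, recalled in Section~\ref{sectionPreliminaries}, that minimal-rank words correspond to the $0$-minimal ideal of $\varphi_\A(A^*)$ (here a minimal ideal, since $\A$ is complete).
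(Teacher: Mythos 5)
Your proof is correct and follows essentially the same route as the paper: the backward direction is identical, and the forward direction rests on the same key fact that appending a word to a word of minimal rank cannot lower the rank. The only cosmetic difference is that where you deduce injectivity of $\varphi_\A(v)$ on $Q\cdot x$ by counting ($|Q\cdot xv|=|Q\cdot x|=d$), the paper invokes the minimality of the right ideal generated by $\varphi_\A(x)$ to produce a word $z$ with $\varphi_\A(xvz)=\varphi_\A(x)$ and concludes $p\cdot x=p\cdot xvz=q\cdot xvz=q\cdot x$; the two observations are equivalent.
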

\begin{proof}
The condition is necessary. Indeed, let $p,q$ be strongly synchronizable
and let $x$ be a word of minimal rank. Let $y$ be a word such that
$p\cdot xy=q\cdot xy$. Since $\varphi_\A(x)$ generates a minimal
right ideal, there is a word $z$ such that $\varphi_\A(xyz)=\varphi_\A(x)$.
Thus $p\cdot x=p\cdot xyz=q\cdot xyz=q\cdot x$.

Conversely, assume that the condition is satisfied. Let $x$ be a word of minimal rank. For every word $u$,
the word $ux$ has minimal rank and thus $p\cdot ux=q\cdot ux$. Thus
$p,q$ are strongly synchronizable.
\end{proof}

Let $X$ be a recognizable maximal prefix code and let $\A=(Q,i,i)$
be the minimal automaton of $X^*$. If $\rho$ is a stable
equivalence on $Q$, then the set $U=\{u\in A^*\mid i\cdot u \equiv i\bmod\rho\}$
is a submonoid generated by a prefix code $Z$
with $X\subset Z^*$. If $X=Z$, then $\rho$ must be the equality
since $\A$ is minimal. Thus, if $X$
is indecomposable, $\rho$ must be the equality.\\

\begin{proofof}{of Theorem~\ref{theoremDecomposition}}
Assume that $X$ is not synchronized. Let $\A=(Q,i,i)$ be the minimal
automaton of $X^*$. Let $x$ be a word of minimal rank and let
$T$ be a class of the kernel of $x$. Let $\A'=(Q,i,T)$.
The set $S$ recognized by $\A'$ is birecurrent by Theorem~\ref{theoremRev}.
Let us verify that
$\A'$  is minimal. This will imply our conclusion since then
$X$ is the left root of the birecurrent set $S$.

We first observe that since $X$ is indecomposable, two strongly
synchronizable states are equal. Indeed, this follows from the
observation made above that the equivalence on $Q$ defined
by $p\equiv q$ if $p,q$ are strongly synchronizable is a stable equivalence.

Let $p,q$ be two states such that for every word $w$, $p\cdot w\in T$
if and only if $q\cdot w\in T$. Let $u$ be a word. Since $\A$
is strongly connected, there is some word $v$ such that $p\cdot uv\in T$.
Then $q\cdot uv\in T$ and thus $p\cdot uvx=q\cdot uvx$. This
shows that $p,q$ are strongly synchronizable. Since $X$
is indecomposable, this implies $p=q$. Thus $\A'$ is minimal.
\end{proofof}
Theorem~\ref{theoremDecomposition} is related with an old conjecture
of Sch\"utzenberger asserting that if a finite maximal prefix
code is indecomposable, either it is synchronized or it is bifix.
The conjecture is not true as shown by the left
root $X$ of the birecurrent set $S$ of Example~\ref{exampleDegree3}
(which appeared originally in~\cite{Perrin1977}).
In fact $X$ is indecomposable (see below), is not bifix and not synchronized
since its degree is $3$.
\begin{example}
Let us show that the left root $X$ of the birecurrent set of Example~\ref{exampleDegree3} is indecomposable and, more generally, that if $Z$ is a finite
maximal bifix code of prime degree $d\ge 3$ and $w$ is a pure square
for $Z$, the prefix code
$X$ given by Equation~\eqref{eqDelta} is indecomposable
(this is already proved in~\cite{Perrin1977}, but we reproduce 
the proof for convenience). 

We first observe that $a^d\in X$ for all $a\in A$. Indeed, one has $a^d\in Z$
as for any finite maximal bifix code of degree $d$ \cite[Proposition 6.5.1]{BerstelPerrinReutenauer2009}. Next, since $d\ge 3$, it is odd and
thus we cannot have $w\in a^*$.

Let
$T$ be a prefix code such that $X\subset T^*$. Then, since $d$ is prime,
 one has for any
$a\in A$, either $a\in T$ or $a^d\in T$. Set $X=Y\circ_\beta T$. Since
$d(X)=d(Y)d(T)$, either $Y$ is synchronized or $T$ is synchronized.
We show that
$T=A$ or $T=X$.

Assume first that $T$ is
synchronized. Then $Y$ has degree $d$. If $a^d\in T$ for some $a\in A$,
then $Y$ is synchronized (some power of $b=\beta^{-1}(a^d)$ is a synchronizing
word). This is impossible since $d>1$.
Therefore $a\in T$ for every $a\in A$ and thus $T=A$. 

Assume next that $Y$ is synchronized.
Then, since $T$ has degree $d$, $a^d\in T$ for every $a\in A$. 
Fix some $a\in A$.
By inspection of Equation~\ref{eqDelta}, the suffixes of $X$
which are in $a^*$ are of length at most $d$ and the only proper
prefixes $p$ of $X$ such that $pa^d\in X$ are $\varepsilon$ and $w$.
This implies that the only proper prefixes
of $X$ which possibly belong to $T^*$ are $\varepsilon$ and $w$.
Indeed, if $p\in T^*$ and $a^n\in X$, then $a^n\in T^*$
and thus $n$ is multiple of $d$.
 But if $w\in T^*$,
we have $D\subset T^*$ since $w^3D\subset X$ which is impossible since
the integer $n$ such that $a^n\in D$ is strictly less than $d$.
 This forces $T=X$ and shows that $X$
is indecomposable.

Note that the statement is not true for $d=2$ since the code
$X$ of Example \ref{examplePalindrome3} is decomposable.

\end{example}

We formulate the following open problem, as
an attempt to replace the conjecture of Sch\"utzenberger
by a weaker statement: if the prefix code
in Theorem~\ref{theoremDecomposition} is additionnaly finite,
can one prove that either it is synchronized or it is the left root of a dense
birecurrent set of finite type?

\section{Complete reducibility}\label{sectionCompleteReducibility}
In this section, we develop the link between birecurrent sets
and completely reducible sets. We begin with an introduction to
formal series (see~\cite{BerstelReutenauer2011} 
or~\cite{Sakarovitch2009} for a more detailed presentation).
\subsection{Recognizable series}\label{sectionFormalSeries}
We consider  formal series with coefficients in the field $K=\Q$.
Recall from Section~\ref{sectionPreliminaries} that we denote
by $K\<<A>>$ the ring of formal series with coefficients in $K$
and noncommutative variables in $A$.

A series is \emph{recognizable} (or equivalently, by
Sch\"utzenberger's theorem, \emph{rational})
if there is a linear representation
over a finite dimensional space recognizing it.
There is  a unique linear representation of
minimal dimension recognizing a given recognizable series
(up to the choice of a basis if the representation
is given in matrix terms). One can compute
it following three steps.
\begin{enumerate}
\item Start from any linear representation $(\lambda,\mu,\gamma)$ recognizing
 $\sigma$.
\item Take the representation $(\lambda',\mu',\gamma')$ obtained
by taking $\lambda'=\lambda$ and by restricting $\mu$ and $\gamma$ to the subspace generated by the vectors $\lambda\mu(w)$
for $w\in A^*$.
\item Take the representation $(\lambda'',\mu'',\gamma'')$ obtained
by taking $\gamma''=\gamma'$ and by restricting $\mu'$ and $\lambda'$
to the subspace generated by the vectors $\mu'(w)\gamma'$
for $w\in A^*$.
\end{enumerate}

\begin{example}\label{exampleaplus}
Let $A=\{a\}$.
The linear representation 
\begin{displaymath}
\lambda=[1\ 0],\quad \mu(a)=\begin{bmatrix}0&1\\0&1\end{bmatrix},\quad
\gamma=\begin{bmatrix}0\\1\end{bmatrix}
\end{displaymath}
recognizes the characteristic series of the set $a^+$. It is minimal.
Choosing $\gamma=[1\ 1]^t$, the representation recognizes $a^*$.
It is not minimal because $\mu(a)\gamma=\gamma$. It is easy to see that
the minimal
representation of $a^*$ is of dimension $1$.
\end{example}

The computation of the minimal representation of the
characteristic series of a set $S\subset A^*$ is closely
related to the computation of the minimal automaton
of $S$ and of its reversal. It can be described as follows.

Consider the
vector space $K^Q$ as containing $Q$, identifying $q\in Q$
with its characteristic function. Let
$(\lambda,\mu,\gamma)$ be the following linear representation
on the space $V=K^Q$. Set $\lambda=i$. For $w\in A^*$,
define $\mu(w)$ as the endomorphism of $V$ such that
$q\mu(w)=q\cdot w$. Finally set $\gamma=\u(T)$ where
$\u(T)$ is the linear form on $V$ which is the characteristic function of $T$.
Then $(\lambda,\mu,\gamma)$ obviously recognizes $\u(S)$
and we have performed the first step of the algorithm given above
to compute the minimal representation.
Step 2  does not change the representation since each state of the
minimal automaton is accessible from the initial state.
To perform Step 3, we compute the set $\tilde{Q}$ of states
of the reversal $\tilde{\A}^\delta$ of $\A$. For $w\in A^*$,
the  vector $\mu(w)\u(T)$ is precisely $\u(w\cdot T)$. Thus,
the minimal representation is the restriction of $\mu$ to
the subspace of $K^Q$ generated by the vectors $\u(U)$ for
$U\in\tilde{Q}$ (considered as a column vector on which
each $\mu(w)$ acts on the left).

We illustrate this algorithm in the following examples. 
In the first one, the representation given by the minimal
automaton is the minimal one.

\begin{example}\label{exampleaplus1}
The minimal automaton $\A$ of $S=a^+$ is represented in Figure~\ref{figureAutomatonaplus}.
\begin{figure}[hbt]
\centering
\gasset{Nadjust=wh}
\begin{picture}(20,10)
\node[Nmarks=i](1)(0,0){$1$}\node[Nmarks=f](2)(10,0){$2$}

\drawedge(1,2){$a$}\drawloop(2){$a$}
\end{picture}
\caption{The minimal automaton of $a^+$ in Example~\ref{exampleaplus1}.}\label{figureAutomatonaplus}
\end{figure}
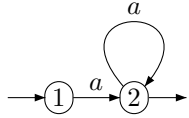
The linear representation built from the automaton $\A$ is that
of Example~\ref{exampleaplus}. The states
of $\tilde{\A}^\delta$ are $\{2\}$ and $\{1,2\}$. Their characteristic functions
are linearly independent. Thus Step 3 does not modify
the representation, which is minimal.
\end{example}
In the second example, the representation given by the minimal
automaton is not minimal.
\begin{example}\label{exampleBifix3}
Let $X$ be the bifix code represented in Figure~\ref{fig3_03} on the right
(it is the unique maximal bifix code of degree $3$ with internal part
$ab$). The minimal automaton $\A=(Q,1,1)$ of $S=X^*$ is defined by its
transitions given in Table~\ref{tableBifix3}.
\begin{table}[hbt]
\begin{displaymath}
\begin{array}{c|ccccc}
 & 1& 2& 3& 4& 5\\ \hline
a& 2& 3& 1& 3& 1\\
b& 4& 1& 5& 5& 1
\end{array}
\end{displaymath}
\caption{The transitions of the minimal automaton $\A$
in Example~\ref{exampleBifix3}.}\label{tableBifix3}
\end{table}

The set $\tilde{Q}$ of states of the reversal automaton 
$\tilde{\A}^\delta$ are given in Table~\ref{tableRevBifix3}.
\begin{table}[hbt]
\begin{displaymath}
\begin{array}{c|ccccc}
\tilde{Q}&\mathbf{1}&\mathbf{2}&\mathbf{3}&\mathbf{4}&\mathbf{5}\\\hline
         &1         & 2,4      &  3,5     & 2,5      &  3,4
\end{array}
\end{displaymath}
\caption{The states of the automaton $\tilde{\A}^\delta$
in Example~\ref{exampleBifix3}.}\label{tableRevBifix3}
\end{table}

The vector space generated by the corresponding characteristic vectors
has dimension $4$ because $\u(\mathbf{2})+\u(\mathbf{3})=\u(\mathbf{4})+\u(\mathbf{5})$. 
Choosing the basis formed of $\u(\mathbf{1}),\u(\mathbf{2}),\u(\mathbf{3}),
\u(\mathbf{4})$ the minimal representation $(\lambda,\mu,\gamma)$ of $\sigma=\u(S)$
is 
\begin{displaymath}
\lambda=[1\ 0\ 0\ 0],\quad 
\mu(a)=\begin{bmatrix}0&1&0&1\\0&0&1&0\\1&0&0&0\\0&0&0&0\end{bmatrix},\quad
\mu(b)=\begin{bmatrix}0&1&0&0\\0&0&1&1\\0&0&1&1\\1&0&-1&-1\end{bmatrix},\quad
\gamma=\begin{bmatrix}1\\0\\0\\0\end{bmatrix}
\end{displaymath}
Let us verify for example the value of the first column of $\mu(a)$.
We have $\mu(a)\u(\mathbf{1})=\u(a\cdot\{1\})=\u(\{3,5\})=\u(\mathbf{3})$.
Similarly, the last column of $\mu(b)$ is computed as
$\mu(b)\u(\mathbf{4})=\u(b\cdot\{2,5\})=\u(\{3,4\})=\u(\mathbf{5})=
\u(\mathbf{2})+\u(\mathbf{3})-\u(\mathbf{4})$.
\end{example}

Let $\mu:A^*\rightarrow\End(V)$ be a morphism. We let the
endomorphisms of $V$ act on the left
of the vectors in $V$. A subspace $W$
of $V$ is said to be \emph{invariant} with respect to $\mu$
if for every $x\in W$ and
$w\in A^*$, one has $x\mu(w)\in W$. The vector space $V$ is said
to be \emph{irreducible} with respect to $\mu$
if $V\ne 0$ and if its only invariant subspaces are
$0$ and $V$. Finally, it is said to be \emph{completely reducible}
with respect to $\mu$
if $V=\oplus_{i=1}^nV_i$ where each $V_i$ is an invariant irreducible
subspace of $V$. Let $\mu_i(w)$ be the restriction of $\mu(w)$
to $V_i$. The representation $\mu$ is  the direct
sum of the representations $\mu_i$ which are called the
\emph{irreducible components} of $\mu$.

We also say that a linear representation $(\lambda,\mu,\gamma)$
is completely reducible
if the underlying vector space is completely reducible with respect
to $\mu$.
\begin{example} \label{exampleaplus2}
The linear representation of Example~\ref{exampleaplus} is completely reducible.
Indeed, the subspaces generated respectively by $[-1\ 1]$
and by $[0\ 1]$ are invariant and obviously irreducible. In the
basis formed by these vectors, the representation takes the 
following equivalent form.
\begin{displaymath}
\lambda'=[-1\ 1],\quad \mu'(a)=\begin{bmatrix}0&0\\0&1\end{bmatrix},
\quad \gamma'=\begin{bmatrix}1\\1\end{bmatrix}.
\end{displaymath}
\end{example}
For $u\in A^*$, we denote
by $\sigma\cdot u$ the series defined by $(\sigma\cdot u,v)=(\sigma,uv)$ for every $v\in A^*$.

The \emph{syntactic space} of a series $\sigma$, denoted $V_\sigma$
is the vector space generated by the series $\sigma\cdot u$ for all $u\in A^*$.
The \emph{syntactic representation} of $\sigma$ is the 
linear representation
$\psi_\sigma:K\langle A\rangle\rightarrow \End(V_\sigma)$ defined
for $\tau\in V_\sigma$ and $u\in A^*$ by
\begin{displaymath}
\tau\psi_\sigma(u)=\tau\cdot u.
\end{displaymath}
The linear representation defined by the triple $(\lambda,\psi_\sigma,\gamma)$
with $\lambda=\sigma$ and $\gamma$ defined by $\tau\gamma=(\tau, \varepsilon)$
recognizes $\sigma$. Indeed, one has for every $w\in A^*$,
\begin{displaymath}
\lambda\psi_\sigma(w)\gamma=(\sigma\cdot w,\varepsilon)=(\sigma,w).
\end{displaymath}
It can be shown that the syntactic representation of a recognizable series
is its minimal representation (see~\cite{BerstelReutenauer2011}).

If a series has a completely reducible representation,
then its  syntactic representation  is also completely reducible.

\begin{example}\label{exampleaplus3}
The syntactic space of the series $\sigma=\u(a^+)$ has dimension $2$ and a basis
is formed by $\sigma$ and $\sigma\cdot a=\u(a^*)$. The corresponding linear representation is given in Example~\ref{exampleaplus}.
\end{example}

\subsection{Completely reducible sets}\label{sectionComplelyReducible}

A set $S\subset A^*$ is \emph{completely reducible} over $K$ if the
syntactic representation of the series $\underline{S}$ is
completely reducible. Equivalently, its syntactic algebra 
is semi-simple \cite[Chapter 12]{BerstelReutenauer2011}.

\begin{example}
The set $a^*$ is completely reducible since its syntactic space has dimension $1$.
The set $S=a^+$ is also completely reducible since its syntactic representation
is completely reducible by Examples \ref{exampleaplus2} and \ref{exampleaplus3}.

\end{example}
We give a second example, in which $S$ is the submonoid generated by a finite
bifix code.
\begin{example}
Consider again the set $S$ of Example~\ref{exampleBifix3}.
We have found a minimal representation  of dimension $4$.
This representation is not irreducible because the space generated by
$\u(\mathbf{1})+\u(\mathbf{2})+\u(\mathbf{3})$ is invariant by $\mu$ (with $\mu(w)$ acting on the left).
The space generated by $\u(\mathbf{2})-\u(\mathbf{1})$, $\u(\mathbf{3})-\u(\mathbf{1})$ and $\u(\mathbf{4})-\u(\mathbf{1})$
is a stable complement and the representation takes in the
basis $\u(\mathbf{1})+\u(\mathbf{2})+\u(\mathbf{3})$, $\u(\mathbf{2})-\u(\mathbf{1})$,  $\u(\mathbf{3})-\u(\mathbf{1})$, $\u(\mathbf{4})-\u(\mathbf{1})$
the form of a direct sum of two representations $\mu'_1$ of dimension $1$
and $\mu'_2$ of dimension $3$. In this basis, the vector $\lambda$
becomes $\lambda'=[1\ -1\ -1\ -1]$ (its components are the values of the
linear form defined by $\lambda$ on each vector of the basis), and $\mu(a)$, $\mu(b)$, $\gamma$
become
\begin{displaymath}
\mu'(a)=\begin{bmatrix}1&0&0&0\\0&0&1&0\\0&-1&-1&-1\\0&0&0&0\end{bmatrix},\quad
\mu'(b)=\begin{bmatrix}1&0&0&0\\0&0&1&1\\0&0&1&1\\0&-1&-2&-2\end{bmatrix},\quad
\gamma'=\begin{bmatrix}1/3\\-1/3\\-1/3\\0\end{bmatrix}
\end{displaymath}
The value of $\gamma'$ results from the computation of
$\gamma=\u(\mathbf{1})=\frac{1}{3}(\u(\mathbf{1})+\u(\mathbf{2})+\u(\mathbf{3}))-\frac{1}{3}(\u(\mathbf{2})-\u(\mathbf{1}))
-\frac{1}{3}(\u(\mathbf{3})-\u(\mathbf{1}))$. The representations $\mu'_1$ and $\mu'_2$
are irreducible. This is obvious for $\mu'_1$. For $\mu'_2$ it
can be proved directly by verifying that the matrices $\mu'_2(w)$
generate the algebra of $3\times 3$-matrices over $K$ or deduced
as a consequence of \cite[Theorem 2.2]{Perrin2013}).
\end{example}

Let $S\subset A^*$ be a recognizable set, let $\sigma=\u(S)$ and let $\A=(Q,i,T)$ be its
minimal automaton. Set $\varphi=\varphi_\A$, $\psi=\psi_\sigma$
 and $M=\psi(A^*)$.
By \cite[Proposition 14.7.1]{BerstelPerrinReutenauer2009} for all $u,v\in A^*$, one has 
\begin{equation}
\varphi(u)=\varphi(v)\Leftrightarrow \psi(u)=\psi(v).\label{eqPhiPsi}
\end{equation}
In particular, $M$ and $\varphi(A^*)$ are isomorphic.

An element of $M=\psi(A^*)$ is a linear map from $V_\sigma$ into itself
and, as such, it has a kernel and an image which are subspaces of $V_\sigma$.

The \emph{eventual kernel} of $S$, denoted $EK(S)$, is the intersection of the kernels
of all elements of minimal nonzero rank of $M$.

Symmetrically, the \emph{eventual range} of $S$, denoted
$ER(S)$, is the subspace spanned by
 the images of  all elements of minimal nonzero rank of $M$.

Both $EK(S)$ and $ER(S)$ are invariant subspaces of $V_\sigma$. Indeed,
let $x\in EK(S)$ and let $w\in A^*$. For any $m\in M$ of minimal
nonzero rank, $\psi(w)m$ is either zero or has minimal nonzero rank .
In both cases $x\psi(w)m=0$. Thus $x\in EK(S)$. Similarly, let 
$x\in ER(S)$ and let $w\in A^*$. Since $x\in ER(S)$, we have
$x=\sum_{m\in J}x_mm$ where $J$ denotes the set of elements of minimal
nonzero rank in $M$ and $x_m\in V_\sigma$. Then
$x\psi(w)=\sum_{m\in J}v_mm\psi(w)$ and thus $x\psi(w)\in ER(S)$.

\begin{theorem}\label{theoremCharactCompleteRed}
A recurrent set is completely reducible
if and only if $EK(S)=\{0\}$ .
\end{theorem}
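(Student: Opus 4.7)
The plan is to prove both implications by studying the action of the $0$-minimal ideal $J$ of the syntactic monoid $M=\psi(A^*)$ on $V_\sigma$, exploiting the idempotents of $J$ and the invariance of both $EK(S)$ and $ER(S)$ already noted in the excerpt.

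For the forward direction, I would suppose $V_\sigma=\bigoplus_{i=1}^n V_i$ is a decomposition into invariant irreducible subspaces and write $\sigma=\sum\sigma_i$. Since $V_\sigma$ is cyclic, generated by $\sigma$ under the action of $A^*$, each $V_i=\sigma_i K[M]$ is cyclic and $\sigma_i\ne 0$. Because $EK(S)$ is invariant, $EK(S)=\bigoplus_i(EK(S)\cap V_i)$, with each intersection equal to $0$ or $V_i$ by irreducibility. If some $V_j\subseteq EK(S)$, then $V_j\cdot J=0$; viewed as a series, the nonzero $\sigma_j\in V_\sigma\subseteq K\langle\langle A\rangle\rangle$ then vanishes at every word $y$ admitting a factor $w$ with $\psi(w)\in J$. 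The recurrence of $S$ enters precisely to show that the evaluations $\mathrm{ev}_w$ at words $w$ with $\psi(w)\in J$ separate points of $V_\sigma$---equivalently, the characteristic vectors $\u(w\cdot T)$ with $\psi(w)\in J$ span $V_\sigma$---forcing $\sigma_j=0$, a contradiction.

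For the converse, assume $EK(S)=0$. Each idempotent $e\in J$ provides a direct-sum decomposition $V_\sigma=\ker(e)\oplus V_\sigma e$ with $V_\sigma e\subseteq ER(S)$. Intersecting the kernels of enough idempotents in $J$ and using $EK(S)=0$ yields $V_\sigma=ER(S)$. To finish, each summand $V_\sigma e$ is naturally a module over the local monoid $eMe$, whose nonzero part is the finite group of an $\mathcal{H}$-class of $J$; its group algebra over $\mathbb{Q}$ is semisimple by Maschke's theorem, so $V_\sigma e$ decomposes into irreducible $eMe$-submodules, and the standard correspondence between irreducible $eMe$-modules and irreducible $M$-modules whose restriction does not annihilate $e$ gives complete reducibility of $V_\sigma$ as an $M$-module.

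The main obstacle is the forward direction: proving, using recurrence alone, that the vectors $\u(w\cdot T)$ with $\psi(w)\in J$ already span $V_\sigma$, i.e.\ that every state of $\tilde{\A}^\delta$ can be realized as $w\cdot T$ with $\psi(w)$ of minimal nonzero rank. The idea is that for any word $u$ one can find some $v$ with $\psi(vu)\in J$ and such that $v$ stabilizes $u\cdot T$, which is where strong connectivity of $\A$ combines with the ideal structure of $J$.
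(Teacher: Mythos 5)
Your converse direction ($EK(S)=\{0\}$ implies complete reducibility) is essentially the intended argument: pick an idempotent $e$ of the $0$-minimal ideal $J$ of $M=\psi(A^*)$, apply Maschke's theorem to the group $eMe\setminus\{0\}$ acting on $V_\sigma e$, and transfer complete reducibility back to $V_\sigma$. But the transfer needs the two conditions $V_\sigma=V_\sigma eM$ and $\{v\mid vMe=0\}=0$; the first comes from recurrence (which gives $ER(S)=V_\sigma$, hence $V_\sigma=\sum_{m\in J}V_\sigma m\subseteq V_\sigma eM$ since $J=MeM$) and the second from $EK(S)=0$ (since $vMe=0$ implies $vJ=vMeM=0$). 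Your intermediate step ``intersecting the kernels of enough idempotents and using $EK(S)=0$ yields $V_\sigma=ER(S)$'' is not a valid inference in itself ($\bigcap_i\ker(e_i)=0$ does not force $\sum_i\mathrm{im}(e_i)$ to be everything); $ER(S)=V_\sigma$ should instead be derived directly from strong connectivity of the minimal automaton. These are repairable presentation issues.

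The forward direction, however, rests on a false lemma, and you correctly identified it as the main obstacle: for a merely recurrent set, the evaluations $\mathrm{ev}_w$ with $\psi(w)\in J$ do \emph{not} separate the points of $V_\sigma$, equivalently the vectors $\u(w\cdot T)$ with $\psi(w)\in J$ need not span the space generated by all $\u(u\cdot T)$. You have dualized to the wrong side: recurrence (strong connectivity of $\A$) yields that the \emph{row} vectors $\sigma\psi(y)$ with $\psi(y)\in J$ span $V_\sigma$, i.e.\ $ER(S)=V_\sigma$, whereas your claim concerns the \emph{column} vectors $\mu(w)\gamma=\u(w\cdot T)$; realizing every $u\cdot T$ inside the span of the $w\cdot T$ with $\psi(w)\in J$ is essentially the saturation condition of Theorem~\ref{theoremRev}, that is, birecurrence. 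Concretely, take $S=X^*$ with $X=\{b,aa,ab\}$: the minimal automaton has $Q=\{1,2\}$, $a$ acting as the transposition and $b$ as the constant map onto $1$, with $T=\{1\}$; it is strongly connected, $J\setminus\{0\}$ consists of the two constant maps, so $w\cdot T\in\{Q,\emptyset\}$ whenever $\psi(w)\in J$, and the nonzero vector $\sigma-\sigma\cdot a$ is annihilated by every $\mathrm{ev}_w$ with $\psi(w)\in J$. (This $S$ is not completely reducible, so it does not contradict the implication you are proving, but it refutes the lemma you claim follows from ``recurrence alone''; your proposed fix --- find $v$ with $\psi(vu)\in J$ stabilizing $u\cdot T$ --- already fails for $u=\varepsilon$ here, since no $v\cdot T$ with $\psi(v)\in J$ equals $\{1\}$.) The correct finish works on the other side: from $ER(S)=V_\sigma$ and complete reducibility, take an invariant complement $W'$ of $W=EK(S)\cap ER(S)=EK(S)$ and write $\sigma=\rho+\rho'$ with $\rho\in W$, $\rho'\in W'$; any $v\in W\subseteq ER(S)$ can be written $v=\sum_{m\in J}\alpha_m(\sigma m)=\sum_{m\in J}\alpha_m(\rho'm)\in W'$ because $\rho m=0$ for $m\in J$, whence $v\in W\cap W'=\{0\}$.
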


We first prove the following statement.

\begin{proposition}\label{propInter}
Let $S$ be a recognizable set. If $S$ is completely reducible,
then $EK(S)\cap ER(S)=\{0\}$.
\end{proposition}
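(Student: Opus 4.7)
The plan is to exploit the fact that ``$S$ completely reducible'' is equivalent to semisimplicity of the syntactic algebra $R=\psi(K\langle A\rangle)\subseteq\End(V_\sigma)$, and to reduce the statement to an elementary fact about complementary ideals in a semisimple algebra.

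First I would check that the $K$-linear span $I$ of $J$ inside $R$ is a two-sided ideal. This rests on the monoid-theoretic observation that $J\cup\{0\}$ is an ideal of $M=\psi(A^*)$: for any $m\in J$ and any $a\in M$, the products $am$ and $ma$ have rank at most that of $m$, hence either vanish or again lie in $J$. Passing to $K$-linear combinations immediately yields $RI,\,IR\subseteq I$, and bilinearity lets me rewrite
\[
ER(S)=V_\sigma I,\qquad EK(S)=\{v\in V_\sigma\mid vI=0\}.
\]

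Next, since $R$ is a finite-dimensional semisimple unital algebra, the Wedderburn decomposition produces a complementary two-sided ideal $I'$ with $R=I\oplus I'$ and $II'=I'I=0$; write $1=e+e'$ for the corresponding decomposition into central idempotents. I would then identify $EK(S)$ with $V_\sigma I'$: the inclusion $V_\sigma I'\subseteq EK(S)$ is immediate from $I'I=0$, and conversely, for $v\in EK(S)$, the decomposition $v=ve+ve'$ satisfies $ve=0$ since $e\in I$, so $v\in V_\sigma I'$.

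The conclusion $EK(S)\cap ER(S)=\{0\}$ then amounts to $V_\sigma I\cap V_\sigma I'=\{0\}$: any $v$ in this intersection can be written $v=ue$ for some $u\in V_\sigma$, so $ve'=u(ee')=0$; but $v\in V_\sigma I'$ also forces $v=ve'$, whence $v=0$. The only step requiring any care is the passage from the monoid-level fact that $J\cup\{0\}$ is an ideal of $M$ to the algebra-level statement that $I$ is an ideal of $R$; the rest is routine bookkeeping with central idempotents.
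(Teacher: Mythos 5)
Your proof is correct, but it runs along a genuinely different track from the paper's. The paper stays entirely at the level of the module $V_\sigma$: it takes $W=EK(S)\cap ER(S)$, uses complete reducibility only to produce an invariant complement $W'$ of $W$, writes $\sigma=\rho+\rho'$ accordingly, and then uses the cyclicity of $V_\sigma$ (every element of $ER(S)$ is a combination of the vectors $\sigma m$ with $m$ of minimal nonzero rank) to push any $v\in W$ into $W'$, forcing $v=0$. You instead move to the syntactic algebra $R=\psi(K\langle A\rangle)$, invoke the equivalence with semisimplicity that the paper records, observe that the span $I$ of $J$ is a two-sided ideal (via the standard rank argument showing $J\cup\{0\}$ is an ideal of the monoid), and split $R=I\oplus I'$ with central idempotents $1=e+e'$; the identifications $ER(S)=V_\sigma I=V_\sigma e$ and $EK(S)=\{v\mid vI=0\}=V_\sigma e'$ then make the disjointness a one-line computation with $ee'=0$. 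All the ingredients you use are legitimate: $R$ is unital and finite-dimensional, the Wedderburn decomposition of a two-sided ideal in a semisimple algebra does give a complementary ideal annihilating it, and your two identifications of $ER(S)$ and $EK(S)$ match the paper's definitions. What your route buys is the stronger and cleaner conclusion $V_\sigma=ER(S)\oplus EK(S)$, valid for any $R$-module, without needing that $V_\sigma$ is generated by $\sigma$; what the paper's route buys is economy of prerequisites, since it needs only the existence of invariant complements rather than the full structure theory of semisimple algebras.
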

\begin{proof}
Set $\sigma=\u(S)$ and $V=V_\sigma$. Let $J$ be the set
of elements of $M=\psi(A^*)$ of nonzero minimal rank.
Set $W=EK(S)\cap ER(S)$.
Since $W$ is invariant, and since $S$ is completely
reducible, there is a complement $W'$ of $W$ which is an invariant subspace
of $V$.  

Let $v\in W$. Then, as for any element of $ER(S)$,
we have $v=\sum_{m\in J}\alpha_m(\sigma m)$
for some $\alpha_m\in K$. Since $V=W\oplus W'$, there exist
$\rho\in W$ and $\rho'\in W'$ such that $\sigma=\rho+\rho'$.
Then
\begin{eqnarray*}
v&=&\sum_{m\in J}\alpha_m(\sigma m)=\sum_{m\in J}\alpha_m(\rho m)+\sum_{m\in J}\alpha_m(\rho' m)\\
&=&\sum_{m\in J}\alpha_m(\rho' m).
\end{eqnarray*}
Indeed, since $\rho\in W$, we have $\rho m=0$ for every $m\in J$.
This implies that $v\in W'$, and finally $v=0$.
Therefore we conclude that $EK(S)\cap ER(S)=\{0\}$.
\end{proof}

\begin{proofof}{ of Theorem~\ref{theoremCharactCompleteRed}}
Let $\A=(Q,i,T)$
be the minimal automaton of the recurrent set
 $S$. 
Set $\varphi=\varphi_\A$, $\sigma=\u(S)$,
 $\psi=\psi_\sigma$ and $M=\psi(A^*)$.

Assume first that the eventual kernel of $S$ is $0$.

It is enough to prove
the property under the hypothesis that $S$ contains the empty word. 
Indeed,  Let $S'$ be the set
recognized by the automaton $\A'=(Q,t,T)$ for some $t\in T$. Then
$S'$ is a  set recognized by a strongly
connected deterministic automaton and $S'$ contains the empty word. Since
$\A$ is strongly connected, $S$ is a residual of $S'$ and $S'$
is a residual of $S$. Thus the syntactic representations of $S$ and $S'$
only differ by the choice of the initial vector. Thus $S$ is completely reducible
whenever $S'$ is
and the eventual kernel of $S'$ is also $0$.

 Since $\A$ is strongly connected, the monoid
$\varphi(A^*)$ has a unique $0$-minimal ideal $K$ which is a regular
$\GD$-class plus $0$ (see Section~\ref{sectionBirecurrent}).
 Let $x\in A^*$ be such that $\varphi(x)$ is an idempotent
of $K$. 
We may assume that $i\in Q\cdot x$.
Indeed, since $\A$ is strongly connected, the state $i$ is, as any state of $Q$,  in the image of some word $w$
of minimal nonzero rank. Since $K\setminus 0$ is regular $\GD$-class, the $\GL$-class of $w$ contains
an idempotent which has the same image as $w$.
Since $\varphi(x)$ is idempotent, $i\in Q\cdot x$ implies that $i\cdot x=i$.
Since $M=\psi(A^*)$ is isomorphic to $\varphi(A^*)$,
$e=\psi(x)$ is an idempotent of $M$.

Set $V=V_\sigma$. We verify that the conditions of \cite[Corollary 1]{Perrin2013} are satisfied by $V$ and $e$.
\begin{enumerate}
\item[(i)] The subspace $Ve$ is completely reducible with respect
to the restriction of $\psi$ to $xA^*x$. This is true by Maschke's Theorem
asserting that
any linear representation
of a finite group on a field with characteristic zero is completely reducible.
Indeed, $\psi(xA^*x)$ is a finite group or a finite group union zero.
\item[(ii)] The space $V$ is generated by $VeM$. Indeed
$i\cdot x=i$ implies 
$\lambda e=\lambda$. Thus $VeM$ contains $\lambda eM=\lambda M$ which generates
$V$.

\item[(iii)] We have $\{v\in V\mid vMe=0\}=0$. Indeed, $vme=0$ if and
only if $v$ is in the kernel of $me$. Since $K$ is $0$-minimal,
we have $K=MeM$ and thus  the kernel
of any element of minimal rank is equal to the kernel
of some $me$. Hence we have $vMe=0$ if and only if
$v$ belongs to the intersection of the
kernels of the elements of $K$.
\end{enumerate}
Thus we can conclude that $\psi$ is completely reducible.

Conversely, since $S$ is recurrent, we have $ER(S)=V$
and thus $EK(S)=\{0\}$ by Proposition~\ref{propInter}.
\end{proofof}
We obtain as a corollary \cite[Theorem 5.2]{Perrin2013}, which
is a generalization of the result of~\cite{Reutenauer1981}
asserting that the submonoid generated by a bifix code is completely reducible.
\begin{corollary}\label{corollaryCompleteRed}
Any recognizable birecurrent set is completely reducible.
\end{corollary}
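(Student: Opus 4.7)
The plan is to invoke Theorem~\ref{theoremCharactCompleteRed}: since a birecurrent set $S$ is in particular recurrent, it suffices to prove that its eventual kernel $EK(S)$ is zero. The strategy is to exploit the symmetry built into birecurrence by relating $EK(S)$ to $ER(\tilde S)$ via duality, and then applying to $\tilde S$ the very fact used at the end of the proof of Theorem~\ref{theoremCharactCompleteRed}, namely that for a recurrent set the eventual range exhausts the syntactic space.

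Concretely, set $\sigma=\underline{S}$ and $\tilde\sigma=\underline{\tilde S}$. I would first recall that the minimal (equivalently, syntactic) representations of $\sigma$ and $\tilde\sigma$ are dual to one another: if $V_\sigma$ carries the right action of $A^*$ via $\psi_\sigma$, then $V_{\tilde\sigma}$ can be identified with the dual space of $V_\sigma$, with $\psi_{\tilde\sigma}(\tilde w)$ acting as the transpose of $\psi_\sigma(w)$. In particular, the rank of $\psi_\sigma(w)$ equals that of $\psi_{\tilde\sigma}(\tilde w)$, so words of minimal nonzero rank for $\sigma$ are exactly the reversals of those for $\tilde\sigma$.

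Next, for any endomorphism $m$ of a finite-dimensional space, $\ker(m)$ is the annihilator of the image of its transpose. Applying this to each $m=\psi_\sigma(w)$ of minimal nonzero rank and taking intersections/sums, I would obtain
\begin{equation*}
EK(S)=\bigcap_{m\in J}\ker(m)=\Bigl(\sum_{m\in J} \operatorname{Im}(m^{t})\Bigr)^{\!\perp}=ER(\tilde S)^{\perp},
\end{equation*}
the annihilator being taken in $V_\sigma$ via the identification $V_{\tilde\sigma}\simeq V_\sigma'$. Since $S$ is birecurrent, $\tilde S$ is recurrent, and the argument at the end of the proof of Theorem~\ref{theoremCharactCompleteRed} gives $ER(\tilde S)=V_{\tilde\sigma}$. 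Its annihilator is $\{0\}$, hence $EK(S)=\{0\}$, and Theorem~\ref{theoremCharactCompleteRed} yields the complete reducibility of $S$.

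The main obstacle is verifying the duality cleanly: one must match the syntactic representation of $\tilde\sigma$ with the transpose-plus-reversal of that of $\sigma$, and check that the set of elements of minimal nonzero rank in $\psi_{\tilde\sigma}(A^*)$ is in bijection (via this duality) with the corresponding set in $\psi_\sigma(A^*)$, so that the equality $EK(S)=ER(\tilde S)^{\perp}$ really holds term by term. Once the duality is set up, the rest is a short annihilator computation combined with the recurrent-case fact $ER=V$ already established in the proof of Theorem~\ref{theoremCharactCompleteRed}.
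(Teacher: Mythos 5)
Your proof is correct, but it follows a genuinely different route from the paper's. The paper's argument works in the minimal automaton $\A=(Q,i,T)$: it invokes Theorem~\ref{theoremRev} to get that $T$ is saturated by a word of minimal nonzero rank, extracts from the $0$-minimal ideal an idempotent $e$ with $e\gamma=\gamma$ (where $\gamma$ is the characteristic vector of $T$), and then shows by a direct computation ($vm\gamma=vme\gamma=0$) that any vector annihilated by the $0$-minimal ideal is orthogonal to the column space spanned by the $\mu(w)\gamma$, hence vanishes in the minimal representation; this gives $EK(S)=0$. You instead bypass Theorem~\ref{theoremRev} entirely and exploit the transpose duality between the syntactic representations of $S$ and $\tilde S$: since $\psi_{\tilde\sigma}(\tilde w)$ is (up to conjugacy) the transpose of $\psi_\sigma(w)$, ranks match, the sets of minimal nonzero rank elements correspond, and the left null space of each such element is the annihilator of the image of its transpose, whence $EK(S)=ER(\tilde S)^{\perp}$; recurrence of $\tilde S$ then gives $ER(\tilde S)=V_{\tilde\sigma}$ (the fact the paper itself uses in the converse direction of Theorem~\ref{theoremCharactCompleteRed}, justified there by the existence of a nonzero idempotent $e$ in the $0$-minimal ideal fixing $\sigma$, plus invariance of the eventual range). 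The duality you need to set up is standard and unproblematic: the Hankel matrices of $\sigma$ and $\tilde\sigma$ have the same rank, so the transposed-and-reversed representation of $\sigma$ is a minimal representation of $\tilde\sigma$, hence conjugate to $\psi_{\tilde\sigma}$. What your approach buys is conceptual symmetry — it makes visible that the two recurrences play dual roles ($S$ recurrent to apply Theorem~\ref{theoremCharactCompleteRed}, $\tilde S$ recurrent to fill the eventual range) — at the cost of having to verify the identification $V_{\tilde\sigma}\simeq V_\sigma'$ carefully; the paper's proof is more computational but self-contained within the automaton $\A$, using the saturation of $T$ as the combinatorial input.
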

\begin{proof}
Let $\A=(Q,i,T)$ be the minimal automaton of the set $S$.
 Set $V=K^Q$ and let $(\lambda,\mu,\gamma)$
be the linear representation associated to the automaton $\A$.
Thus $\gamma=\u(T)$. The minimal representation of $\u(S)$
is, as seen before, the restriction $\mu'$ of $\mu$ to the subspace $V'$
of column vectors generated by the vectors $\mu(w)\gamma$.
We consider $\mu'$ as a representation acting on the right
on a supplementary subspace $W$ of the ortogonal of $V'$.

By Theorem~\ref{theoremRev}, $T$ is saturated by a word $x$ of minimal
nonzero rank. Let $K$ be the $0$-minimal ideal of 
the monoid $M=\mu(A^*)$. Then $\mu(x)\in K$.
Let $e$ be an idempotent of $K$ in the $\GR$-class of $\mu(x)$
and let $y$ be such that $\mu(y)=e$. Then
$\varphi_\A(y)$ is an idempotent.
Since $\mu(x)$ and $\mu(y)$ belong
to the same $\GR$-class, $x$ and $y$ have the
same kernel. Thus $T$ is a union of classes of the kernel of $y$,
which implies $y\cdot T=T$ and thus
$e\gamma=\gamma$.

Let $v\in W$ be such that $vm=0$ for every $m\in K$.
Then, for any $m\in M$, we have $vm\gamma=vme\gamma$ since $e\gamma=\gamma$.
But since $me\in K\cup 0$, we have $vme\gamma=0$
and thus $vm\gamma=0$ for all $m\in M$. This shows that
$v$ is orthogonal to $V'$, and thus that $v=0$. 

We conclude that $EK(S)=0$ and thus that $S$ is completely
reducible by Theorem~\ref{theoremCharactCompleteRed}.
\end{proof}
Note the following precision on Theorem~\ref{theoremCharactCompleteRed}.
Let $M$ be a finite monoid having a unique $0$-minimal ideal $J$.
Then all $\GH$-classes of $J$ which are groups are isomorphic.
The \emph{Suschkevitch group} of $M$ is any of them
(see~\cite{BerstelPerrinReutenauer2009} or~\cite{RhodesSteinberg2009}).
\begin{proposition}\label{propositionNumberIrred}
Let $S$ be a recurrent completely reducible set, let $\sigma=\u(S)$ and let $\psi=\psi_\sigma$. The number of irreducible constituents of $\psi$
is equal to the number of irreducible contituents of its
restriction to the Suschkevitch group of $\psi(A^*)$.
\end{proposition}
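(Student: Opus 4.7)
The plan is to use the complete reducibility hypothesis to decompose $V_\sigma = \bigoplus_{i=1}^n V_i$ into irreducible invariant subspaces under the action of $M = \psi(A^*)$, and then to transport this decomposition through the projection by the idempotent $e$ which serves as the identity of the Suschkevitch group $H$. Since $eMe = H \cup \{0\}$, restricting $\psi$ to $H$ amounts to the $KH$-module structure on $V_\sigma e$, and the goal is to exhibit $V_\sigma e = \bigoplus_{i=1}^n V_i e$ as a decomposition into exactly $n$ nonzero irreducible $KH$-submodules.

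The starting input is Theorem~\ref{theoremCharactCompleteRed}: since $S$ is recurrent and completely reducible, $EK(S) = 0$. This is the only place where the full hypothesis enters, and everything else is structural.

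The main step, which I expect to be the principal obstacle, is to show that $V_i e \neq 0$ for every $i$. The argument would go by contradiction. Since $MeM$ is a nonzero ideal contained in the $0$-minimal ideal $J$, one has $MeM = J$, so any $m \in J \setminus \{0\}$ factors as $m = xey$ with $x, y \in M$. Using the invariance $V_i x \subseteq V_i$, this yields
\[
V_i m = V_i x e y \subseteq V_i e y = (V_i e) y = 0,
\]
so $V_i \subseteq \bigcap_{m \in J \setminus \{0\}} \ker(m) = EK(S) = 0$, contradicting the irreducibility of $V_i$.

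The final step is the standard Schur/Green type argument: each $V_i e$ is irreducible as an $eMe$-module (hence as a $KH$-module). Indeed, for any nonzero $eMe$-submodule $W$ of $V_i e$, one has $We = W$, and the $M$-submodule $WM$ of $V_i$ is nonzero, hence equal to $V_i$ by irreducibility; therefore $V_i e = WMe = W(eMe) \subseteq W$, forcing $W = V_i e$. Since $e$ is a linear projection on $V_\sigma$, applying it to $V_\sigma = \bigoplus V_i$ yields the decomposition $V_\sigma e = \bigoplus V_i e$ of the restriction of $\psi$ to $H$ into $n$ nonzero irreducible $KH$-submodules, completing the proof.
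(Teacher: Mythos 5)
Your proof is correct and follows the same overall route as the paper: decompose $V_\sigma=\oplus_{i=1}^n V_i$ into irreducibles and show that applying the idempotent $e$ of the Suschkevitch group yields a decomposition $V_\sigma e=\oplus_{i=1}^n V_ie$ into $n$ nonzero irreducible modules for the group. The difference is that the paper obtains the two key facts (each $V_ie$ is nonzero, and irreducible over the group) by invoking \cite[Corollary 1]{Perrin2013} as a black box, whereas you prove them directly: nonzeroness from $MeM=J$ together with $EK(S)=\{0\}$ given by Theorem~\ref{theoremCharactCompleteRed}, and irreducibility of $V_ie$ over $eMe$ by the standard $W\mapsto WM$ argument. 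Both steps are sound: the factorization $m=xey$ for $m\in J\setminus\{0\}$ is legitimate because $MeM$ is a nonzero ideal contained in the $0$-minimal ideal $J$, and $eMe$-invariance coincides with $H$-invariance since the possible extra element $0$ of $eMe$ acts as the zero map and so preserves every subspace. What your version buys is a self-contained argument that makes explicit where the hypotheses \emph{recurrent} and \emph{completely reducible} enter, namely only through $EK(S)=\{0\}$; what the paper's version buys is brevity, at the cost of deferring the nondegeneracy of the $V_ie$ to the hypotheses of the cited corollary.
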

\begin{proof}
Let $x\in A^*$ be a word such that $e=\psi(x)$
is an idempotent of the $0$-minimal ideal of the monoid
$\psi(A^*)$. Set $V=V_\sigma$. The restriction of $\psi$ to
group $G=\psi(xA^*x)$ is a representation of $G$
on $Ve$. Moreover, by \cite[Corollary 1]{Perrin2013}, if
$V=\oplus_{i=1}^mV_i$ is a decomposition of $V$ in irreducible subspaces,
then $Ve=\oplus_{i=1}^mV_ie$ is a decomposition of $Ve$ in irreducible
subspaces. Thus
the
number of irreducible components of $\psi$ is equal to the
number of irreducible components of the restriction of $\psi$ to $G$.
\end{proof}
Note that, in particular, if the Suschkevitch group of $\psi(A^*)$ is
trivial, then $\psi$ is irreducible.

The following example shows that the hypothesis of Theorem~\ref{theoremCharactCompleteRed} can be satisfied by a set which is not birecurrent.
\begin{example}\label{example34}
 Consider again  the strongly connected automaton $\A$ of
Figure~\ref{figureRev} on the left with its deterministic reversal on the right
(Example~\ref{exampleBirecurrent1}). We change the automaton
$\A$ into an automaton $\A'$ by choosing this time $T'=\{2\}$
as set of terminal states. The automata $\A'$ and $\tilde{\A'}^\delta$
are represented in Figure~\ref{figureRevBis}. Let $S'$ be the set recognized by
$\A'$. Since $\tilde{\A'}^\delta$ is not strongly connected, $S'$ is not
birecurrent. 
\begin{figure}[hbt]
\centering\gasset{Nadjust=wh}
\begin{picture}(80,20)
\put(0,0){
\begin{picture}(30,20)(0,-5)

\node[Nmarks=i](1)(0,0){$1$}\node[Nmarks=f,fangle=0](2)(20,0){$2$}

\drawloop(1){$a$}\drawedge[curvedepth=5](1,2){$b$}\drawedge[curvedepth=5](2,1){$a$}
\end{picture}
}
\put(40,0){
\begin{picture}(30,20)(0,-5)
\node[Nmarks=i,iangle=180](2)(0,0){$2$}
\node[Nmarks=f,fangle=90](1)(20,0){$1$}\node[Nmarks=f,fangle=90](12)(40,0){$1,2$}
\node[ExtNL=y,NLangle=-90,NLdist=2](2)(0,0){$\mathbf{1}$}
\node[ExtNL=y,NLangle=-90,NLdist=2,Nframe=n](1)(20,0){$\mathbf{2}$}

\drawedge(2,1){$b$}
\drawloop[loopangle=0](12){$a$}
\drawedge[curvedepth=5,eyo=2](1,12){$a$}
\drawedge[curvedepth=5,sxo=2](12,1){$b$}
\node[ExtNL=y,NLangle=-90,NLdist=2,Nframe=n](12)(40,0){$\mathbf{3}$}
\end{picture}
}
\end{picture}
\caption{The automata $\A'$ and $\tilde{\A'}^\delta$ in Example~\ref{example34}.}\label{figureRevBis}
\end{figure}
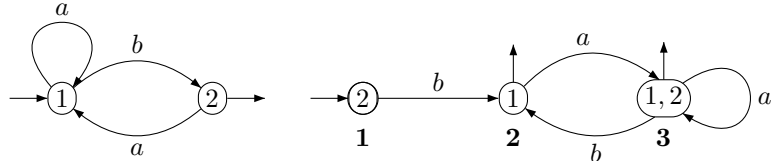
\begin{figure}[hbt]
\begin{displaymath}
\def\rb{\hspace{2pt}\raisebox{0.8ex}{*}}\def\vh{\vphantom{\biggl(}}
    \begin{array}%
    {r|@{}l@{}c|@{}l@{}c|}%
    \multicolumn{1}{r}{}&\multicolumn{2}{c}{1}&\multicolumn{2}{c}{2}\\
    \cline{2-5}
    1,2&\vh\rb &a &\vh\rb  &ab\\
    \cline{2-5}
   1&\vh\rb &ba & & b \\
   \cline{2-5}

    \end{array}
\qquad
\def\rb{\hspace{2pt}\raisebox{0.8ex}{*}}\def\vh{\vphantom{\biggl(}}
    \begin{array}%
    {r|@{}l@{}c|@{}l@{}c|}%
    \multicolumn{1}{r}{}&\multicolumn{2}{c}{\mathbf{2}}&\multicolumn{2}{c}{\mathbf{1}}\\
    \cline{2-5}
    \mathbf{1},\mathbf{2}&\vh\rb &a &\vh\rb  &ab\\
    \cline{2-5}
   \mathbf{2}&\vh\rb &ba & & b \\
   \cline{2-5}

    \end{array}
\end{displaymath}
\caption{The $0$-minimal ideals of $\A$ and $\tilde{\A}^\delta$
in Example~\ref{exampleBirecurrent1}.}\label{figureMinimalIdeal01}
\end{figure}

The $0$-minimal ideal of the monoid $\varphi_\A(A^*)=\varphi_{\A'}(A^*)$ is represented
in Figure~\ref{figureMinimalIdeal01} on the left
and the set $Q'$ of states
of $\tilde{\A'}^\delta$ in Table~\ref{tableStatesa'1}.
\begin{table}[hbt]
\begin{displaymath}
\begin{array}{c|ccc}
\tilde{Q'}&\mathbf{1}&\mathbf{2}&\mathbf{3}\\\hline
         &2    &1     &1,2
\end{array}
\end{displaymath}
\caption{The states of the automaton $\tilde{\A'}^\delta$ in Example~\ref{example34}.}\label{tableStatesa'1}
\end{table}
Except $\{2\}$, all states of $\tilde{\A'}^\delta$ are classes of the kernel
of an element of minimal rank (see Figure~\ref{figureMinimalIdeal01}).
The space generated by the characteristic functions of these states
has dimension $2$, as for the same space corresponding to $\A'$.
Indeed $T=\{1\}$ is an element of $\tilde{Q'}$ and
$\u(T')$ is in the space generated by the characteristic
functions of the states of $\A$ since
$\u(\{2\})=\u(\{1,2\})-\u(\{1\})$.
Thus $S'$ is completely reducible althought it is not birecurrent.
\end{example}
We now prove a second corollary of Theorem~\ref{theoremCharactCompleteRed}
(actually of Proposition~\ref{propInter}).
It shows that a completely reducible set which is dense satisfies
a property which is well-known for the submonoid generated
by a maximal bifix code. 
\begin{corollary}\label{corollary2}
Let $S$ be a  set recognized by a strongly connected
unambiguous finite automaton
$\A$. If $S$ is
completely reducible and dense, then $\varphi_\A(S)$ meets every
$\GH$-class of the minimal ideal of the monoid $\varphi_\A(A^*)$.
\end{corollary}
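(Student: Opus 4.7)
The plan is to reverse the logic of the proof of Corollary~\ref{corollaryCompleteRed}: there one infers $EK(S)=\{0\}$ from a saturation property of $T$, whereas here we already know $EK(S)=\{0\}$ (by Theorem~\ref{theoremCharactCompleteRed}, since $S$ is completely reducible and recurrent) and we exploit this together with density to force $\varphi_\A(S)$ to meet every $\GH$-class of the minimal ideal.

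Set up the linear representation $(\lambda,\mu,\gamma)$ of $\A$ on $V=K^Q$ with $\lambda=\underline I$ and $\gamma=\underline T$. Unambiguity of $\A$ guarantees $\lambda m\gamma\in\{0,1\}$ for every $m\in M_\A=\varphi_\A(A^*)$, with value $1$ exactly when $m\in\varphi_\A(S)$. The column subspace $V'=\operatorname{span}\{\mu(w)\gamma:w\in A^*\}$ is $\mu$-invariant, and the restriction $\mu|_{V'}$ is, by Step~3 of the minimization algorithm, dual to the syntactic representation $\psi_\sigma$; hence $\mu|_{V'}$ is completely reducible. As a preliminary step, I would use density to produce an element of $\varphi_\A(S)\cap K_\A$: for any $m_0\in K_\A$ pick $w_0$ with $\varphi_\A(w_0)=m_0$; density yields $u,v$ with $uw_0v\in S$, and $\varphi_\A(uw_0v)\in K_\A\cap\varphi_\A(S)$ since $K_\A$ is a two-sided ideal.

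Now fix an $\GH$-class $H$ of $K_\A$ and assume for contradiction that $\varphi_\A(S)\cap H=\emptyset$. Pick an idempotent $e$ in the $\GL$-class $L$ of $H$, so that $H=h_0H_e$ for some $h_0\in H$ and the Suschkewitch group $H_e$. The assumption then reads $\lambda h_0g\gamma=0$ for every $g\in H_e$. By complete reducibility the space $V'e$ decomposes into irreducible $H_e$-modules (cf.\ Proposition~\ref{propositionNumberIrred}), and averaging, the element $\sum_{g\in H_e}g$ acts on each summand as $|H_e|$ times the projector onto its $H_e$-fixed subspace. Combining the uniform vanishing $\lambda h_0g\gamma=0$ over $g\in H_e$ with the preliminary nonvanishing $\lambda m^*\gamma=1$ at some $m^*\in\varphi_\A(S)\cap K_\A$, the plan is to extract a nonzero vector of $V'$ annihilated by every element of $K_\A$; pushed to the syntactic quotient this is a nonzero element of $EK(S)$, contradicting $EK(S)=\{0\}$.

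The main obstacle lies precisely in making the averaging and propagation step rigorous: one must control the matrix coefficients of the irreducible $H_e$-constituents of $V'e$ and combine the vanishing across the single coset $h_0H_e$ with a single nonvanishing instance to extract the desired kernel vector. A more elementary variant, which bypasses the character-theoretic averaging, is to let $h_0$ range so that the cosets $h_0H_e$ exhaust the $\GH$-classes of $L$; the hypothetical failure at $H$, together with the transitive action of $M_\A$ on the $\GH$-classes via left multiplication by the $\mathcal{R}$-index, then upgrades to a uniform vanishing $\lambda m V'=0$ for all $m\in L$, and, propagating through the $\mathcal{D}$-class structure, for all $m\in K_\A$, directly producing the required element of $EK(S)$.
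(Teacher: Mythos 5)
There are two genuine gaps here. First, your opening move invokes Theorem~\ref{theoremCharactCompleteRed} on the grounds that $S$ is ``completely reducible and recurrent'', but recurrence is not among the hypotheses and does not follow from them: a set recognized by a strongly connected \emph{unambiguous} automaton need not have a strongly connected minimal deterministic automaton (this is precisely the issue addressed by Theorem~\ref{theoremCR}, and the automaton of Example~\ref{exampleUnambig1} already recognizes a non-recurrent set). Consequently you are not entitled to $EK(S)=\{0\}$. The tool that actually applies is Proposition~\ref{propInter}, which needs no recurrence but yields only $EK(S)\cap ER(S)=\{0\}$; that weaker statement suffices, but it changes the target: you must exhibit a nonzero vector lying in \emph{both} the eventual kernel and the eventual range.

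Second, and more seriously, the step that would close the argument is missing, as you yourself acknowledge. Working with a single $\GH$-class $H$ and trying to manufacture from $\varphi_\A(S)\cap H=\emptyset$ a vector killed by the whole minimal ideal cannot succeed as stated: in your ``more elementary variant'', the vanishing $\lambda h\gamma=0$ for $h\in H$ does not propagate along the $\GL$-class, because $\lambda(mh)\gamma=(\lambda m)h\gamma$ says nothing about $\lambda h'\gamma$ for $h'$ in another $\GH$-class; transitivity of the left action on $\GH$-classes does not transport the vanishing. The argument that works compares \emph{two} $\GH$-classes $H,K$ contained in the same $\GR$-class $R$ of the minimal ideal $J$ of the syntactic monoid: since $Hm=Km=R\cap Mm$ for every $m\in J$ (Green's lemma), the vector $\sigma\u(H)-\sigma\u(K)$, where $\u(H)=\sum_{m\in H}m$, lies in $EK(S)$, and it visibly lies in $ER(S)$; Proposition~\ref{propInter} forces it to vanish, and evaluating the coefficient of $\varepsilon$ gives $\Card(H\cap\psi(S))=\Card(K\cap\psi(S))$. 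Density is then used only to exhibit one $\GH$-class of $J$ meeting $\psi(S)$, and the symmetric argument applied to $\tilde{S}$ handles $\GH$-classes lying in the same $\GL$-class. Your averaging over $H_e$ does compute the right quantity, namely $\lambda\u(H)\gamma=\Card(H\cap\varphi_\A(S))$, but applied to one class in isolation it yields no contradiction; the comparison between classes is the missing idea.
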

\begin{proof}
We use the same notation as in the proof of Theorem~\ref{theoremCharactCompleteRed}. Since $\A$ is strongly connected and $S$ is dense,
the monoid
 $\varphi(A^*)$ does not contain $0$
and has a unique minimal ideal  which is a union of groups.
Moreover $\varphi(S)$ meets the minimal ideal of $\varphi(A^*)$.
Since $M=\psi(A^*)$ is the image of $\varphi(A^*)$ by a morphism,
it has also a unique minimal ideal which meets $\psi(S)$.
Let $J$ be the minimal ideal of the monoid $M=\psi(A^*)$. 

For a subset $H$ of $M$, denote
$\u(H)=\sum_{m\in H}m$.

Note that for any $\GH$-class $H$ of $J$, $(\sigma\u(H),\varepsilon)=\Card(H\cap \psi(S))$. Indeed, let $m\in H$ and let $x\in A^*$ be such that
$\psi(x)=m$. Then $(\sigma m,\varepsilon)=(\sigma \psi(x),\varepsilon)=(\sigma,x)$
which is $1$ if $x\in S$ and $0$ otherwise.

Let $H,K$ be two $\GH$-classes of  $J$ of the
monoid $M$ contained in the same $\GR$-class $R$. Then the vector
$\sigma \u(H)-\sigma \u(K)$
belongs to the eventual range $ER(S)$. 

Moreover, it belongs to the eventual kernel $EK(S)$. Indeed, for every $m\in J$,
we have $Hm=Km$ since both are equal to the $\GH$-class $R\cap Mm$.
Thus, by Theorem~\ref{theoremCharactCompleteRed}, we have $\sigma \u(H)=\sigma \u(K)$.

By Proposition~\ref{propInter}, this forces $\sigma \u(H)-\sigma \u(K)=0$
for all $\GH$-classes $H,K$ included in the same $\GR$-class.

Since $\tilde{S}$ satisfies the same hypotheses as $S$ (using the
automaton $\tilde{\A}$), the conclusion follows.
\end{proof}

We give below an example illustrating Corollary~\ref{corollary2}
\begin{example}\label{example35}
Consider again the deterministic automaton $\A=(Q,i,T)$ of Example~\ref{example14}. We change
the automaton $\A$ into an automaton $\A'=(Q,i,T')$ by choosing
this time $T'=\{1,2,6\}$ as set of terminal states instead of $T=\{1,2,5\}$. Let $S'$
be the set recognized by $\A'$. The minimal ideal of the monoid
$\varphi_\A(A^*)=\varphi_{\A'}(A^*)$ is represented in Figure~\ref{figureMinIdeal}.
\begin{figure}[hbt]
\begin{displaymath}
\def\rb{\hspace{2pt}\raisebox{0.8ex}{*}}\def\vh{\vphantom{\biggl(}}
    \begin{array}%
    {r|@{}l@{}c|@{}l@{}c|@{}l@{}c|}%
    \multicolumn{1}{r}{}&\multicolumn{2}{c}{1/3}&\multicolumn{2}{c}{2/4}&\multicolumn{2}{c}{5/6}\\
    \cline{2-7}
    1,4,5/2,3,6&\vh\rb &b^2 &\vh\rb  &b^2a&\vh\rb &b^2ab\\
    \cline{2-7}
   1,2,5/3,4,6&\vh\rb &ab^2 &\vh\rb & &\vh\rb\ &\\
   \cline{2-7}
   1,4,6/2,3,5&\vh\rb& a^2b^2&\vh\rb& &\vh\rb &\\
\cline{2-7}
    \end{array}
\end{displaymath}
\caption{The minimal ideal in Example~\ref{example35}.}\label{figureMinIdeal}
\end{figure}
The set $\tilde{Q'}$
of states of the automaton $\tilde{\A'}^\delta$ is represented in
Table~\ref{tableStatesa'}.
\begin{table}[hbt]
\begin{displaymath}
\begin{array}{c|ccccccc}
\tilde{Q'}&\mathbf{1}&\mathbf{2}&\mathbf{3}&\mathbf{4}&\mathbf{5}&\mathbf{6}&\mathbf{7}\\\hline
         &1,2,6     &1,4,6     &3,4,6     &2,3,5     &1,4,5     &2,3,6     &1,2,5
\end{array}
\end{displaymath}
\caption{The states of the automaton $\tilde{\A'}^\delta$.}\label{tableStatesa'}
\end{table}
Except $\{1,2,6\}$, all states of $\tilde{\A'}^\delta$ are classes of the kernel
of an element of minimal rank (see Figure~\ref{figureMinIdeal}).
The space generated by the characteristic functions of these states
has dimension $4$, as for the same space corresponding to $\A'$.
Indeed $T=\{1,2,5\}$ is an element of $\tilde{Q'}$ and
$\u(T')$ is in the space generated by the characteristic
functions of the states of $\A$ since
$\u(\{1,2,6\})=\u(\{1,2,5\})-\u(\{1,4,5\})+\u(\{1,4,6\})$.
Thus $S'$ is completely reducible althought it is not birecurrent.
To see that $S'$ is not birecurrent, one may either use the fact that
$T'$ is not saturated by any word of minimal rank 
and apply Theorem~\ref{theoremRev} or
compute directly $\tilde{\A'}^\delta$ (see Figure \ref{figureA'}).
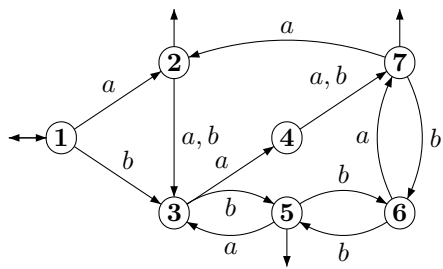
\begin{figure}[hbt]
\centering
\gasset{Nadjust=wh}
\begin{picture}(60,30)(0,-5)
\node[Nmarks=if,fangle=180](1)(0,10){$\mathbf{1}$}
\node[Nmarks=f,fangle=90](2)(15,20){$\mathbf{2}$}
\node(3)(15,0){$\mathbf{3}$}
\node(4)(30,10){$\mathbf{4}$}\node[Nmarks=f,fangle=-90](5)(30,0){$\mathbf{5}$}
\node(6)(45,0){$\mathbf{6}$}\node[Nmarks=f,fangle=90](7)(45,20){$\mathbf{7}$}

\drawedge(1,2){$a$}\drawedge(1,3){$b$}
\drawedge(2,3){$a,b$}
\drawedge(3,4){$a$}\drawedge[curvedepth=3,ELside=r](3,5){$b$}\drawedge[curvedepth=3](5,3){$a$}
\drawedge(4,7){$a,b$}\drawedge[curvedepth=3](5,6){$b$}\drawedge[curvedepth=3](6,5){$b$}
\drawedge[curvedepth=3](6,7){$a$}\drawedge[curvedepth=3](7,6){$b$}
\drawedge[curvedepth=-3,ELside=r](7,2){$a$}
\end{picture}
\caption{The automaton $\tilde{\A'}$}\label{figureA'}
\end{figure}
Note that the hypothesis of Theorem~\ref{theoremCharactCompleteRed}
is satisfied because it is already satisfied by the minimal
representation of $\u(S)$, which is the same as that
of $\u(S')$  except for the vectors 
$\gamma=\u(T)$ and $\gamma'=\u(T')$.
\end{example}
We finally deduce from Corollary~\ref{corollary2}
 the following result, originally  proved in~\cite{Reutenauer1981} (see also
\cite[Theorem 14.7.7]{BerstelPerrinReutenauer2009}).
\begin{corollary}
Let $X$ be a recognizable maximal code. If $X^*$ is completely reducible,
then $X$ is bifix.
\end{corollary}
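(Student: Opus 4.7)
The plan is to apply Corollary~\ref{corollary2} to $S=X^*$ and then translate the resulting monoidal meeting property into the combinatorial statement that $X$ is bifix.

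First, I would verify that $X^*$ is dense. Since $X$ is recognizable it is thin, and a thin maximal code is complete, that is, $A^*XA^*=A^*$. Hence for every $v\in A^*$ there exist $u,w\in A^*$ with $uvw\in X\subseteq X^*$, so $X^*$ is dense in $A^*$.

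Next, I would pick a strongly connected unambiguous finite automaton $\A$ recognizing $X^*$; for any recognizable maximal code such an automaton exists, the standard choice being the flower automaton of $X$ (see~\cite{BerstelPerrinReutenauer2009}). Since $X^*$ is dense and completely reducible, Corollary~\ref{corollary2} applies and gives that $\varphi_\A(X^*)$ meets every $\GH$-class of the minimal ideal $J$ of $M=\varphi_\A(A^*)$.

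The concluding step, and the main obstacle, is to turn this meeting property into the statement that $X$ is bifix. This rests on the classical equivalence, for a thin maximal code $X$: the subset $\varphi_\A(X^*)$ meets every $\GH$-class of $J$ if and only if $X$ is bifix. The direction needed here (meeting implies bifix) can be argued by contraposition. If $X$ failed to be prefix, one could exhibit a minimal-rank element $m\in J$ whose image inside $\A$ omits the initial/terminal state $i$; the $\GR$-class of such $m$ would then yield an entire $\GH$-class of $J$ that no loop of $\A$ at $i$, i.e.\ no word of $X^*$, could hit, contradicting the conclusion of Corollary~\ref{corollary2}. A symmetric argument applied to the reverse automaton $\tilde{\A}$ handles the suffix direction. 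Making all this precise requires the standard Sch\"utzenberger representation machinery for codes, as worked out in \cite[Theorem~14.7.7]{BerstelPerrinReutenauer2009}.
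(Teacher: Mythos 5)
Your proposal is correct and follows essentially the same route as the paper: establish density of $X^*$ from thinness and maximality, take a strongly connected trim unambiguous automaton recognizing $X^*$, invoke Corollary~\ref{corollary2} to get that $\varphi_\A(X^*)$ meets every $\GH$-class of the minimal ideal, and then appeal to the classical theory of thin maximal codes to conclude that $X$ is prefix and, symmetrically, suffix. The only difference is cosmetic: the paper deduces prefixness via right-density and a citation to \cite[Proposition 3.3.11]{BerstelPerrinReutenauer2009}, while you sketch the contrapositive of the same classical equivalence (with $\GR$ where $\GL$ would be the natural class to invoke, given that missing the terminal state kills an entire column class); both rest on the same standard machinery.
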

\begin{proof}
Let $\A$ be the minimal automaton of $X^*$. Since $X$
is a recognizable code, it is thin by~\cite[Proposition 2.5.20]{BerstelPerrinReutenauer2009}. Since $X$ 
is a thin maximal code, $X^*$ is dense by
\cite[Theorem 2.5.5]{BerstelPerrinReutenauer2009}.
As for any recognizable code, there is a trim
unambiguous finite automaton $\A=(Q,i,i)$
recognizing $X^*$ (see~\cite[Proposition 4.1.2]{BerstelPerrinReutenauer2009}).
Since $\A$ is trim with a unique initial state equal to the unique terminal
state, it is strongly connected.
By Corollary~\ref{corollary2}, $\varphi_\A(X^*)$ meets every $\GH$-class
of the minimal ideal of $\varphi_\A(A^*)$. This implies that
$X^*$ is recurrent and thus $X$ is prefix by~\cite[Proposition 3.3.11]{BerstelPerrinReutenauer2009}.
Symmetrically, $X$ is suffix, whence the conclusion.
\end{proof}

\subsection{A characterization of completely reducible sets}
\label{sectionCharacterizationCompletelyReducible}
In this section, we consider rational series over $\Q$
whose set of coefficients is finite. It is known that
the minimal representation $(\lambda,\mu,\gamma)$ of such a series 
satisfies the following finiteness property: the matrix monoid
$\mu(A^*)$ is finite. This follows from a theorem of Sch\"utzenberger,
see \cite[Corollary 2.3]{BerstelReutenauer2011}.

Moreover, such a series is a linear combination over $\Q$ of characteristic series
of rational languages. This follows from the fact that for any
$\alpha\in\Q$, the set $\{w\in A^*\mid (S,w)=\alpha\}$ is rational.
For this, see \cite[Theorem 2.10]{BerstelReutenauer2011},
another theorem of Sch\"utzenberger.

For a rational series $S$ whose set of coefficients is finite, we
may construct a special kind of deterministic automaton,
called an automaton \emph{with scalar output function}.
It is a deterministic automaton $\A=(Q,i,\tau)$ where
$\tau:Q\rightarrow \Q$ is a mapping called the \emph{terminal function},
which recognizes $S$ in the following sense. For each word $w$,
one has $(S,w)=\tau(i\cdot w)$. In other words, one reads $w$ on the automaton, 
starting from the initial state $i$ and one reaches a state $q=i\cdot w$.
The coefficient $(S,w)$ of $w$ in $S$ is $\tau(q)$. If there
is no path from $i$ labeled $w$, we let $(S,w)=0$.

The notions
of trim automaton and of minimal automaton extend easily to these automata.

 For a word $u\in A^*$
and a series $S$,
we denote
here $u^{-1}S$ the series defined by $(u^{-1}S,v)=(S,uv)$
(this series is denoted $S\cdot u$ in Section~\ref{sectionFormalSeries})
and symmetrically $Su^{-1}$ the series defined
by $(Su^{-1},v)=(S,vu)$.

There is also a Nerode criterium for these series. Indeed, a series $S$
is rational (and has a finite set of coefficients)
if and only if the set $\{u^{-1}S\mid u\in A^*\}$ is finite.

We say that a rational series with a finite number of coefficients
is \emph{recurrent} if its minimal automaton with scalar output function
is strongly connected. 

Clearly, a series $S$ is recurrent if and
only if it is recognized by a strongly connected automaton
with scalar output function. Moreover, if
$S$ is recurrent, then so are $u^{-1}S$ and $Su^{-1}$ for any
word $u$, since they are recognized by the minimal
automaton with output function of $S$.

It is called \emph{birecurrent} if $S$ and $\tilde{S}$
are both recurrent, where $\tilde{S}$ is the series such that
$(\tilde{S},w)=(S,\tilde{w})$ for all $w\in A^*$.
\begin{proposition}\label{propositionCR1}
Let $S$ be a birecurrent series. Then $S$ is a linear combination
over $\Q$ of characteristic series of birecurrent sets.
\end{proposition}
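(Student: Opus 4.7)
The plan is to decompose $S$ by its level sets and show that each nonzero level set is a birecurrent language, which will immediately express $S$ as the required linear combination.

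Since $S$ has only finitely many coefficients, I would write $S=\sum_{\alpha\in C}\alpha\,\u(S_\alpha)$, where $C$ is the finite set of nonzero values of $S$ and $S_\alpha=\{w\in A^*:(S,w)=\alpha\}$. Each $S_\alpha$ is rational by the theorem of Sch\"utzenberger recalled at the beginning of this section, so it remains to show that every $S_\alpha$ with $\alpha\neq 0$ is birecurrent.

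The key step, which I would isolate as a small lemma, is the following: if $T$ is a recurrent scalar-output series and $\alpha$ a nonzero coefficient of $T$, then the language $T_\alpha=\{w:(T,w)=\alpha\}$ is recurrent. The point is that if $u^{-1}T=u'^{-1}T$ as series residuals then $(T,uv)=(T,u'v)$ for every $v$, so $uv\in T_\alpha\Leftrightarrow u'v\in T_\alpha$, and therefore $u^{-1}T_\alpha=u'^{-1}T_\alpha$ as language residuals. The assignment $u^{-1}T\mapsto u^{-1}T_\alpha$ is then a well-defined surjective map from the minimal automaton of $T$ (strongly connected by hypothesis) onto the minimal automaton of $T_\alpha$, commuting with transitions and matching terminal states. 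Moreover every residual $u^{-1}T_\alpha$ that arises this way is nonempty, since recurrence of $T$ allows one to reach a state of output $\alpha$ from any state. Because the surjective image of a strongly connected deterministic automaton is strongly connected, $T_\alpha$ is recurrent.

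I would then conclude by applying the lemma first to $T=S$, yielding that $S_\alpha$ is recurrent, and then to $T=\tilde S$ (recurrent by the birecurrence hypothesis on $S$), yielding that $(\tilde S)_\alpha=\{w:(S,\tilde w)=\alpha\}=\widetilde{S_\alpha}$ is recurrent as well; hence $S_\alpha$ is birecurrent and the decomposition above realizes $S$ as a linear combination over $\Q$ of characteristic series of birecurrent sets. The main delicate point I anticipate is in the lemma: verifying that the level-set operation $T\mapsto T_\alpha$ really factors through series residuals and defines a genuine morphism of deterministic automata that preserves terminal states. Once this is in hand, the remainder of the proof is formal.
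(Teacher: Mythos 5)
Your proof is correct and follows essentially the same route as the paper: decompose $S$ over its finitely many nonzero level sets, observe that each level set is recognized by the strongly connected scalar-output automaton of $S$ with terminal states $\{q\mid\tau(q)=\alpha\}$, and apply the same argument to $\tilde S$ to get birecurrence. Your residual-based formulation of the key lemma is just a more explicit rendering of the paper's one-line appeal to the fact that a set recognized by a strongly connected deterministic automaton is recurrent.
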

\begin{proof}
Let $I=\{\alpha\in \Q\setminus 0\mid (S,w)=\alpha\mbox{ for some $w\in A^*$}\}$.
Then $I$ is finite. It is enough to show that for any $\alpha\in I$, the set
$L=\{w\in A^*\mid(S,w)=\alpha\}$ is recurrent.

Let $(Q,i,\tau)$ be a strongly connected automaton with scalar output function
recognizing $S$. Let
$(Q,i,T)$ be the automaton defined by $T=\{q\in Q\mid\tau(q)=\alpha\}$.
Then this automaton recognizes $L$. It is deterministic and strongly
connected and thus $L$ is recurrent.
\end{proof}
\begin{proposition}\label{propositionCR2}
Let $S$ be a completely reducible series. Then $S$ is a linear combination
over $\Q$ of birecurrent series.
\end{proposition}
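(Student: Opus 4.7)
The plan is to use complete reducibility to decompose $S$ into atomic pieces and then handle each piece using the structure of the $0$-minimal ideal of the syntactic monoid, in the spirit of Theorem~\ref{theoremRev}.

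First, I would set up the decomposition. Let $(\lambda,\psi,\gamma)$ be the minimal (syntactic) linear representation of $S$ on the syntactic space $V=V_\sigma$, where $\sigma=\u(S)$. By hypothesis $\psi$ is completely reducible, so $V=\bigoplus_{i=1}^n V_i$ with each $V_i$ a $\psi$-invariant irreducible subspace. Writing $\lambda=\sum\lambda_i$ along this decomposition and restricting $\gamma$ to each factor gives $S=\sum_{i=1}^n S_i$, where $S_i$ is the series recognized by $(\lambda_i,\psi|_{V_i},\gamma|_{V_i})$. By construction the minimal representation of each nonzero $S_i$ is (a quotient of) the irreducible $\psi|_{V_i}$, so I am reduced to the case where $S$ itself has an irreducible syntactic representation.

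Next, I would analyze this irreducible case via the minimal ideal. Let $M=\psi(A^*)$ be the (finite) syntactic monoid; because $\psi$ is irreducible, its $0$-minimal ideal $J$ acts non-trivially on $V$, and by the general structure theory recalled in Section~\ref{sectionBirecurrent} the non-zero part of $J$ is a regular $\GD$-class whose $\GH$-classes containing idempotents are groups all isomorphic to the Suschkevitch group. Fix an idempotent $e\in J$ and a word $x$ with $\psi(x)=e$. For each $m\in M$ I would introduce the auxiliary series $T_m$ obtained from $S$ by replacing the terminal linear form $\gamma$ by $me\gamma$. By the same argument used in the proof of Theorem~\ref{theoremRev}, the set of values of $T_m$ on residuals is organized by the kernel of the word $x$ (a word of minimal non-zero rank), and one can show that the minimal automaton with scalar output function of $T_m$ is strongly connected; applying the same analysis to $\tilde{T}_m$ (using the anti-isomorphism between the syntactic monoids of $T_m$ and $\tilde{T}_m$, noted already after Proposition~\ref{propositionIndex}) yields that $T_m$ is birecurrent.

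Finally, I would express the original $\gamma$ as a $\Q$-linear combination of the vectors $me\gamma$. Here irreducibility is essential: the $\Q$-span of $\{me\gamma\mid m\in M\}$ is an $M$-invariant subspace of the dual $V'$, and since $\psi$ is irreducible (so the dual representation on $V'$ is irreducible too) this span is either $0$ or all of $V'$. A short argument with the idempotent $e$ rules out $0$ (one has $e\gamma\neq 0$ whenever $\gamma$ is not annihilated by $J$, which is forced by minimality of the representation), so $\gamma=\sum_m\alpha_m\, me\gamma$ for scalars $\alpha_m\in\Q$, which gives $S=\sum_m\alpha_m T_m$, a linear combination of birecurrent series. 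Combined with the decomposition $S=\sum_i S_i$ of the first step, this completes the proof.

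The main obstacle will be the middle step: verifying that each $T_m$ is simultaneously recurrent in both directions. The recurrence of $T_m$ itself is essentially a re-run of the argument of Theorem~\ref{theoremRev} once one rephrases "saturation of $T$" in the scalar-output-function language (level sets instead of a single terminal set), but recurrence of $\tilde{T}_m$ requires more care, since one must check that the corresponding condition on the opposite monoid also holds; this is where the group structure of the $\GH$-classes of $J$ (and the fact that anti-isomorphic groups are isomorphic) is used, paralleling the symmetry argument used for the index in Proposition~\ref{propositionIndex}.
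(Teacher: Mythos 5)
Your first and last steps (reduction to an irreducible component, and expressing $\gamma$ as a $\Q$-linear combination of the vectors $me\gamma$ using irreducibility of the dual representation) are sound and parallel what the paper does. The gap is in the middle step: the series $T_m=(\lambda,\psi,me\gamma)$ need not be recurrent, because you have only adjusted the terminal linear form and left the initial vector $\lambda$ untouched. You invoke ``a re-run of the argument of Theorem~\ref{theoremRev}'', but that theorem takes recurrence of $S$ (strong connectedness of the forward minimal automaton) as a \emph{hypothesis} and derives from saturation of the terminal set only the strong connectedness of $\tilde{\A}^\delta$. Saturating the terminal side can never repair a failure of forward recurrence, which is governed by the orbit $\{\lambda\psi(w)\mid w\in A^*\}$: since by irreducibility the span of $\{\psi(w)me\gamma\mid w\in A^*\}$ is the whole column space, the states of the minimal automaton with scalar output of $T_m$ are exactly the distinct nonzero vectors $\lambda\psi(u)$, and recurrence of $T_m$ requires that from each of them one can return to $\lambda$ --- a condition that changing $\gamma$ does not affect.

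Here is a concrete failure. Let $A=\{a,b\}$ and $\mu(a)=e_{12}$, $\mu(b)=e_{21}$ (matrix units in $2\times 2$ matrices), so $\mu(A^*)=\{1,e_{11},e_{12},e_{21},e_{22},0\}$ acts irreducibly on $\Q^2$. Take $\lambda=(1,1)$ and $\gamma=(1,1)^t$. The resulting series $S$ has finitely many coefficients and this representation is minimal and irreducible, hence $S$ is completely reducible; but $S$ is not recurrent. With $e=e_{11}$, the vectors $me\gamma$ are $(1,0)^t$, $(0,1)^t$ and $0$, and your $T_m$ with terminal vector $(1,0)^t$ has states $(1,1)$, $(0,1)$, $(1,0)$; no word sends $(0,1)$ or $(1,0)$ back to $(1,1)$, since every nonidentity element of the monoid has rank at most $1$. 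So $T_m$ is not recurrent, let alone birecurrent. The paper avoids this by also moving the \emph{initial} vector: it replaces $\lambda$ by $\lambda\mu(u)$ for a word $u$ whose orbit $R(\lambda\mu(u))$ is a minimal invariant set (this is precisely what forces forward recurrence of the auxiliary series $T$), takes $\mu(v)\gamma$ on the terminal side symmetrically, and then recovers $S$ not by decomposing $\gamma$ but via polynomials $X,Y$ with $\lambda\mu(u)\mu(X)=\lambda$ and $\mu(Y)\mu(v)\gamma=\gamma$ (which exist by irreducibility), yielding $S=\sum_{x,y}(X,x)(Y,y)\,x^{-1}Ty^{-1}$, a combination of two-sided residuals of one birecurrent series. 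Your scheme could be repaired by likewise conjugating $\lambda$ into the minimal ideal, but as written the key claim that each $T_m$ is birecurrent is false.
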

\begin{proof}
Let $(\lambda,\mu,\gamma)$ be a minimal representation of $S$. Since
the set of coefficients of $S$ is finite, we know that $\mu(A^*)$ is a finite
monoid.

Since the representation is completely reducible, it is isomorphic
to a direct sum  of  representations $(\lambda_i,\mu_i,\gamma_i)$
which are irreducible over $\Q$.

It follows that $S$ is the corresponding sum of irreducible series. We may
therefore assume that $(\lambda,\mu,\gamma)$ is irreducible. Note
that $\mu(A^*)$ is finite.

The set $R(\lambda)=\{\lambda\mu(u)\mid u\in A^*\}$ is finite. One obtains a right action
of $A^*$ on this set by $\lambda\mu(u)\cdot w=\lambda\mu(uw)$. There
exists  therefore a word $u$ such that the set
$R(\lambda\mu(u))$ is a minimal invariant subset for this action.

Similarly, $A^*$ acts on the left
on the finite set $L(\gamma)=\{\mu(v)\gamma\mid v\in A^*\}$ by 
$w\cdot\mu(v)\gamma=\mu(wv)\gamma$. There exists similarly a word $v$
such that the set $L(\mu(v)\gamma)$ is minimal for this action.

Consider the linear representation $(\lambda\mu(u),\mu,\mu(v)\gamma)$.
It recognizes a rational series, call it $T$, whose set
of coefficients is finite since $\mu(A^*)$ is finite.

For this series $T$, we may construct the following deterministic automaton
with scalar output function.
Its set of states is $R(\lambda\mu(u))$, its initial
state is $\lambda\mu(u)$ and its terminal function is
defined by $\tau(\lambda')=\lambda'\mu(v)\gamma$ for any state
$\lambda'\in R(\lambda\mu(u))$.

Clearly this automaton recognizes $T$. Since its set of states
is a minimal invariant subset for the right action of $A^*$,
this automaton
is strongly connected. Thus $T$ is recurrent.

By symmetry, $\tilde{T}$ is also recurrent. Hence $T$ is birecurrent.

Turning back to the representation
$(\lambda,\mu,\gamma)$, there exists, since it is irreducible,
polynomials $X,Y\in \Q\langle A\rangle$ 
such that $\lambda\mu(u)\mu(X)=\lambda$ and $\mu(Y)\mu(v)\gamma=\gamma$
where the monoid morphism $\mu:A^*\rightarrow \Q^{n\times n}$ is extended to an
algebra morphism still denoted $\mu$ from $\Q\langle A\rangle$
into $\Q^{n\times n}$.

Finally, for any word $w$, one has
\begin{eqnarray*}
(S,w)=&=&\lambda\mu(w)\gamma\\
&=&\lambda\mu(u)\mu(X)\mu(w)\mu(Y)\mu(v)\gamma\\
&=&\lambda\mu(u)\left(\sum_{x\in A^*}(X,x)\mu(x)\right)\mu(w)\left(\sum_{y\in A^*}(Y,y)\mu(y)\right)\mu(v)\gamma\\
&=&\sum_{x,y}(X,x)(Y,y)\lambda\mu(u)\mu(x)\mu(w)\mu(y)\mu(v)\gamma\\
\end{eqnarray*}
Thus, since $T=\sum_w \lambda \mu(u)\mu(w)\mu(v) \gamma$, 
we have $x^{-1}Ty^{-1} = \sum_w \lambda \mu(u)\mu(x)\mu(w)\mu(y)\mu(v) \gamma$, 
and therefore $S=\sum_{x,y}(X,x)(Y,y)x^{-1}Ty^{-1}$. Since $T$ is birecurrent,
each $x^{-1}Ty^{-1}$ is birecurrent. Hence $S$ is a linear
combination of birecurrent series.
\end{proof}
\begin{proposition}\label{propositionCR3}
Let $S\in \Q\<<A>>$ be a rational series whose set of coefficients is finite. Then $S$ is completely reducible if and only if it is a linear
combination over $\Q$ of characteristic series
of birecurrent sets.
\end{proposition}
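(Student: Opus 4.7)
The plan is to prove the two implications separately, with the forward direction being an immediate concatenation of the two preceding propositions, and the backward direction requiring one small direct-sum construction plus the fact, already recorded in Section~\ref{sectionFormalSeries}, that a series admitting a completely reducible representation has a completely reducible syntactic representation.

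For the direction ``$S$ completely reducible $\Rightarrow$ $S$ is a $\Q$-linear combination of characteristic series of birecurrent sets'', I would simply chain Propositions~\ref{propositionCR2} and~\ref{propositionCR1}. Since $S$ has a finite set of coefficients and is completely reducible, Proposition~\ref{propositionCR2} writes $S=\sum_i \alpha_i T_i$ with each $T_i$ a birecurrent series. Each $T_i$ again has a finite set of coefficients (its set of coefficients is contained in the values of the terminal function of its minimal scalar-output automaton), so Proposition~\ref{propositionCR1} writes each $T_i$ as a $\Q$-linear combination of characteristic series of birecurrent sets. Substituting yields the claimed expression for $S$.

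For the converse, suppose $S=\sum_{i=1}^{n}\alpha_i\, \u(S_i)$ where each $S_i$ is a birecurrent set. By Corollary~\ref{corollaryCompleteRed} each $\u(S_i)$ is completely reducible; let $(\lambda_i,\mu_i,\gamma_i)$ be its syntactic representation, acting on an irreducible direct sum decomposition of the syntactic space $V_i$. Form the direct sum representation on $V=\bigoplus_i V_i$ by putting $\mu(w)=\bigoplus_i \mu_i(w)$, $\gamma=(\gamma_1,\ldots,\gamma_n)$, and $\lambda=(\alpha_1\lambda_1,\ldots,\alpha_n\lambda_n)$. Then for every $w\in A^*$,
\begin{equation*}
\lambda\mu(w)\gamma=\sum_{i=1}^{n}\alpha_i\,\lambda_i\mu_i(w)\gamma_i=\sum_{i=1}^{n}\alpha_i (\u(S_i),w)=(S,w),
\end{equation*}
so $(\lambda,\mu,\gamma)$ recognizes $S$. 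Since each $V_i$ decomposes as a direct sum of $\mu_i$-invariant irreducible subspaces, $V$ decomposes as a direct sum of $\mu$-invariant irreducible subspaces; hence this representation of $S$ is completely reducible. Invoking the general fact (recalled at the end of Section~\ref{sectionFormalSeries}) that a series admitting a completely reducible representation has a completely reducible syntactic representation, we conclude that $S$ is completely reducible.

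Neither step is hard: the forward direction is purely bookkeeping, and the only substantive point in the backward direction is the passage from a completely reducible \emph{recognizing} representation to a completely reducible \emph{syntactic} representation. I expect this to be the subtlest point, since it is the only place where the argument is not purely formal; however, it is already part of the paper's background on linear representations and can be quoted directly.
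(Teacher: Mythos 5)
Your proof is correct and follows the paper's route almost exactly: the forward direction is, as in the paper, just the concatenation of Propositions~\ref{propositionCR2} and~\ref{propositionCR1}, and the converse rests on Corollary~\ref{corollaryCompleteRed} together with the closure of complete reducibility under $\Q$-linear combinations. The one place you diverge is that the paper disposes of this closure property by citing \cite[Proposition 4.1]{Perrin2013}, whereas you prove it on the spot: you assemble the block-diagonal direct sum of the syntactic representations of the $\u(S_i)$, absorb the scalars $\alpha_i$ into the initial vector, observe that a direct sum of completely reducible representations is completely reducible, and then invoke the fact (recorded at the end of Section~\ref{sectionFormalSeries}) that the syntactic representation of a series admitting some completely reducible representation is itself completely reducible. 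This inline argument is sound --- the restriction of $\mu$ to each irreducible summand $V_{ij}\subset V_i$ coincides with that of $\mu_i$, so invariance and irreducibility persist in the ambient space --- and it makes the proposition self-contained at the cost of a few extra lines, where the paper's version is shorter but leans on an external reference.
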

\begin{proof}
Suppose that $S$ is completely reducible. Then the conclusion follows by Propositions \ref{propositionCR1} and \ref{propositionCR2}.

Conversely, suppose that $S$ is a linear combination over $\Q$
of characteristic series of birecurrent sets. 
By Corollary \ref{corollaryCompleteRed}, each such series is completely
reducible. By~\cite[Proposition 4.1]{Perrin2013}, a linear combination over $\Q$
of completely reducible series is completely reducible. Thus $S$
is completely reducible.
\end{proof}
We thus obtain as a main result of this section.
\begin{theorem}\label{theoremCR3}
A language is completely reducible if and only if its
characteristic series is $\Q$-linear combination of
characteristic series of birecurrent languages.
\end{theorem}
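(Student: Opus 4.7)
The plan is to observe that Theorem~\ref{theoremCR3} is the specialization of Proposition~\ref{propositionCR3} to rational series of the form $\underline{L}$. Since the characteristic series of a language takes values in $\{0,1\}$, its set of coefficients is trivially finite, so the hypothesis of Proposition~\ref{propositionCR3} is automatically satisfied. By definition, $L$ is completely reducible if and only if the syntactic representation of $\underline{L}$ is completely reducible, so the theorem asks exactly for the equivalence provided by Proposition~\ref{propositionCR3} applied to $\sigma = \underline{L}$.

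For the forward implication, I would string together Propositions~\ref{propositionCR2} and~\ref{propositionCR1}: starting from the completely reducible series $\underline{L}$, Proposition~\ref{propositionCR2} yields a $\Q$-linear expression of $\underline{L}$ as a sum of birecurrent series; then Proposition~\ref{propositionCR1} rewrites each birecurrent series appearing in this expression as a $\Q$-linear combination of characteristic series of birecurrent sets. Collecting terms gives the desired expression of $\underline{L}$.

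For the converse, I would use Corollary~\ref{corollaryCompleteRed}, according to which every birecurrent set has a completely reducible characteristic series, together with the fact (invoked in the proof of Proposition~\ref{propositionCR3} via \cite[Proposition 4.1]{Perrin2013}) that a $\Q$-linear combination of completely reducible series is itself completely reducible. Thus any $L$ whose characteristic series has such a decomposition is completely reducible.

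There is essentially no obstacle at this stage: all the real work has been carried out in Propositions~\ref{propositionCR1},~\ref{propositionCR2}, and~\ref{propositionCR3}, and in Corollary~\ref{corollaryCompleteRed}. The only ``step'' is to check that restricting from arbitrary rational series with finite coefficient set to characteristic series of languages loses no information in the statement, which follows because $\underline{L}$ satisfies the finiteness hypothesis by inspection.
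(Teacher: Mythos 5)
Your proposal is correct and follows essentially the same route as the paper: the theorem is stated there as an immediate specialization of Proposition~\ref{propositionCR3} to characteristic series of languages (whose coefficient set is trivially finite), with the forward direction obtained by chaining Propositions~\ref{propositionCR2} and~\ref{propositionCR1} and the converse by Corollary~\ref{corollaryCompleteRed} together with the stability of complete reducibility under $\Q$-linear combinations. Nothing is missing.
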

The following example shows that the linear combination
need not have coefficients in $\mathbb Z$.
\begin{example}
Let $\A=(Q,i,T)$ be the automaton represented in Figure~\ref{figureAutomatonQlin}
on the left with its reversal on the right.
\begin{figure}[hbt]
\centering
\gasset{Nadjust=wh}
\begin{picture}(70,30)(0,-6)
\put(0,0){
\begin{picture}(20,30)
\node[Nmarks=if,fangle=180](1)(0,15){$1$}\node(2)(20,15){$2$}\node(3)(10,0){$3$}

\drawloop(1){$a$}\drawedge(1,3){$c$}\drawloop[loopangle=-90,ELside=r](3){$c$}
\drawedge(3,2){$b$}\drawloop(2){$b$}\drawedge(2,1){$a$}
\end{picture}
}
\put(35,0){
\begin{picture}(20,40)
\node[Nmarks=if,fangle=180](1)(5,15){$1$}
\node[Nmarks=f,fangle=90](12)(20,15){$1,2$}\node(23)(40,15){$2,3$}\node[Nmarks=f,fangle=90](13)(30,0){$1,3$}

\drawedge(1,12){$a$}
\drawloop(12){$a$}\drawedge(12,23){$b$}\drawloop(23){$b$}
\drawedge(23,13){$c$}\drawloop[loopangle=-90,ELside=r](13){$c$}\drawedge(13,12){$a$}
\end{picture}
}
\end{picture}
\caption{A finite automaton and its deterministic reversal.}\label{figureAutomatonQlin}
\end{figure}
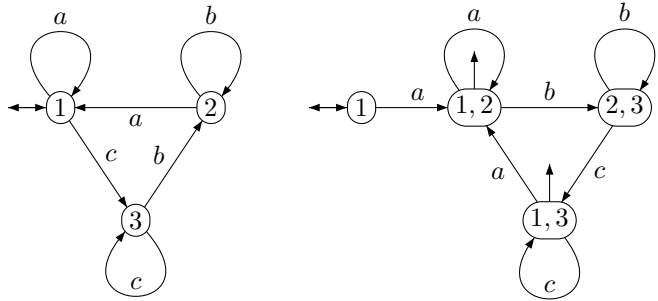

Let $X$ be the set recognized by $\A$. Since $\A$
is strongly connected, $X$ is recurrent. 
Since $\tilde{A}^\delta$ is not strongly connected,
$X$ is not birecurrent. Let $X_i$ be the
set recognized using, instead of $T=\{1\}$, the set of terminal states $T_i$
for $1\le i\le 3$ with $T_1=\{2,3\}$, $T_2=\{1,3\}$ and $T_3=\{1,2\}$.
Since these sets are saturated respectively by $b$, $c$ and $a$,
which have rank $1$,
the sets $X_i$ are birecurrent (this can also
be seen easily in Figure~\ref{figureAutomatonQlin}).
Since $\u(T)=\u(T_3)-\u(T_1)-\u(T_2)$, we have
$X=\frac{1}{2}(X_3-X_1+X_2)$. Note that, by Proposition~\ref{propositionNumberIrred}, the linear representation $(\lambda,\mu,\gamma)$ associated to the automaton
$\A$ is irreducible. Indeed, this representation is clearly minimal
and since the letters are of rank $1$, the Suschkevitch
group of the monoid $\mu(A^*)$ is trivial. Thus we obtain a decomposition
of an irreducible set $X$ as a $\Q$-linear combination of
birecurrent sets. We conjecture that $X$ cannot be obtained
as a linear combination of birecurrent sets with coefficients
in $\Z$.
\end{example}

\section{Unambiguous automata}\label{sectionUnambiguous}
In this section, we generalize some of the results concerning
birecurrent sets using unambiguous automata, which
are nondeterministic automata
closely linked with linear representation
of series. For an introduction
to this class of automata, see~\cite{BerstelPerrinReutenauer2009}.
\subsection{Unambiguous automata and linear representations}
An automaton $\A=(Q,I,T)$ is \emph{unambiguous} if for any word $w\in A^*$
there is at most one path from a state $i\in I$ to a state $t\in T$
labeled $w$.

A trim unambiguous automaton has the following property (used in~\cite{BerstelPerrinReutenauer2009}
as a definition of unambiguous automata).
For every word
$w\in A^*$ and every pair $(p,q)\in Q$ of states there is at most one
path from $p$ to $q$ labeled $w$. Indeed, since $\A$ is trim,
there is a path $i\edge{u}p$ with $i\in I$ and a path $q\edge{v}t$
with $t\in T$. Since there is at most one path from $i$ to $t$
labeled $uwv$, there is at most one path from $p$ to $q$ labeled $w$.

Clearly a deterministic automaton is unambiguous.

\begin{example}\label{exampleUnambig1}
The automaton $\A=(Q,1,1)$ represented in Figure~\ref{figureNonAmbig} is unambiguous.
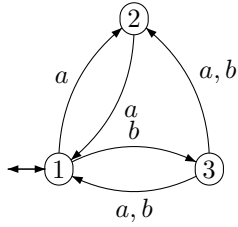
\begin{figure}[hbt]
\centering\gasset{Nadjust=wh}
\begin{picture}(20,25)(0,-5)
\node[Nmarks=if,fangle=180](1)(0,0){$1$}
\node(2)(10,20){$2$}
\node(3)(20,0){$3$}

\drawedge[curvedepth=3](1,2){$a$}\drawedge[curvedepth=3](2,1){$a$}
\drawedge[curvedepth=3](1,3){$b$}
\drawedge[curvedepth=3](3,1){$a,b$}\drawedge[curvedepth=-3,ELside=r](3,2){$a,b$}
\end{picture}
\caption{An unambiguous automaton.}\label{figureNonAmbig}
\end{figure}
One can check this by computing the automaton of pairs and by checking
that there is no path from a pair $(p,p)$ to a pair $(q,q)$ using a pair
$(r,s)$ with $r\ne s$.
\end{example}

To every unambiguous automaton $\A=(Q,I,T)$ recognizing a set $S$, we may associate a linear
representation recognizing the characteristic series $\u(S)$
of the set $S$. We consider,
as in Section~\ref{sectionPreliminaries}
a field $K$ 
and the vector space $V=K^Q$.
Consider the representation $(\lambda,\mu,\gamma)$
where $\lambda\in V$ is the characteristic function $\u(I)$ of $I$
considered as a row vector,
$\gamma=\u(T)$ considered as a column vector and for $a\in A$,
$\mu(a)$ is the $Q\times Q$-matrix $\mu_\A(a)$ defined by 
\begin{displaymath}
(p,\mu_\A(a),q)=\begin{cases}1&\mbox{if $p\edge{a}q$}\\0&\mbox{otherwise.}
\end{cases}
\end{displaymath}
Then $\mu$ extends to a morphism from $A^*$ into $\End(V)$ and $\lambda\mu(w)\gamma=(\u(S),w)$ for every $w\in A^*$. When $\A$ is trim, all the matrices $\mu_\A(w)$ for $w\in A^*$
have coefficients $0$ or $1$.
\begin{example}
The linear representation corresponding to the unambiguous
automaton of Figure~\ref{figureNonAmbig} is
\begin{displaymath}
\lambda=[1\ 0\ 0],\quad \mu(a)=\begin{bmatrix}0&1&0\\1&0&0\\1&1&0\end{bmatrix},
\quad \mu(b)=\begin{bmatrix}0&0&1\\0&0&0\\1&1&0\end{bmatrix},\quad
\gamma=\begin{bmatrix}1\\ 0\\ 0\end{bmatrix}.
\end{displaymath}
\end{example}
For a set $U\subset Q$ of states and a word $w$, we denote
\begin{eqnarray*}
U\cdot w&=&\{q\in Q\mid u\edge{w}q\mbox{ for some }u\in U\},\\
w\cdot U&=&\{q\in Q\mid q\edge{w}u\mbox{ for some }u\in U\}.
\end{eqnarray*}

Let $\A$ be an unambiguous automaton.
Recall that we denote by $\A^\delta$ the 
 determinization of $\A$. Thus the states of $\A^\delta$
are the nonempty sets $I\cdot w$ for $w\in A^*$.
The initial state is $I$ and the set of terminal states
is the set of $U=I\cdot w$ such that $U\cap T\ne\emptyset$.
Note that since $\A$ is unambiguous, $U\cap T$ contains at
most one element and that $\A$ and $\A^\delta$ recognize the same set of words.

When $\A$ is trim, the action of $A^*$ on the states of $\A^\delta$ is the
same as the right multiplication by the matrices $\mu_\A(w)$
on the characteristic vectors. Indeed, one has $U=I\cdot w$
if and only if $\u(U)=\u(I)\mu_\A(w)$.

Also recall that we denote by $\tilde{\A}^\delta$ the deterministic 
reversal of the automaton $\A$.

\begin{example}\label{exampleUnambig2}
Let $\A$ be the unambiguous automaton of Figure~\ref{figureNonAmbig}.
The automaton $\A^\delta$ is represented in Figure~\ref{figureLambdaA}
on the left and the automaton $\tilde{\A}^\delta$ on the right.
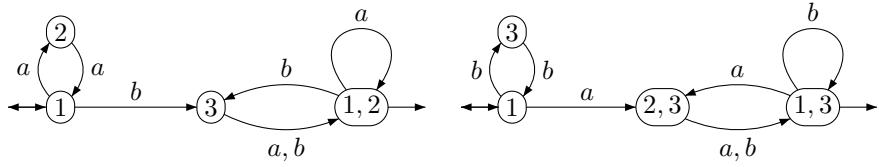
\begin{figure}[hbt]
\centering\gasset{Nadjust=wh}
\begin{picture}(100,20)
\put(0,0){
\begin{picture}(40,15)(0,-5)
\node[Nmarks=if,fangle=180](1)(0,0){$1$}
\node(2)(0,10){$2$}
\node(3)(20,0){$3$}
\node[Nmarks=f](4)(40,0){$1,2$}

\drawedge[curvedepth=3](1,2){$a$}\drawedge[curvedepth=3](2,1){$a$}
\drawedge(1,3){$b$}
\drawedge[curvedepth=-3,ELside=r](3,4){$a,b$}
\drawedge[curvedepth=-3,ELside=r](4,3){$b$}
\drawloop(4){$a$}
\end{picture}
}
\put(60,0){
\begin{picture}(40,15)(0,-5)
\node[Nmarks=if,fangle=180](1)(0,0){$1$}
\node(3)(0,10){$3$}
\node(23)(20,0){$2,3$}
\node[Nmarks=f](13)(40,0){$1,3$}
\drawedge[curvedepth=3](1,3){$b$}\drawedge[curvedepth=3](3,1){$b$}
\drawedge(1,23){$a$}
\drawedge[curvedepth=-3,ELside=r](23,13){$a,b$}
\drawedge[curvedepth=-3,ELside=r](13,23){$a$}
\drawloop(13){$b$}
\end{picture}
}
\end{picture}
\caption{The automata $\A^\delta$ and $\tilde{\A}^\delta$.}\label{figureLambdaA}
\end{figure}
\end{example}

The \emph{rank} of a word $w$ with respect to an unambiguous
automaton $\A$ is the rank of the linear map $\mu_\A(w)$.
This definition is consistent with the one given
for a deterministic automaton in Section~\ref{sectionBirecurrent}.
Indeed, when $\A$ is deterministic, the matrix $\mu_\A(w)$ is the
matrix of a partial map from $Q$ into itself and the
rank of $\mu_\A(w)$ is equal to the rank of the map.
The definition of rank for an unambiguous automaton
given in~\cite{BerstelPerrinReutenauer2009} is different
but equivalent to this one (see \cite[Exercise 9.3.2]{BerstelPerrinReutenauer2009}).

Let $\A=(Q,I,T)$ be a trim unambiguous automaton.
Then the monoid $\mu_\A(A^*)$ is formed of $\{0,1\}$-matrices and thus
it is finite.
As for a deterministic automaton, the set of elements of minimal
nonzero rank of the monoid $M=\mu_\A(A^*)$ is the unique $0$-minimal
ideal $J$ of $M$. It is the union of all $0$-minimal
right (resp. left) ideals. It is formed of a regular $\GD$-class, plus possibly
$0$. Each $\GR$-class of $J\setminus \{0\}$ is formed of elements which have
the same set of rows. Each $\GH$-class of $J\setminus \{0\}$ which is a group is a transitive
permutation group on the common set of rows of its elements.

\begin{example}\label{exampleUnambig3}
Let $\A$ be the automaton represented in Figure~\ref{figureNonAmbig}.
The minimal ideal $J$ of the monoid $M=\mu_\A(A^*)$ is represented in Figure~\ref{figureMinimalIdealUnambiguous}.
\begin{figure}[hbt]

\begin{displaymath}
\def\rb{\hspace{2pt}\raisebox{0.8ex}{*}}\def\vh{\vphantom{\biggl(}}
    \begin{array}%
    {r|@{}l@{}c|@{}l@{}c|}%
    \multicolumn{1}{r}{}&\multicolumn{2}{c}{3}&\multicolumn{2}{c}{1,2}\\
    \cline{2-5}
    2,3& \vh\rb &ab &\vh\rb  &aba \\
    \cline{2-5}
    1,3&\vh\rb &bab &\vh\rb& ba \\
    \cline{2-5}
    \end{array}
\end{displaymath}
\caption{The minimal ideal of $M$.}\label{figureMinimalIdealUnambiguous}
\end{figure}
There is no word of rank $0$ and the minimal rank is $1$. For example, we have
\begin{displaymath}
\mu_\A(a)=\begin{bmatrix}0&1&0\\1&0&0\\1&1&0\end{bmatrix},\quad
\mu_\A(ab)=\begin{bmatrix}0&0&0\\0&0&1\\0&0&1\end{bmatrix},
\end{displaymath}
the first one being of rank $2$ and the second of rank $1$.
We represent for each $\GR$-class of $J$ the set of rows of the elements
(representing a row as a set of states)
and for each $\GL$-class its set of columns.
\end{example}

\subsection{Unambiguous automata and birecurrent sets of finite type}

The following result is a generalization of Theorem~\ref{theoremRev}
which gives a sufficient condition for the set
recognized by an unambiguous automaton to be recurrent.
The proof is quite similar.
\begin{theorem}\label{theoremCR}
Let $\A=(Q,I,T)$
 be a strongly connected unambiguous finite automaton.
The automaton $\A^\delta$ is strongly connected
 if and only if
there is a word $x$ of minimal nonzero rank such that
$I\cdot w=I$.
\end{theorem}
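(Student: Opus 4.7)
The plan is to adapt the argument for Theorem~\ref{theoremRev} to the present dual setting, with $\A^\delta$ replacing $\tilde{\A}^\delta$ and the initial states $I$ playing the role of the terminal states $T$. Throughout I set $\varphi=\mu_{\A}$, $M=\varphi(A^*)$, and denote by $J$ the $0$-minimal ideal of $M$ (which is well defined because $\A$ is strongly connected); recall that $\varphi(w)\in J$ precisely when $w$ has minimal nonzero rank.

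For the necessity, I start from any word of minimal nonzero rank and, using strong connectivity of $\A$, prepend a suitable prefix to obtain a word $v'$ of minimal nonzero rank with $I\cdot v'\neq\emptyset$. Then $I\cdot v'$ is a state of $\A^\delta$ and, by strong connectivity of $\A^\delta$, there exists $w$ with $(I\cdot v')\cdot w=I$, i.e.\ $I\cdot v'w=I$. Setting $x=v'w$, the rank of $\varphi(x)$ is at most that of $\varphi(v')$, and it is nonzero since $I\cdot x=I\neq\emptyset$; hence $x$ has minimal nonzero rank and $I\cdot x=I$.

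For the sufficiency, let $u$ be any word with $I\cdot u\neq\emptyset$; the goal is to find $v$ with $I\cdot uv=I$, which will show that the state $I\cdot u$ of $\A^\delta$ reaches the initial state $I$, so that $\A^\delta$ is strongly connected. The key identity is
\begin{equation*}
I\cdot xu=(I\cdot x)\cdot u=I\cdot u\neq\emptyset,
\end{equation*}
which uses only $I\cdot x=I$ and sidesteps the need to prove that $\varphi(ux)$ is nonzero (the tactic used in Theorem~\ref{theoremRev} via the saturation hypothesis). In particular $\varphi(xu)\neq0$, and since $\varphi(xu)\in\varphi(x)M\subseteq J\cup\{0\}$, we get $\varphi(xu)\in J$. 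The right ideal $\varphi(x)M$ is $0$-minimal (because $\varphi(x)\in J$) and contains the nonzero element $\varphi(xu)$, so $\varphi(xu)$ generates the same right ideal. Consequently there exists $v\in A^*$ with $\varphi(xuv)=\varphi(x)$. Since the set $I\cdot y$ depends only on $\varphi(y)$, this gives $I\cdot xuv=I\cdot x=I$; but also $I\cdot xuv=(I\cdot x)\cdot uv=I\cdot uv$, so $I\cdot uv=I$, as required.

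The decisive observation is pre-composing with $x$ rather than post-composing: in Theorem~\ref{theoremRev}, saturation of $T$ guarantees $\varphi(ux)\neq0$, whereas here the fixed-set condition $I\cdot x=I$ immediately yields $\varphi(xu)\neq0$ by associativity of the set action. Once this is in place, the remainder is the same $0$-minimal right ideal manipulation as in Theorem~\ref{theoremRev}, with $\GR$-classes playing the role that $\GL$-classes played there, and no iteration or passage to an idempotent is needed.
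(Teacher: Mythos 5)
Your proof is correct and follows essentially the same route as the paper: the necessity argument (prepend a word to reach $I$, then return to $I$ in $\A^\delta$, noting the rank cannot increase) is identical, and the sufficiency argument rests on the same key fact, namely that $\varphi(x)$ generates a $0$-minimal right ideal, so that some $v$ satisfies $\varphi(xuv)=\varphi(x)$. Your only deviation is a small streamlining at the end: from $\varphi(xuv)=\varphi(x)$ you conclude $I\cdot uv=(I\cdot x)\cdot uv=I\cdot xuv=I\cdot x=I$ directly, whereas the paper passes through the idempotent power $e=\mu(uv)^n$ to produce an explicit word $ux$ fixing $I$ — a harmless difference, since both arguments only need \emph{some} word sending $I\cdot u$ back to $I$.
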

\begin{proof}
Assume first that $\A^\delta$ is strongly connected. Let
$x$ be a word of minimal nonzero rank. Since $x$ has nonzero rank
there exist $p,q$ such that $p\edge{x}q$. Since $\A$ is strongly
connected, there is a word $u$ such that $i\edge{u}p$ for some $i\in I$.
Then $y=ux$ is a word of minimal nonzero rank such that $I\cdot y\ne\emptyset$.
Thus $I\cdot y$ is a state of $\A^\delta$.
Since $\A^\delta$ is strongly connected, there is a word $z$ such that
$I\cdot yz=I$. Since $yz$ has minimal nonzero rank, the conclusion follows.

Conversely, assume that $I=I\cdot w$ with $w$ of minimal nonzero rank. 
Consider a state $U$ of $\A^\delta$. Then $U=I\cdot u$
for some
$u\in A^*$  such that $I\cdot u\ne\emptyset$. Then $I\cdot wu=I\cdot u$
and thus $wu$ is a word of minimal nonzero rank. Set $\mu=\mu_\A$.
Since the right ideal
generated by $\mu(w)$ is $0$-minimal, there is a word $v\in A^*$ such that
$\mu(wuv)=\mu(w)$. Let $e=\mu(uv)^n$ with $n\ge 1$ be the idempotent
which is a power of $\mu(uv)$. Since $\mu(wuv)=\mu(w)$,
we have $\mu(w)e=\mu(w)$. Then $\u(I)e=\u(I\cdot w)e=\u(I)\mu(w)e=\u(I)\mu(w)=\u(I)$.
Set $x=v(uv)^{n-1}$. Then $\u(I)\mu(ux)=\u(I)e=\u(I)$ and thus $I\cdot ux=I$.
This shows that $I$ and $I\cdot u$ belong to the same strongly
connected component and thus that $\A^\delta$ is strongly connected.
\end{proof}
A symmetric result holds for the automaton $\tilde{\A}^\delta$, which is strongly
conected if and only if there is a word $x$ of minimal nonzero rank
such that $x\cdot T=T$.

Theorem~\ref{theoremRev} follows easily from the symmetric version
of Theorem~\ref{theoremCR}. In fact, assume that
$\A$ is deterministic. A set $U$ is saturated by a word of minimal nonzero rank
if and only if there is a word $x$ of minimal nonzero rank such that $x\cdot U=U$.
Indeed, the condition is sufficient. Conversely, if $U=x\cdot V$ for some
$V\subset Q$ and some $x$ of minimal nonzero rank, let $y$ be a word
such that $\varphi_\A(y)$ is a nonzero idempotent in the right
ideal generated by $\varphi_\A(x)$. Then $\varphi_\A(yx)=\varphi_\A(x)$
and thus $y\cdot U=yx\cdot V=x\cdot V=U$. 

\begin{corollary}\label{corollaryCR}
Let $\A=(Q,I,T)$ be a strongly connected unambiguous finite automaton recognizing a set $S$.
If there are words $x,y$ of $0$-minimal rank such that $I\cdot x=I$ and $y\cdot T=T$, then $S$
is birecurrent.
\end{corollary}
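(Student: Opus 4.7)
The plan is to deduce the corollary as a direct application of Theorem~\ref{theoremCR} and its symmetric counterpart noted immediately after its proof. Recall that a set is recurrent if and only if it is recognized by a strongly connected deterministic automaton, so it suffices to exhibit, for both $S$ and $\tilde{S}$, such an automaton.

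First I would apply Theorem~\ref{theoremCR} to $\A$ itself. Since $\A$ is strongly connected and unambiguous, and since $x$ is a word of minimal nonzero rank with $I\cdot x=I$, the hypothesis of the theorem is met, and hence $\A^\delta$ is strongly connected. Now $\A^\delta$ is a deterministic automaton recognizing $S$, so $S$ is recurrent.

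Next I would invoke the symmetric version of Theorem~\ref{theoremCR} stated just after its proof: $\tilde{\A}^\delta$ is strongly connected if and only if there exists a word of minimal nonzero rank $y$ with $y\cdot T=T$. Such a $y$ is supplied by hypothesis, so $\tilde{\A}^\delta$ is strongly connected. Since $\tilde{\A}^\delta$ is deterministic and recognizes $\tilde{S}$, this shows that $\tilde{S}$ is recurrent. Combining this with the previous paragraph, $S$ is birecurrent.

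There is no real obstacle here; the corollary is just the combination of the two directions of Theorem~\ref{theoremCR}, one applied to $\A$ and one applied to $\tilde{\A}$ (noting that the notion of minimal nonzero rank is the same in $\A$ and $\tilde{\A}$, since transposition of matrices preserves rank and $\mu_{\tilde{\A}}(w)$ is the transpose of $\mu_\A(\tilde{w})$). The only point worth double checking is that the determinizations $\A^\delta$ and $\tilde{\A}^\delta$ produced by this argument are genuinely deterministic automata recognizing $S$ and $\tilde{S}$ respectively, which holds because $\A$ is unambiguous and trim.
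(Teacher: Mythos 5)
Your proof is correct and follows essentially the same route as the paper's: apply Theorem~\ref{theoremCR} to conclude that $\A^\delta$ is strongly connected, and the symmetric statement noted after its proof to conclude the same for $\tilde{\A}^\delta$, whence $S$ and $\tilde{S}$ are both recurrent. The extra details you supply (that these determinizations are deterministic automata recognizing $S$ and $\tilde{S}$, and that rank is preserved under reversal) are correct and merely make explicit what the paper leaves implicit.
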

\begin{proof}
By Theorem \ref{theoremCR}, since $I\cdot x=I$, the automaton $\A^\delta$ is strongly connected and conversely.
Symmetrically, since $y\cdot T=T$, the automaton
$\tilde{\A}^\delta$ is strongly connected and conversely. 
\end{proof}

 Iterating the construction of Section~\ref{sectionConstruction},
one obtains examples of birecurrent sets defined by unambiguous
automata as in
Corollary~\ref{corollaryCR}.
The iteration relies on the
 following result  from~\cite{Vincent1985} (see~\cite[Exercise 14.1.9]{BerstelPerrinReutenauer2009}) where we use again the notation
$\delta_w$ and $\gamma_w$ introduced after Proposition~\ref{propDelta}.
\begin{proposition}\label{propositionVincent}
Let $Z$ be a finite maximal prefix code and let $w^2$ be a pure
square for $Z$. Then $w^4$ is a pure square for $X=\delta_w(Z)$
and for $Y=\gamma_w(Z)$.
The sets $G'=X(w^2)^{-1}$ and $D'=(w^2)^{-1}X$ satisfy
\begin{displaymath}
\u(G')-1=(1+w)(\u(G)-1),\quad \u(D')-1=(1+w)(\u(D)-1)
\end{displaymath}
and the sets $G''=Y(w^2)^{-1}$ and $D''=(w^2)^{-1}Y$ satisfy
\begin{displaymath}
\u(G'')-1=(\u(G)-1)(1+w),\quad \u(D'')-1=(\u(D)-1)(1+w).
\end{displaymath}
\end{proposition}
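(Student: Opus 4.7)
The plan is to prove both assertions by direct computation from the defining series identities in Proposition~\ref{propDelta}. Several consequences of $w^2$ being a pure square for the prefix code $Z$ will be used throughout: $w\in G\cap D$ (from $w^2\in Z$); $\varepsilon\notin G\cup D$ (since $w\notin Z$, as $w$ is a proper prefix of $w^2\in Z$); and $w^k\notin Z$ for $k\ge 3$ (else $w^k\in wA^*\cap A^*w\cap Z=\{w^2\}$). Two key disjointness lemmas follow: $G\cap wG=\emptyset$ (if $g=wg'$ with $g,g'\in G$, then $gw=wg'w\in Z\cap wA^*\cap A^*w=\{w^2\}$ forces $g'=\varepsilon$), and $D\cap wD=\emptyset$ (if $d=wd'$ with $d,d'\in D$, then $wd=w^2d'\in Z$ has $w^2\in Z$ as a proper prefix, contradicting $Z$ being a prefix code).

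Substituting $\u(Z)=\u(G)w+w\u(D)-w^2+\u(Z^\circ)$, where $Z^\circ=Z\setminus(Gw\cup wD)$, into the defining formula of $\delta_w(Z)$ yields after a short simplification the decomposition
\[ \u(X)=\u(Z^\circ)+w\u(Z^\circ)+\u(G)w\u(D)+w\u(G)w\u(D)-w^3, \]
in which the $-w^3$ exactly cancels the unique contribution $w\cdot w\cdot w\in\u(G)w\u(D)$. Extracting the coefficient of $w^4$: the only nonzero contribution is $w\cdot w\cdot w\cdot w$ in $w\u(G)w\u(D)$, so $w^4\in X$.

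Next I would compute $G'=X(w^2)^{-1}$ by identifying, term by term, the summands of the decomposition that contribute words ending in $w^2$. The terms $\u(Z^\circ)$ and $w\u(Z^\circ)$ contribute nothing, since no word of $Z^\circ$ ends in $w$, and for $wz$ with $z\in Z^\circ$ either $z$ would need to end in $w$ or $|wz|<|w^2|$. For $g\in G\setminus\{w\}$ and any decomposition $g_1wd_1=gw^2$ with $g_1\in G,d_1\in D$: the prefix code property of $Z$ forces $gw\in Z$ not to have a proper extension in $Z$, which in turn (after a period argument if $|g_1|>|g|$) forces $g_1=g$ and $d_1=w$, contributing $1$ to the coefficient of $gw^2$ in $\u(G)w\u(D)$. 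A symmetric analysis for $g=wg'\in wG$ gives contribution $1$ in $w\u(G)w\u(D)$. Since $G\cap wG=\emptyset$, these contributions are disjoint; together with the cancellation of the $-w^3$ term (corresponding to the excluded element $g=w$), we obtain $\u(G')=\u(G)+w\u(G)-w$, equivalently $\u(G')-1=(1+w)(\u(G)-1)$. The formula for $D'=(w^2)^{-1}X$ follows by the analogous left-sided manipulation, using $D\cap wD=\emptyset$.

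Finally, to verify the pure square property $X\cap w^2A^*\cap A^*w^2=\{w^4\}$, note that any such element has the form $gw^2$ with $g\in G'=(G\setminus\{w\})\sqcup wG$. If $g\in G\setminus\{w\}$ and $gw^2\in w^2A^*$, a case analysis on $|g|$ (including the subcase $|g|<|w|$, where matching the first $2|w|$ letters forces $w$ to have period $|g|$) shows $gw\in wA^*\cap A^*w\cap Z=\{w^2\}$, contradicting $g\ne w$. If $g=wg'\in wG$ and $wg'w^2\in w^2A^*$, then $g'w$ starts with $w$, so $g'w=w^2$, giving $g'=w$ and $gw^2=w^4$. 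The corresponding statements for $Y=\gamma_w(Z)$, $G''$ and $D''$ follow by entirely symmetric arguments from the identity $\u(Y)-1=(\u(Z)-1+(\u(G)-1)w(\u(D)-1))(1+w)$, leading to $G''=(G\setminus\{w\})\sqcup Gw$ and $D''=(D\setminus\{w\})\sqcup Dw$. The main obstacle throughout is the case analysis involving short words of length less than $|w|$, where periodicity of $w$ could a priori produce spurious overlaps; however, each such case reduces, via the pure square property of $(Z,w)$, to a contradiction.
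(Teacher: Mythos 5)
The paper does not prove Proposition~\ref{propositionVincent}: it is imported from \cite{Vincent1985} (see \cite[Exercise 14.1.9]{BerstelPerrinReutenauer2009}), so there is no internal proof to compare yours against. Judged on its own merits, your argument is correct. The decomposition $\u(X)=\u(Z^\circ)+w\u(Z^\circ)+\u(G)w\u(D)+w\u(G)w\u(D)-w^3$ does follow from $Gw=Z\cap A^*w$, $wD=Z\cap wA^*$ and $Gw\cap wD=\{w^2\}$; your two disjointness lemmas are right; and the term-by-term bookkeeping gives exactly $\u(G')=\u(G)+w\u(G)-w$, etc.

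Two places deserve to be made explicit. First, the ``period argument'' you defer is the real content of the proof and should be written out: to show that the coefficient of $uw^2$ in $\u(G)w\u(D)$ vanishes for $u\notin G$, you must exclude decompositions $g_1wd_1=uw^2$ with $|d_1|<|w|$. Such a decomposition forces $w^2=xwd_1$ with $x$ a nonempty proper suffix of $g_1$, i.e.\ an internal occurrence of $w$ in $w^2$; by the standard primitivity lemma this gives $w=r^k$ with $k\ge 2$, whence $d_1=r^{k-j}$ and $wd_1=r^{2k-j}\in Z\cap wA^*\cap A^*w=\{w^2\}$, forcing $j=0$ --- a contradiction. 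So the gap closes exactly as you predict, but as stated it is an assertion, not a proof. Second, your computation of $G'$ only discusses the coefficient of $gw^2$ for $g\in G\setminus\{w\}$ and $g\in wG$; to conclude $\u(G')=\u(G)+w\u(G)-w$ you must also check that no other word $uw^2$ acquires a nonzero coefficient, which follows from the same suffix analysis (any $g_1wd_1$ or $wg_1wd_1$ ending in $w^2$ forces $d_1=w$, hence $u\in G$ or $u\in wG$). A small simplification: in the pure-square verification for $X$, the subcase $|g|<|w|$ needs no periodicity at all, since $gw$ is a prefix of $gw^2\in w^2A^*$ of length at least $|w|$ and therefore lies in $wA^*\cap A^*w\cap Z=\{w^2\}$ directly.
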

We deduce from Proposition~\ref{propositionVincent} the following result.
\begin{theorem}\label{theoremVincent}
Let $Z$ be a finite maximal bifix code, let $w$ be a pure square
for $Z$ and let $U=\gamma_{w^2}(\delta_w(Z))$. Then
$\{\varepsilon,w^2\}U^*\{\varepsilon,w\}$ is a birecurrent
set of finite type.
\end{theorem}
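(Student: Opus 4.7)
The plan is to produce, in parallel, a prefix-style decomposition $S=V^{*}P$ and a suffix-style decomposition $S=PW^{*}$ of $S=\{\varepsilon,w^{2}\}U^{*}\{\varepsilon,w\}$ with $V$, $W$, and $P$ all finite, and then invoke Propositions~\ref{propositionRecurrent} and~\ref{propositionRoot} (plus their suffix duals, coming from reversal) to conclude that $S$ is birecurrent of finite type.

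Set $X=\delta_{w}(Z)$, $Y=\gamma_{w}(Z)$, $V=\delta_{w^{2}}(X)$, and let $W$ be the finite set defined by the polynomial formula of Proposition~\ref{propDelta} applied to the pair $(Y,w^{2})$; equivalently $W=\widetilde{\delta_{\tilde w^{2}}(\tilde Y)}$, which is well defined because $\tilde Y$ is a finite maximal prefix code and $\tilde w^{4}$ is a pure square for it by Proposition~\ref{propositionVincent}. By Proposition~\ref{propositionVincent}, $w^{4}$ is a pure square for $X$, so Proposition~\ref{propDelta} applied to $X$ makes $V$ a finite maximal prefix code and $U=\gamma_{w^{2}}(X)$ a finite maximal code with the identity $\underline{V}^{*}(1+w^{2})=(1+w^{2})\underline{U}^{*}$. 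This gives immediately $S=V^{*}\{\varepsilon,w,w^{2},w^{3}\}$; using Equation~\eqref{eqPrefDelta} twice (for $V$ over $X$, and for $X$ over $Z$) one checks that $\varepsilon,w,w^{2},w^{3}$ are proper prefixes of $V$.

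The heart of the proof is the new polynomial identity
\[
(\underline{U}-1)(1+w)=(1+w)(\underline{W}-1),
\]
which, exactly as in the proof of Theorem~\ref{theoremDP}, gives $\underline{U}^{*}(1+w)=(1+w)\underline{W}^{*}$ and therefore $S=\{\varepsilon,w,w^{2},w^{3}\}W^{*}$. To establish it, I expand $\underline{U}-1=(\underline{X}-1+(\underline{G'}-1)w^{2}(\underline{D'}-1))(1+w^{2})$ as in Proposition~\ref{propDelta}, multiply on the right by $(1+w)$, and use the commutations $(1+w^{2})(1+w)=(1+w)(1+w^{2})$ and $w^{2}(1+w)=(1+w)w^{2}$. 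Then I apply three ingredients: the first-level identity $(\underline{X}-1)(1+w)=(1+w)(\underline{Y}-1)$ from the proof of Theorem~\ref{theoremDP}; the pair $\underline{G'}-1=(1+w)(\underline{G}-1)$, $\underline{D'}-1=(1+w)(\underline{D}-1)$ from Proposition~\ref{propositionVincent}; and the dual pair $\underline{G''}-1=(\underline{G}-1)(1+w)$, $\underline{D''}-1=(\underline{D}-1)(1+w)$ for $Y$. Pulling a single $(1+w)$ out on the left then leaves inside the brackets precisely $(\underline{Y}-1+(\underline{G''}-1)w^{2}(\underline{D''}-1))(1+w^{2})$, which is the polynomial defining $\underline{W}-1$.

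It remains to observe that $W$ is a finite maximal suffix code admitting $\{\varepsilon,w,w^{2},w^{3}\}$ as proper suffixes: its reversal $\delta_{\tilde w^{2}}(\tilde Y)$ is a finite maximal prefix code by Proposition~\ref{propDelta}, and $\{\varepsilon,\tilde w,\tilde w^{2},\tilde w^{3}\}$ are proper prefixes of it by Equation~\eqref{eqPrefDelta} applied twice (once to $\tilde Y$ over $\tilde Z$, once to $\delta_{\tilde w^{2}}(\tilde Y)$ over $\tilde Y$), so reversing gives the claim. The main obstacle is the polynomial identity above: in hindsight it is forced, but the computation requires keeping simultaneous track of the $G',G''$ and $D',D''$ relations of Proposition~\ref{propositionVincent} and recognising that, after all the commutations, the trailing $(1+w^{2})$ reassembles exactly into $\underline{\gamma_{w^{2}}(Y)}-1$ rather than into some spurious polynomial.
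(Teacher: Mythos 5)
Your proof is correct and follows essentially the same route as the paper: both identify the left root as $\delta_{w^2}(X)$ and the right root as $\gamma_{w^2}(Y)$ and derive the two conjugation identities from Proposition~\ref{propositionVincent} together with $(\u(X)-1)(1+w)=(1+w)(\u(Y)-1)$. The only differences are cosmetic — you conjugate $U$ directly to $W$ by $(1+w)$ where the paper conjugates $L$ to $R$ by $(1+w)(1+w^2)$, and you add the (welcome but routine) check that $\varepsilon,w,w^2,w^3$ are proper prefixes of $V$ and proper suffixes of $W$.
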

\begin{proof}
Set $X=\delta_w(Z)$, $Y=\gamma_w(Z)$, $L=\delta_{w^2}(X)$
and $R=\gamma_{w^2}(Y)$. Since $Z$ is a finite maximal prefix code,
$X$ and $L$ are finite maximal prefix codes. Symmetrically,
since $Z$ is a finite maximal suffix code, $Y$ and $R$ are
finite maximal suffix codes.

Let $G,D,G',D'$ be as in Proposition~\ref{propositionVincent}.
We have by Equation~\eqref{eqDelta2}
\begin{displaymath}
\u(U)-1=(\u(X)-1+(\u(G')-1)w^2(\u(D')-1))(1+w^2)
\end{displaymath}
and by Equation~\eqref{eqDelta}
\begin{displaymath}
\u(L)-1=(1+w^2)(\u(X)-1+(\u(G')-1)w^2(\u(D')-1)).
\end{displaymath}
Thus
\begin{eqnarray*}
(1+w^2)(\u(U)-1)=(\u(L)-1)(1+w^2).
\end{eqnarray*}
Therefore $(1+w^2)\u(U)^*(1+w)=\u(L)^*(1+w^2)(1+w)$.
This shows that $V=\{\varepsilon,w^2\}U^*\{\varepsilon,w\}$ is recurrent with a finite left root.
Similarly
\begin{displaymath}
\u(R)-1=(\u(Y)-1+(\u(G'')-1)w^2(\u(D'')-1))(1+w^2).
\end{displaymath}
Since $(\u(X)-1)(1+w)=(1+w)(\u(Y)-1)$ and $(\u(G')-1)w^2(\u(D')-1)(1+w)=(1+w)(\u(G'')-1)w^2(\u(D'')-1)$,
we have
\begin{displaymath}
(1+w)(\u(R)-1)=(\u(X)-1+(\u(G')-1)w^2(\u(D')-1))(1+w)(1+w^2)
\end{displaymath}
showing that 
\begin{displaymath}
(1+w^2)(1+w)(\u(R)-1)=(\u(L)-1)(1+w)(1+w^2).
\end{displaymath}
This implies that $V$ is birecurrent and that its right root is finite.
\end{proof}
\begin{example}
Let $\A=(Q,I,T)$ be the nondeterministic automaton with transitions given in Table~\ref{tableTransVincent} with $I=\{1,14\}$ and $T=\{1,4\}$.
\begin{table}[hbt]
\begin{displaymath}
\begin{array}{c|c|c|c|c|c|c|c|c|c|c|c|c|c|c|c|c|c|}\cline{2-18}
   &1   &2   &3   &4  &5   &6 &7&8&9   &10  &11  &12&13&14&15&16&17\\\cline{2-18}
a  &2   &6,10&4   &2  &6   &  &8& &10  &1,14&12  &  &14&  &16&  &1\\\cline{2-18}
b  &3   &1,14&7,11&5  &7,11&7 & &9&    &11  &1,14&13&  &15&  &17&\\\cline{2-18}
a^2&6,10&1,14&2   &6,10&   &  & & &1,14&2   &    &  &  &  &  &  &2\\\cline{2-18}
\end{array}
\end{displaymath}
\caption{The transitions of the automaton $\A$.}\label{tableTransVincent}
\end{table}
One may verify that this automaton is unambiguous. The word $a^2$ has rank $3$
since the matrix $\mu_\A(a^2)$ has $3$ distinct nonzero rows which are
the characteristic vectors of $\{1,14\}$, $\{6,10\}$ and $\{2\}$. 

This
rank is minimal as one can check by computing the $9$ images of this $3$-element
set by the action of the letters. Since $\u(I)$ is a row of $\mu(a^2)$, the hypothesis
of Theorem~\ref{theoremCR} are satisfied. The automaton $\A^\delta$ is strongly
connected and has also $17$ states represented in Table~\ref{tableStatesLambda}.
\def\mbf{\mathbf}
\begin{table}[hbt]
\begin{displaymath}
\begin{array}{cc}
\begin{array}{|c|c|c|c|c|c|c|c|c|c|c|c|}
\mbf{1}&\mbf{2}&\mbf{3}&\mbf{4}&\mbf{5}&\mbf{6}&\mbf{7}&\mbf{8}&\mbf{9}&\mbf{10}&\mbf{11}&\mbf{12}\\\hline
1,14   &2      &3,15   &6,10   &4,16   &7,11   &5,17   &8,12   &1,6    &9,13    &3,7     &10,14   
\end{array}
\\
\qquad \quad\quad
\begin{array}{|c|c|c|c|c|c|}
\mbf{13}&\mbf{14}&\mbf{15}&\mbf{16}&\mbf{17}\\\hline
4,8     &11,15   &5,9     &12,16   &13,17
\end{array}
\end{array}
\end{displaymath}
\caption{The states of the automaton $\A^\delta$.}\label{tableStatesLambda}
\end{table}
The transitions of $\A^\delta$ are represented in Table~\ref{tableTransLambda}.
\begin{table}[hbt]
\begin{displaymath}
\begin{array}{c|c|c|c|c|c|c|c|c|c|c|c|c|c|c|c|c|c|}\cline{2-18}
   &\mbf{1}&\mbf{2}&\mbf{3}&\mbf{4}&\mbf{5}&\mbf{6}&\mbf{7}&\mbf{8} &\mbf{9} &\mbf{10}&\mbf{11}&\mbf{12}&\mbf{13}&\mbf{14}&\mbf{15}&\mbf{16}&\mbf{17}\\\cline{2-18}
a  &\mbf{2}&\mbf{4}&\mbf{5}&\mbf{1}&\mbf{2}&\mbf{8}&\mbf{9}&\mbf{1} &\mbf{2} &\mbf{12}&\mbf{13}&\mbf{1} &\mbf{2} &\mbf{1} &\mbf{4} &\mbf{1} &\mbf{1}\\\cline{2-18}
b  &\mbf{3}&\mbf{1}&\mbf{6}&\mbf{6}&\mbf{7}&\mbf{1}&\mbf{6}&\mbf{10}&\mbf{11}&\mbf{1} &\mbf{6} &\mbf{14}&\mbf{15}&\mbf{6} &\mbf{6} &\mbf{17}&\mbf{1}\\\cline{2-18}
\end{array}
\end{displaymath}
\caption{The transitions of the automaton $\A^\delta$.}\label{tableTransLambda}
\end{table}
The set $T=\{1,4\}$ is a column of $\mu_\A(a^2)$ (the columns of index $6$ and $10$).
Thus, by the symmetric of Theorem \ref{theoremCR}, the automaton $\tilde{\A}^\delta$ is also
strongly connected. Thus the set $S$ recognized by $\A$ is birecurrent.

The set $S$ is an instance of Theorem~\ref{theoremVincent}.
To see this, we start with the finite maximal bifix code $\tilde{Z}$ represented in Figure~\ref{fig3_03}
on the right. We choose the word $x=(ba)^2$ which is a pure square for $\tilde{Z}$.
Then $\tilde{Y}=\delta_w(\tilde{Z})$ is the maximal prefix code represented in Figure~\ref{figureRightRoot}.
The word $w^2=(ba)^4$ is a pure square for $\tilde{Y}$ and the set $X=\gamma_{w^2}$ is
a maximal code generating the set recognized by the automaton $\A$ with $1$
as initial and terminal state.
\end{example}

\bibliography{birecurrent}
\bibliographystyle{plain}
\end{document}